\newtheorem{thm}{Theorem}
\newtheorem{lem}{Lemma}
\newtheorem{cor}{Corollary}
\newtheorem{Open problem}{Open Problem}
\newtheorem{prop}{Proposition}
\newtheorem{defn}{Definition}
\newtheorem{example}{Example}
\newtheorem{remark}{Remark}
\newcommand{\cn}{\color{black}}
\newcommand{\s}{\textbf{s}}
\newcommand{\sn}{\textbf{s}_n}
\newcommand{\an}{\textbf{a}_n}
\newcommand{\sni}{\textbf{s}_n^{\infty}}
\newcommand{\n}{\lfloor \frac{n}{2} \rfloor}
\newcommand{\nup}{\lceil \frac{n}{2} \rceil}
\newcommand{\nsmall}{\lceil\log_{2}(n)\rceil}
\newcommand{\calF}{\mathcal{F}}
\newcommand{\obr}[2]{\overbrace{#1 \dots #1}^{#2}}
\begin{document}
	\title{
		Further Investigations on Nonlinear Complexity of Periodic Binary Sequences
	}
	
	\author{
		Qin Yuan
		\thanks{Q. Yuan and X. Zeng are with Faculty of Mathematics and Statistics, Hubei Key Laboratory of
			Applied Mathematics, Hubei University, Wuhan 430062, Hubei, China. Email:
			yuanqin2020@aliyun.com, xzeng@hubu.edu.cn},
		Chunlei Li
		\thanks{C. Li and T. Helleseth are with the Department of Informatics, University of Bergen, Bergen, N-5020, Norway.
			Email: chunlei.li@uib.no, tor.helleseth@uib.no},
		Xiangyong Zeng,	
		Tor Helleseth,
		and
		Debiao He
		\thanks{D. He is with Key Laboratory of Aerospace Information Security and Trusted Computing Ministry of Education, School of Cyber Science and Engineering, Wuhan University, Wuhan 430072, Hubei, China. Email: hedebiao@whu.edu.cn}	
	}

	\date{}
	\maketitle
	
	\begin{abstract}	
		Nonlinear complexity is an important measure for assessing the randomness of sequences.
		In this paper we investigate how circular shifts affect the nonlinear complexities of finite-length binary sequences and
		then reveal a more explicit relation between nonlinear complexities of finite-length binary sequences and their corresponding periodic sequences.
		Based on the relation, 		
		we propose two algorithms that can generate all periodic binary sequences with any prescribed nonlinear complexity.
		
		{\small {\bf Index Terms:}} Periodic sequence, nonlinear complexity, randomness
	\end{abstract}
	
	\section{Introduction}
	Pseudorandom sequences have applications in various digital systems and communication technologies, such as radio communications, distance ranging, simulation, game theory, and cryptography \cite{Golomb2017}.
	The quality of randomness is a crucial factor for pseudorandom sequences in many of these applications, particularly for cryptographic applications.
	For assessing the randomness of pseudorandom sequences, different complexity measures were proposed in the literature 	 \cite{MartinLoef1966,Wang1988,Beth,Niederreiter1999}
	and the most well-understood one is probably the linear complexity. The linear complexity of a sequence is defined as the length of the shortest linear feedback shift registers (LFSRs)  that can generate the sequence \cite{MasseyShirlei1996_Crypto,Niederreiter2003}.
	Suppose a sequence $\s$ of length $n$ has linear complexity $l\leq n/2$. Given any of its $2l$-length subsequences, namely, $(s_{i}, s_{i+1}, \dots, s_{i+2l-1})$ for any $i\geq 0$,
	the Berlekamp-Massey algorithm \cite{Massey1969} can efficiently produce the linear recurrence of length $l$ and thereby the
	whole sequence $\s$.
	Hence pseudorandom sequences for cryptographic applications must not have low linear complexity.
	Rueppel \cite{Rueppel1986} conjectured that  $n$-periodic binary sequences have expected linear complexity close to $n$.
	Meidl and Niederreiter \cite{Meidl2002} confirmed this conjecture for arbitrary finite fields; in particular, they showed that $n$-periodic binary sequences have expected linear complexity at least $ \frac{3n-1}{4}$.
	Research has been done on the linear complexity of special sequences, for instance, Lempel-Cohn-Eastman sequences \cite{Helleseth}, Legendre sequences \cite{Ding1998}.
	Meanwhile, variants and extensions of linear complexity, e.g., linear complexity profile \cite{Niederreiter2003}, $k$-error linear complexity \cite{Meidl2002,Niederreiter2003}, quadratic span \cite{Rizo} and nonlinear complexity \cite{Jansen}
	have been studied.  Interested readers may refer to \cite{LWin,Chenzhix} for more discussions on these complexity measures.
	
	As an additional figure of merit to judge the randomness of a sequence, nonlinear complexity (also referred to as maximum-order complexity) was introduced by Jansen and Boekee in 1989 \cite{Jansen, JansenPhD}, where they
	defined it as the length of the shortest FSRs that generate a given sequence. Their work showed that the expected nonlinear complexity of random $n$-length $q$-ary sequences is approximately $2\log_q(n)$.
	Significant progress has been made on the derivation of the shortest feedback functions for a given sequence and the construction of sequences with high nonlinear complexity.
	Jansen and Boekee  \cite{Jansen}  initially related nonlinear complexity to the maximum depth of a directed
	acyclic word graph, which can be employed to determine the nonlinear complexity (profile) of a given sequence. Rizomiliotis and Kalouptsidis \cite{Rizomiliotis2005} in 2005 proposed an efficient algorithm, which exploited the special structure of associated linear equations, for finding the shortest feedback functions for a given sequence. Later Limniotis et al. \cite{LKK2} studied the relation between nonlinear complexity and Lempel-Ziv complexity, thereby presenting a recursive algorithm
	which has a similar procedure to the Berlekamp-Massey algorithm.
	As for the construction of sequences with high nonlinear complexity,
	researchers mainly exploited some algebraic tools and explored the structure of those constructed sequences \cite{PR,HX,LXY,Castellanos2022,PZS,YILIN,SZLH,liang,Xiao2018}.
	Niederreiter et al. \cite{HX}, Luo et al. \cite{LXY} and Castellanos et al. \cite{Castellanos2022} constructed sequences with high nonlinear complexity from function fields with many rational places.
	Based on detailed investigations of internal structures of sequences,
	all $q$-ary sequences of length $n$ and high nonlinear complexities $n-i, \, i=1,2,3,4$, were completely characterized in \cite{PZS, YILIN}.
	Very recently Liang et al. \cite{liang} completely determined $n$-length binary sequences with nonlinear complexity $\geq \frac{n}{2}$.

	Let $\sni$ be an $n$-periodic sequence over certain alphabet $\mathcal{A}$.
	Its nonlinear complexity is closely related to nonlinear complexities of its $n$-length subsequences, namely,
	$nlc(\sni) =nlc(\an^2)\geq nlc(\an)$, where $nlc$ denotes the nonlinear complexity of a given sequence, $\an$ is any $n$-length subsequence of $\sni$ and $\an^2=\an\an$ denotes the concatenation of two identical $\an$.
	As $\an$ can be chosen as any $n$-length subsequence of $\sni$, this inequality is relatively vague and does not reveal further insights into the relation between $nlc(\sni)$ and $nlc(\an)$.
	Only a few results have been reported on nonlinear complexity of periodic sequences so far.
	Rizomiliotis \cite{PR} exploited the power series representation of binary sequences and proposed two constructions of binary sequences of period $2^m-1$ with maximum nonlinear complexity for
	given linear complexity.
	In 2017 Sun et al. completely  characterized the structure of periodic sequences $\sni$ of maximum nonlinear complexity $n-1$ and proposed a recursive algorithm to generate all such sequences \cite{SZLH}. Later Xiao et al.
	\cite{Xiao2018} determined all binary sequences of period $n$ having nonlinear complexity $n-2$ in a similar way.
	Motivated by recent work 
	on nonlinear complexities of binary  $n$-length sequences \cite{liang},
	this paper aims to reveal more explicit relations between $nlc(\sni)$ and $nlc(\an)$, where $\an$ is a certain cyclic shift of  $\sn$.
	To this end, we are concerned with a particular set $\mathcal{B}(n,c)$ (in Def. \ref{BB}) of sequences with a specific structure.
	We start with investigating how circular shift operators affect 
	the nonlinear complexity of binary sequences $\sn$ in $\mathcal{B}(n,c)$.
	Then we study the relation between the companion pairs of $\sn$ and the companion pairs of $\sni$,
	which enables us to establish a one-to-one correspondence between periodic sequences $\sni$ with given nonlinear complexities and
	certain sequences $\sn$ in $\mathcal{B}(n,c)$ (see Thm. \ref{relation} and Thm. \ref{c+t}).
	Based on the correspondence, all periodic binary sequences $\sni$ with any prescribed nonlinear complexity $\omega$ can be completely generated (in Thm. \ref{thm_core}). The generation process is summarized in two algorithms (Alg.\ref{alg:period-small} and Alg.\ref{alg:period-large}), depending on the relation between $\omega=nlc(\sni)$ and $\frac{n}{2}$. 	
	To the best of our knowledge,  this is the first report of theoretical results on the subject
	that can be used to efficiently generate all periodic binary sequences with any prescribed nonlinear complexity.

	The remainder of this paper is organized as follows. 	
	In Section \ref{Sec2}, we introduce some basics of
 nonlinear complexity. 
	Section \ref{Sec2.2}  presents
	notations, definitions, and auxiliary results for the nonlinear complexity of  $n$-length binary sequences.
	Section \ref{Sec3} is dedicated to revealing more explicit
	relations between $nlc(\sni)$ and $nlc(\an)$, where $\an$ is a certain cyclic shift of $\sn$.
	In Section \ref{Sec4}, we propose two algorithms to generate periodic binary sequences with any prescribed nonlinear complexity. Finally,
	Section \ref{Sec6} concludes the work of this paper.

	\section{Preliminaries}\label{Sec2}
	 In Section \ref{Sec2} and Section \ref{Sec2.2}, we will introduce basic notations, definitions and auxiliary lemmas. For readers' convenience we summarise important notations in Table 1 at the end of Section \ref{Sec2.2}.

	For a positive integer $m$, an $m$-stage \textit{feedback shift register} (FSR) is a clock-controlled circuit consisting of $m$ consecutive storage units and a feedback function $f$ as displayed in Figure \ref{Fig1}.
	Starting with an initial state $\underline{\s}_0=(s_{0}, s_{1},\dots,s_{m-1})$, the states in the FSR will be updated by a clock-controlled transformation as follows:
	\begin{equation*}
		\calF: \underline{\s}_i = (s_{i},s_{i+1},\dots,s_{i+m-1}) \longmapsto   \underline{\s}_{i+1}=(s_{i+1},\dots,s_{i+m-1},s_{i+m}), \, i \geq 0,
	\end{equation*} where $s_{i+m}=f(s_{i},s_{i+1},\dots,s_{i+m-1})$, and the leftmost symbol  for each state $\underline{\s}_i$ will be output. In this way an FSR produces a sequence $\textbf{s}=(s_{0}, s_{1}, s_2,\dots)$ based on each initial state $\underline{\s}_0$ and its  feedback function $f$. The shift register sequence can be equivalently expressed as a sequence of states, $(\underline{\s}_0,\underline{\s}_1,\underline{\s}_2, \dots)$,  with the relation $\underline{\s}_i = \calF(\underline{\s}_{i-1})= \cdots = \calF^i(\underline{\s}_0)$ for $i\geq 0$.
	When $\underline{\s}_p =\calF^p(\underline{\s}_0)= \underline{\s}_0$ for the least integer $p\geq 1$, we obtain a cycle of states $\underline{\s}_0,\dots \underline{\s}_{p-1}$, or equivalently a sequence $(s_0,\dots, s_{p-1}, \dots)$ of period $p$.
	
	\begin{figure}[!h]
		\begin{center}
			\setlength{\unitlength}{1mm}
			\begin{picture}(110,35)
				\put(15,25){\framebox(12.3,7)[c]{$s_i$}}\put(33.1,25){\framebox(12.3,7)[c]{$s_{i+1}$}}
				\put(63.8,25){\framebox(12.3,7)[c]{$s_{i+m-2}$}}\put(81.8,25){\framebox(12.3,7)[c]{$s_{i+m-1}$}}
				\put(51,25){\makebox(7,7)[c]{$\cdots$}}
				\put(33.1,28.5){\vector(-1,0){5.7}}\put(51.1,28.5){\vector(-1,0){5.7}}
				\put(63.8,28.5){\vector(-1,0){5.7}}\put(81.8,28.5){\vector(-1,0){5.7}}
				\put(15,28.5){\vector(-1,0){15}}
				\put(16,8){\framebox(78,6.8)[c]{feedback function $f$}} 
		\put(20.5,24.8){\vector(0,-1){10}}\put(38.5,24.8){\vector(0,-1){10}}
		\put(69.5,24.8){\vector(0,-1){10}}\put(87.5,24.8){\vector(0,-1){10}}
		\put(94,11.5){\line(1,0){10}}\put(104,11.5){\line(0,1){17}}\put(104,28.5){\vector(-1,0){10}}
	\end{picture}
\end{center}
\vspace{-1.2cm}\caption{An $m$-stage FSR with feedback function $f$}
\label{Fig1}
\end{figure}
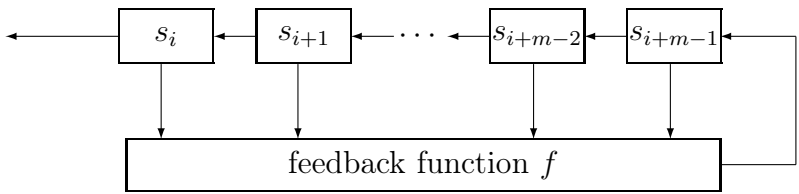

In his influential book \cite{Golomb2017}, Golomb intensively studied the relation between
the feedback function of an FSR and its output sequences/cycles.
He showed that an FSR generates disjoint cycles if and only if it uses nonsingular feedback functions of the form $f(x_0,x_1,\dots, x_{m-1}) = x_0 + g(x_1, \dots, x_{m-1})$.
Nonsingular FSRs are of practical interest as their output sequences have simpler structure and often exhibit desirable randomness properties \cite{Golomb2017}. For an $m$-stage nonsingular binary FSR,
when its feedback function $f$ is linear, the output sequence can have the longest period $2^m-1$, which contains
all nonzero $m$-tuples exactly once and is	known as a \textit{maximum-length} sequence (\textit{m}-sequence for short);  when $f$ is nonlinear, its output sequence can have the longest period $2^m$, containing all binary $m$-tuples exactly once, and is known as a \textit{binary de Bruijn sequence of order $m$} \cite{deBruijn}.
Both \textit{m}-sequences and de Bruijn sequences exhibit the \textit{span property} \cite{Golomb_1980_, Kim_2023_}. They are of significant interest in research and applications \cite{Fredricksen_1982_, Helleseth-Li}.
While the theory of \textit{m}-sequences is well explored, many problems about de Bruijn sequences remain unsolved, for instance,
the necessary or sufficient properties of feedback functions for generating de Bruijn sequences, more explicit combinatorial structure of de Bruijn sequences and efficient generations of all de Bruijn sequences of modest lengths, etc.

\subsection{Nonlinear complexity}

Linear complexity profile has been an important measure of randomness of sequences for cryptography \cite{Rueppel1986}.
As an additional figure of merit to judge the randomness of sequences, Jansen and Boekee proposed the
maximum-order complexity, later known as nonlinear complexity, of sequences \cite{Jansen, JansenPhD}.

\begin{defn} (\cite{JansenPhD}) \label{nlc}
The nonlinear complexity of a sequence $\textbf{s}$ over an alphabet $\mathcal{A}$, denoted by $nlc(\textbf{s})$, is the length of the shortest feedback shift registers that can generate the sequence $\textbf{s}$. \cn
\end{defn}

For a sequence $\s=(s_0,s_1,\dots)$ over $\mathcal{A}$, the term $s_{i+k}$ is deemed as
the \textit{successor} of the subsequence $\s_{[i:i+k]}=(s_i, \dots, s_{i+k-1})$ for certain
positive integers $i$ and $k$.
Some properties of the nonlinear complexity of sequences are recalled below.

\begin{lem}(\cite{JansenPhD})\label{lem}
The nonlinear complexity of a sequence $\textbf{s}$ equals one plus the length of its longest identical subsequences that occur at least twice with different successors.
\end{lem}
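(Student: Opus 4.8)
The plan is to translate the definition of an FSR into a consistency condition on repeated windows and then compare that condition with the combinatorial quantity in the statement. Recall that an $m$-stage FSR with feedback function $f$ generates $\s$ exactly when $s_{i+m}=f(s_i,\dots,s_{i+m-1})$ for all admissible $i$; since $f$ must be single-valued, such an $f$ exists if and only if any two occurrences of the same length-$m$ window $\s_{[i:i+m]}$ are followed by the same symbol. The pivotal observation is therefore: \emph{$\s$ is generated by some $m$-stage FSR if and only if no length-$m$ subsequence occurs (at least) twice with two different successors.} Writing $L$ for the length of the longest identical subsequences that occur at least twice with distinct successors, I would prove $nlc(\s)=L+1$ by establishing the two inequalities separately.

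For the lower bound $nlc(\s)\ge L+1$, I would show by contradiction that no $m$-stage FSR with $m\le L$ can produce $\s$. Fix positions $i\neq j$ realising the extremal length $L$, so that $\s_{[i:i+L]}=\s_{[j:j+L]}$ while $s_{i+L}\neq s_{j+L}$. Passing to the length-$m$ suffixes $\s_{[i+L-m:i+L]}$ and $\s_{[j+L-m:j+L]}$ of these two identical windows — which coincide because $m\le L$ — yields two equal length-$m$ windows whose successors are precisely $s_{i+L}$ and $s_{j+L}$. As these successors differ, any feedback function of $m$ stages would be forced to take two values on the same input, which is impossible; hence every FSR generating $\s$ uses at least $L+1$ stages.

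For the upper bound $nlc(\s)\le L+1$, I would construct an explicit single-valued $(L+1)$-stage feedback function. The maximality of $L$ guarantees that no length-$(L+1)$ subsequence occurs twice with distinct successors: if $\s_{[i:i+L+1]}=\s_{[j:j+L+1]}$ but $s_{i+L+1}\neq s_{j+L+1}$, we would exhibit a repeated subsequence of length $L+1>L$ with different successors, contradicting the choice of $L$. Consequently, assigning to each length-$(L+1)$ window that occurs in $\s$ its now-unique successor, and defining $f$ arbitrarily on windows that never occur, produces a well-defined feedback function; the resulting FSR initialised at $\s_{[0:L+1]}$ reproduces $\s$. The two inequalities together give $nlc(\s)=L+1$.

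The main obstacle is keeping the index bookkeeping in the lower bound exact, so that the successor of each length-$m$ suffix window is seen to be genuinely $s_{i+L}$ (respectively $s_{j+L}$) rather than an interior symbol. A secondary point is the degenerate case: interpreting the empty window (length $0$) as a subsequence makes the argument uniform for every non-constant $\s$, since differing symbols at two positions already give a length-$0$ repeat with distinct successors and hence $nlc(\s)\ge 1$; the purely constant sequence carries no such repeat and is handled by the convention $nlc(\s)=0$. Once the equivalence between ``generatable by an $m$-stage FSR'' and ``no repeated length-$m$ window with distinct successors'' is in place, both bounds follow by direct comparison with the extremal length $L$.
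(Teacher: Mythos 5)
Your proof is correct. The paper states this lemma without proof (it is quoted directly from Jansen's thesis \cite{JansenPhD}), and your argument --- the equivalence between ``generatable by an $m$-stage FSR'' and ``no repeated length-$m$ window with distinct successors,'' followed by the two inequalities via suffix-truncation for the lower bound and maximality of $L$ for the upper bound --- is exactly the standard proof of this characterization, with the index bookkeeping and the degenerate constant-sequence case handled properly.
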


The metric of nonlinear complexity is well defined
for both finite-length sequences and infinite-length periodic sequences, which exhibit apparent difference in spite of the close connection.
Below we recall some basics for finite-length sequences and periodic sequences, respectively.

\begin{lem}(\cite{JansenPhD})\label{lem1}
For a finite-length sequence
$(s_0,s_1, \dots, s_{n-1})$ over an alphabet $\mathcal{A}$,

\noindent{\rm (i)} its nonlinear complexity takes values ranging from 0 to $n-1$;

\noindent{\rm (ii)}
if its nonlinear complexity $c \geq \frac{n}{2}$, then
the sequence $(s_{0}, s_1\dots, s_{n-1}, s_n)$ for any new symbol $s_n$ in $\mathcal{A}$ has the same nonlinear complexity $c$;

\noindent{\rm (iii)} if its nonlinear complexity $c\geq \n$, then
it cannot be written as $(s_0,s_1,\dots, s_{k-1})^e$ for any proper divisor $k$ of $n$.

\end{lem}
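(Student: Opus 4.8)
The common engine for all three parts is Lemma \ref{lem}, which I would restate operationally: viewing a factor $s_a\cdots s_{a+\ell-1}$ together with its successor $s_{a+\ell}$, the number $nlc(\mathbf{s})-1$ is the greatest $\ell$ for which some length-$\ell$ factor occurs at two positions with two different successors (a ``right-special'' factor). Equivalently, $nlc(\mathbf{s})\le m$ if and only if no length-$m$ factor of $\mathbf{s}$ occurs twice with different successors; moreover, passing to suffixes shows that a right-special factor of length $\ell>m$ forces its length-$m$ suffix to be right-special as well, so it suffices to test a single length. With this in hand, part (i) becomes a counting statement: $nlc\ge 0$ is immediate from the definition, while for the upper bound I would note that two distinct occurrences $i<j$ of a right-special factor of length $\ell$ both need a successor, forcing $j+\ell\le n-1$ with $j\ge 1$, hence $\ell\le n-2$ and $nlc(\mathbf{s})\le n-1$.

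For part (ii) the plan is to fix $c=nlc(\mathbf{s})\ge n/2$ and observe that $nlc(\mathbf{s}s_n)\ge c$ is automatic, since every right-special factor of $\mathbf{s}$ stays right-special in $\mathbf{s}s_n$; so only the bound $nlc(\mathbf{s}s_n)\le c$ is at stake. By the reformulation I test length $c$: a length-$c$ factor that is right-special in $\mathbf{s}s_n$ but not in $\mathbf{s}$ must use the new symbol $s_n$ as a successor, which means one of its occurrences is the length-$c$ suffix $P=s_{n-c}\cdots s_{n-1}$ and the other lies at an earlier position $a<n-c$ with $s_{a+c}\ne s_n$. As $s_n$ is arbitrary, it therefore suffices to prove that $P$ does \emph{not} reoccur earlier at all. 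I would argue this by contradiction: an earlier occurrence of $P$ at $a$ must overlap the suffix occurrence at $n-c$ (because $c\ge n/2$ forces $a+c\ge n-c$), which endows the tail $[a,n-1]$ with a period $d=(n-c)-a\le c$.

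The heart of the argument is that this tail-periodicity pins down successors. For $c>n/2$ I would check that for every length-$(c-1)$ factor $U$ and every occurrence of it at a position $x$ that still has a successor, the index $x+c-1$ lies in the periodic range while $x+c-1-d$ lies inside the occurrence itself, so the period yields $s_{x+c-1}=U_{c-1-d}$, a value depending only on $U$. Hence all occurrences of $U$ share a single successor, no length-$(c-1)$ factor is right-special, and $nlc(\mathbf{s})\le c-1$, contradicting $nlc(\mathbf{s})=c$; this forces $P$ not to reoccur, completing (ii). I expect the genuine obstacle to sit exactly at the boundary $c=n/2$: there the index bookkeeping can fail for an occurrence starting at position $0$, and when $P$ sits flush (the case $\mathbf{s}=PP$) the tail period becomes a global period. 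I plan to dispatch these residual configurations by propagating the period into the prefix and then invoking part (iii) to exclude the resulting power structure.

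Part (iii) I would prove directly. Assuming $\mathbf{s}=(s_0\cdots s_{k-1})^e$ with $k\mid n$, $k<n$, $e\ge 2$, I first replace the block by its primitive root, so that $Q=s_0\cdots s_{k-1}$ may be taken primitive; this only shrinks $k$ and keeps $k\le n/2$. The sequence then has global period $k$, and I would show it has no long right-special factor: any factor of length $\ge k$ contains a full period and hence, by primitivity, determines its phase modulo $k$, so all its occurrences share a successor; and a right-special factor of length exactly $k-1$ would force two distinct rotations of $Q$ to agree in $k-1$ of their $k$ positions, which by a short cycle argument (one missing equality still connects each orbit of $m\mapsto m+\delta$) makes $Q$ have a proper period, contradicting primitivity. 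Therefore $nlc(\mathbf{s})\le k-1$, and since $k\le n/2$ gives $k-1<\lfloor n/2\rfloor$, this contradicts $nlc(\mathbf{s})\ge\lfloor n/2\rfloor$. The two hard points overall are the combinatorial cores—the boundary case $c=n/2$ in (ii) and the primitive-rotation estimate in (iii)—both of which are instances of Fine--Wilf-type periodicity reasoning.
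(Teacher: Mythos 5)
The paper does not actually prove Lemma~\ref{lem1}: it is quoted verbatim from Jansen's thesis \cite{JansenPhD}, so there is no in-paper argument to compare yours against. Judged on its own, your reduction of everything to right-special factors via Lemma~\ref{lem} is the standard and correct route, and it is consistent with the machinery the paper builds later (your ``tail of period $d=(n-c)-a$'' is exactly the structure \eqref{c+d} underlying $\mathcal{B}(n,c,d)$). Part (i) is fine. Part (iii) is fine: the reduction to a primitive block, the phase argument for factors of length $k$, and the broken-cycle argument showing that two rotations of a primitive word cannot differ in a single position are all correct, and $k-1<\lfloor n/2\rfloor$ closes it. For part (ii), the main case $c>n/2$ is complete: the index check $x+c-1-d\geq a$ reduces to $x\geq n+1-2c\leq 0$, so every occurrence of a length-$(c-1)$ factor $U$ has forced successor $U_{c-1-d}$, whence $nlc(\textbf{s})\leq c-1$, a contradiction.

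The one place you have not actually finished is the sub-case you flag: $c=n/2$ with $P$ recurring at some $a$ with $0<a<n-c$. There your index bookkeeping only fails for an occurrence starting at $x=0$, so the putative right-special factor of length $c-1$ must be the prefix $s_0\cdots s_{c-2}$ with a second occurrence at some $x_2\geq 1$; ``invoking part (iii)'' does not dispatch this, since the resulting period need not divide $n$. It is closable, but by Fine--Wilf rather than by (iii): the second occurrence gives $[0,x_2+c-2]$ period $x_2$, the recurrence of $P$ gives $[c-d,n-1]$ period $d$, the two ranges overlap in $x_2+d-1\geq x_2+d-\gcd(x_2,d)$ positions, so the whole of $\textbf{s}_n$ acquires period $g=\gcd(x_2,d)\leq d\leq c-1$; your own phase argument from part (iii) (applied to the primitive root of $s_0\cdots s_{g-1}$, without needing $g\mid n$) then gives $nlc(\textbf{s}_n)\leq g-1<c$, a contradiction. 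With that paragraph written out, the proof is complete; as submitted, this step is a stated plan rather than an argument.
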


Throughout what follows, we will use $\sn$ to denote a sequence $(s_0, s_1,\dots, s_{n-1})$ that is not a repetition of any shorter subsequence and deem it
as an aperiodic finite-length sequence.
Given a sequence $\textbf{s}_{n}=(s_{0},\dots,s_{n-1})$, the left circular shift operators $L^i(\sn)$ on $\textbf{s}_n$ are defined as $L^0(\sn) = \sn$,
$L(\sn)=(s_1, \dots, s_{n-1}, s_0)$ and
\(L^{i}(\textbf{s}_{n})=L(L^{i-1}(\sn))=(s_{i},s_{i+1},\dots,s_{n-1},s_{0}\dots,s_{i-1}) \) for $i\geq 1$.
The right circular shift operators are defined by $R^{i}(\textbf{s}_{n})=L^{n-i}(\textbf{s}_n)$.
Under circular shift operators, we can derive from $\sn$ a \textit{shift equivalence class} $\{\textbf{s}_n, L(\textbf{s}_n), \dots, L^{n-1}(\textbf{s}_n)\}$.
Notice that the nonlinear complexity of $\s_n$ is usually not an invariant under circular shift operators.
For instance, while the sequence $\textbf{s}_n=(1,0,\dots,0)$ has nonlinear complexity $1$,  its shift sequence $L(\textbf{s}_n) = (0, \dots, 0, 1)$ has maximum nonlinear complexity $n-1$.

Shift operators can be similarly defined for periodic sequences.
The shift equivalence class of a periodic sequence $\sni$ is given by $\{\textbf{s}_n^{\infty}, (L(\textbf{s}_n))^{\infty}, \dots, (L^{n-1}(\textbf{s}_n))^{\infty}\}$.
As recalled in the following lemma, the nonlinear complexity of a periodic sequence is an invariant under circular shift operators.

\begin{lem}(\cite{JansenPhD})\label{per-nlc}
Let $\sni$ be a sequence of period $n$ over an alphabet $\mathcal{A}$ of size $q$. Then,

\noindent{\rm (i)}
the nonlinear complexity of $\textbf{s}_n^{\infty}$ satisfies $\lceil\log_{q}(n) \rceil \leq nlc(\sni)\leq  n-1$;

\noindent{\rm (ii)} all sequences in $\{\textbf{s}_n^{\infty}, (L(\textbf{s}_n))^{\infty}, \dots, (L^{n-1}(\textbf{s}_n))^{\infty}\}$
have the same nonlinear complexity.
\end{lem}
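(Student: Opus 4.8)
The plan is to handle both parts through the combinatorial characterization in Lemma \ref{lem}, which expresses $nlc(\sni)$ as one plus the length of the longest subsequence occurring twice in $\sni$ with two different successors. Since $\sni=\sn^{\infty}$ has minimal period $n$ (here $\sn$ is aperiodic by convention), every length-$\ell$ window of $\sni$ is one of the $n$ cyclic windows $(s_p,s_{p+1},\dots,s_{p+\ell-1})$ for $p=0,1,\dots,n-1$, with successor $s_{p+\ell}$ (indices read modulo $n$). All of the arguments will be phrased in terms of this finite collection of window--successor pairs.

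For the lower bound in (i), I would count states. If $L=nlc(\sni)$, then $\sni$ is generated by an FSR with $L$ stages, whose states are the length-$L$ windows $\underline{\s}_i=(s_i,\dots,s_{i+L-1})$. The states $\underline{\s}_0,\dots,\underline{\s}_{n-1}$ occurring in one period are pairwise distinct: if $\underline{\s}_i=\underline{\s}_j$ with $0\le i<j\le n-1$, then determinism of the FSR forces $s_{i+t}=s_{j+t}$ for all $t\ge 0$, so $\sni$ would have period dividing $j-i<n$, contradicting minimality. Hence these $n$ distinct states lie among the at most $q^{L}$ words of length $L$ over $\mathcal{A}$, so $q^{L}\ge n$ and therefore $L\ge\lceil\log_{q}(n)\rceil$.

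For the upper bound in (i), the key step---and the one I expect to be the main obstacle---is to show that the $n$ cyclic windows of length $n-1$, namely $W_i=(s_i,\dots,s_{i+n-2})$, are pairwise distinct. Suppose $W_i=W_j$ with $d=j-i\not\equiv 0\pmod n$. Then $s_k=s_{k+d}$ holds for every residue $k$ except possibly the single index $k=i-1$ not covered by the overlap. Viewing $k\mapsto k+d$ as a permutation of $\mathbb{Z}_n$ whose orbits have length $n/\gcd(d,n)\ge 2$, I would chain the valid equalities around the orbit containing $i-1$ to close the one missing link and conclude $s_{i-1}=s_{i-1+d}$ as well; then $s_k=s_{k+d}$ for all $k$, which forces period $\gcd(d,n)<n$, a contradiction. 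Consequently no two length-$(n-1)$ windows coincide, so none can occur with different successors, and Lemma \ref{lem} gives $nlc(\sni)\le (n-2)+1=n-1$.

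For (ii), I would exploit the shift-invariance of the window--successor collection. The shifted sequence $(L^{i}(\sn))^{\infty}$ equals $(s_i,s_{i+1},\dots)$, whose length-$\ell$ windows with successors are exactly $\bigl((s_{i+p},\dots,s_{i+p+\ell-1}),\,s_{i+p+\ell}\bigr)$ for $p=0,\dots,n-1$. As $p$ ranges over a complete residue system modulo $n$, so does $i+p$; hence this set of pairs is identical to that of $\sni$. Therefore, for every $\ell$, a length-$\ell$ subsequence occurs twice with different successors in $(L^{i}(\sn))^{\infty}$ precisely when it does so in $\sni$. By Lemma \ref{lem} the two sequences share the same value of $nlc$, which proves (ii).
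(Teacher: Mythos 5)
Your argument is correct and complete, but note that the paper does not prove this lemma at all: it is quoted verbatim from Jansen's thesis \cite{JansenPhD} as background, so there is no in-paper proof to compare against. Your three steps are each sound. The state-counting lower bound (the $n$ length-$L$ windows in one period must be distinct, else determinism of the FSR plus periodicity would force a period dividing $j-i<n$) is the standard argument. The key step you rightly flag for the upper bound -- closing the single missing equality $s_{i-1}=s_{i-1+d}$ by chaining $s_k=s_{k+d}$ around the orbit of $i-1$ under $k\mapsto k+d$ in $\mathbb{Z}_n$ -- works: the orbit has length $n/\gcd(d,n)\geq 2$, every link except the one at $k=i-1$ is given, and traversing the remaining $m-1$ links yields $s_{i-1+d}=s_{i-1+md}=s_{i-1}$, whence $s_k=s_{k+d}$ for all $k$ and the minimal period divides $\gcd(d,n)<n$, a contradiction with the paper's standing convention that $\sn$ is aperiodic. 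Since no length-$(n-1)$ window then recurs at distinct positions modulo $n$, Lemma \ref{lem} gives $nlc(\sni)\leq n-1$. For (ii), the observation that the multiset of window--successor pairs of $(L^{i}(\sn))^{\infty}$ coincides with that of $\sni$ (a cyclic shift merely relabels starting positions) combined with Lemma \ref{lem} is exactly the right invariance argument. The only caveat worth recording is that both (i) and the aperiodicity-based contradiction implicitly use that $n$ is the \emph{minimal} period, which the paper guarantees by convention; your proof states this explicitly, so there is no gap.
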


For finite-length sequences $\sn$ and periodic sequences $\sni$, despite their close connection, they
can behave differently in some aspects. As indicated in Lemma \ref{lem1} and Lemma \ref{per-nlc}, the nonlinear complexities of $\sn$ and $\sni$ take values from different ranges, and they
behave differently under circular shift operators.
For a finite-length sequence $\sn$, since its corresponding periodic sequence $\sni$ contains all shift equivalent sequences $\textbf{s}_n, L(\textbf{s}_n), \dots, L^{n-1}(\textbf{s}_n)$,
it is clear that $\sni$ has nonlinear complexity $nlc(\sni) \geq nlc(L^i(\sn))
$ for any $0\leq i <n$, i.e.,
$$nlc(\sni) \geq \max\limits_{0\leq i <n} nlc(L^i(\sn)).$$ While for some sequences $\sn$
the equality of the above inequality can be achieved, the equality can not be reached for many other sequences
$\sn$, i.e., $nlc(\sni) > \max\limits_{0\leq i <n} nlc(L^i(\sn)).$ For instance, for the sequence $\textbf{s}_{10}=(0010010010)$,
we have
$$
\left(\,nlc(L^i(\textbf{s}_{10}))\,:\, i=0,1,2,\dots, 9\,\right) = (\,2, 2, 7, 6, 5, 4, 6 ,5, 4 ,3\,)\, \text{ and }\, nlc(\s_{10}^{\infty}) = 9.
$$

In this paper we will further investigate the varying behaviour of nonlinear complexity of binary sequences $\sn$ under circular shift operators, thereby
characterizing
periodic binary sequences $\sni$ with a prescribed nonlinear complexity.
As a result, we are enabled to reveal more explicit relations between $nlc(\sni)$ and $nlc(\mathbf{a}_n)$, where $\mathbf{a}_n$ is a certain $n$-length subsequence of $\sni$.
Throughout this paper,
we will denote
$c=nlc(\sn)$ and $\omega = nlc(\sni)$ for a finite-length sequence $\sn$  and the corresponding periodic sequence $\sni$,
respectively.

Suppose $\textbf{a}_n^{\infty}$ is a binary sequence with period $n$ and nonlinear complexity $\omega$ generated by an $\omega$-stage FSR.
Recall from Figure \ref{Fig1} that $\textbf{a}_n^{\infty}=(a_0,a_1,a_2,\dots)$ can be equivalently expressed as a cycle of states $(\underline{\textbf{a}}_{0},\underline{\textbf{a}}_{1},\dots,\underline{\textbf{a}}_{n-1})$, where $\underline{\textbf{a}}_t = \calF^t(\underline{\textbf{a}}_0)$ for $t \geq 0$.
Given a state $\underline{\textbf{a}}_i$, its \textit{companion state}
is defined by $\widehat{{\underline{\textbf{a}}}_{i}}= (a_{i},\dots,a_{i+\omega-2}, \overline{a}_{i+\omega-1})$
with $\overline{a}_{i+\omega-1}={a}_{i+\omega-1}\oplus 1$.
A companion pair $(\underline{\textbf{a}}_i, \widehat{{\underline{\textbf{a}}}_{i}})=(\underline{\textbf{a}}_i,\underline{\textbf{a}}_{i+d})$ for certain  $d\geq 1$  can be denoted as $ (\underline{\textbf{a}}_{i},\calF^{d}({\underline{\textbf{a}}}_{i}))$.
The structure of $(\underline{\textbf{a}}_{i},\calF^{d}({\underline{\textbf{a}}}_{i}))$ will be frequently used in our discussion.
Without loss of generality, we assume
$d\leq \left\lfloor\frac{n}{2}\right\rfloor$ (since for the case that $d> \n$ we can consider the companion pair $(\underline{\textbf{a}}_{i+d}, \calF^{n-d}(\underline{\textbf{a}}_{i+d})))$.
We shall consider its left shift sequence $\textbf{s}_n^{\infty}=(L^i(\textbf{a}_n))^{\infty}$, which has a companion pair $(\underline{\textbf{s}}_{0}, \calF^{d}({\underline{\textbf{s}}}_{0}))$ with $d\leq \n$.
Here we consider $\sni$ instead of $\textbf{a}_n^\infty$ for simplifying the notation in subsequent sections.

\section{Characterizations of finite-length sequences}\label{Sec2.2}

In this section, we consider finite-length sequences with certain structure and discuss the change of nonlinear complexities of those sequences under circular shift operators.

We first recall recent results on the nonlinear complexity of binary sequences.

\begin{lem}(\cite{liang})\label{lem unique}
For	a binary finite-length sequence
$(s_0,s_1, \dots, s_{n-1})$,
if it has nonlinear complexity  $c \geq \frac{n}{2}$, then
there exists exactly one pair of identical subsequences of length $c -1$ with different successors in $(s_{0}, s_1, \dots, s_n)$.
\end{lem}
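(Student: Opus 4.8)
The plan is to carry out all of the counting in the extended sequence $T=(s_0,\dots,s_n)$ of length $n+1$. Since $c\ge n/2$, Lemma \ref{lem1}(ii) guarantees that appending any symbol $s_n$ leaves the nonlinear complexity equal to $c$, so $nlc(T)=c$. Call a pair of positions $i<j$ a \emph{branching pair} if the length-$(c-1)$ windows $(s_i,\dots,s_{i+c-2})$ and $(s_j,\dots,s_{j+c-2})$ coincide while their successors $s_{i+c-1}$ and $s_{j+c-1}$ differ. Existence of at least one branching pair is immediate from Lemma \ref{lem}: $nlc(T)=c$ means the longest repeated window occurring with differing successors has length exactly $c-1$. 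The entire content of the statement is therefore the \emph{uniqueness} of this pair.

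The key structural tool I would establish first is a left-maximality observation: if $(i,j)$ is a branching pair with $i,j\ge 1$, then $s_{i-1}\neq s_{j-1}$. Indeed, were $s_{i-1}=s_{j-1}$, the windows $(s_{i-1},\dots,s_{i+c-2})$ and $(s_{j-1},\dots,s_{j+c-2})$ would be identical of length $c$ with successors $s_{i+c-1}\neq s_{j+c-1}$ still distinct, forcing $nlc(T)\ge c+1$ by Lemma \ref{lem}, a contradiction. Thus a branching window can never be prolonged to the left without either hitting the boundary or splitting.

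For uniqueness I would argue by contradiction, assuming two distinct branching pairs, and exploit that $c\ge n/2$ makes the windows long relative to $n$. Because each length-$(c-1)$ window together with its successor must fit inside $T$, any two occurrences of a common window are separated by a shift of at most $n-c+1\le n/2+1$, while the window itself has length $c-1\ge n/2-1$; hence, for all but the smallest admissible values of $c$, two occurrences overlap, and the overlap forces a local period equal to the shift on the interval they span. Combining the periods arising from two branching pairs---either two occurrences of one window producing a third occurrence, or two distinct windows whose periodic intervals interlock---through a Fine--Wilf type argument propagates a single period across all of $(s_0,\dots,s_{n-1})$. This would express the sequence as $(s_0,\dots,s_{k-1})^e$ for a proper divisor $k$ of $n$, contradicting Lemma \ref{lem1}(iii); the left-maximality observation is invoked to exclude the degenerate configurations in which the period fails to reach the boundary.

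The main obstacle is the regime where $c$ is only slightly above $n/2$: there the shift $n-c+1$ can exceed $c-2$, so two occurrences need not overlap and the clean periodicity argument stalls. For this tight case I would instead note that two non-overlapping copies of a length-$(c-1)$ window already occupy at least $2(c-1)\ge n-2$ of the $n+1$ positions of $T$, so the copies nearly tile the sequence; pinning down the few remaining symbols via the left-maximality observation together with the differing-successor condition again forces a repetition barred by Lemma \ref{lem1}(iii). Managing the case split---a single window occurring three times versus two genuinely different branching windows---and the exact index bookkeeping in this boundary regime is where the real effort concentrates.
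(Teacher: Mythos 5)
First, a point of reference: the paper does not prove this lemma at all --- it is imported verbatim from \cite{liang} --- so there is no in-paper proof to compare your argument against, and it must stand on its own. Your existence step (via Lemma \ref{lem} and Lemma \ref{lem1}(ii)) and your left-maximality observation are both correct, and the latter is indeed a useful tool. The problem is that the entire content of the lemma is the uniqueness claim, and there your argument has a genuine gap: the mechanism you describe --- combining the local periods of two branching pairs by a Fine--Wilf argument so as to ``propagate a single period across all of $(s_0,\dots,s_{n-1})$'' and then contradict Lemma \ref{lem1}(iii) --- cannot work as stated. A branching pair $(i,j)$ with shift $d=j-i$ gives $s_k=s_{k+d}$ only on $[i,\,i+c-2]$ together with the explicit breakage $s_{i+c-1}\neq s_{j+c-1}$; the period is broken \emph{by definition} at the branching point, so no single period can literally extend over the whole sequence. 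Moreover, even if some period $p$ did hold globally, the correct contradiction would be $nlc(\textbf{s}_n)\leq p<c$ (a $p$-stage FSR with feedback $f(x_0,\dots,x_{p-1})=x_0$ generates the sequence); Lemma \ref{lem1}(iii) only forbids exact powers $(s_0,\dots,s_{k-1})^e$ with $k$ a proper divisor of $n$, which is not what a Fine--Wilf conclusion delivers. Finally, the Fine--Wilf step itself requires the two periodic intervals to overlap in length at least $d_1+d_2-\gcd(d_1,d_2)$, and the bound $c\geq n/2$ does not guarantee this: the two intervals have lengths $c-1+d_1$ and $c-1+d_2$ inside a window of length $n+1$, so their intersection can be empty or too short.

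What is actually needed --- and what you do not supply --- is an argument that uses one local period to transport the inequality $s_{i+c-1}\neq s_{j+c-1}$ of one branching pair into the periodic region of the other and derive a direct clash ($x=y$ and $x\neq y$ for some symbol), or a parity obstruction. The paper itself carries out exactly this kind of argument for the special case of two companion pairs anchored at the same position, in the proof of Proposition \ref{fini-shift-nlc}(ii): from $a_i=a_{i+d}$, $a_i=a_{i+d'}$ for $0\leq i\leq c-2$ and $a_{c-1}\neq a_{c+d-1}$, $a_{c-1}\neq a_{c+d'-1}$ it extracts either a direct symbol contradiction or a Hamming-weight contradiction on a $d'$-window. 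A complete proof of the uniqueness lemma requires a version of that computation for branching pairs at arbitrary starting positions, plus the non-overlapping boundary case; your sketch names the right difficulties but does not resolve them, and the resolution it points to is the wrong one.
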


\begin{lem}(\cite{liang})\label{aper-puanduan}
Let $c$ and $d$ be two positive integers with $c+d \geq 3$, and
a  binary sequence given by
\begin{equation}\label{c+d}
	\textbf{s}_{c+d}=\obr{(s_0, \dots, s_{d-1})}{q \text{ repetitions}}(s_0, \dots, s_{r-1},\overline{s}_{r})=
	\s_d^{q} \,(s_0, \dots, s_{r-1},\overline{s}_{r})
\end{equation}
where $q = \lfloor \frac{c+d-1}{d} \rfloor$,  $0 \leq r=(c+d-1)-qd<d$, $\s_d$ is aperiodic and $\s_d^q=\obr{\s_d}{q \text{ repetitions}}$,
and when $r=0$, $(s_0, \dots, s_{r-1},\overline{s}_{r})$ is deemed as $ \overline{s}_{0}$.
Then when $c \geq d$, the sequence $\s_{c+d}$ in \eqref{c+d} has nonlinear complexity $c$.	
\end{lem}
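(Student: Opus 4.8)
The plan is to invoke Lemma \ref{lem}: since $nlc(\textbf{s}_{c+d})$ equals one plus the length of the longest pair of identical subsequences that occur (at least) twice with different successors, it suffices to prove that this longest length is exactly $c-1$. Throughout I would write $t_i=s_{i\bmod d}$ for the symbols of the purely periodic sequence of period $d$ obtained by repeating $\textbf{s}_d$, and set $r=(c+d-1)\bmod d$, noting $r\equiv c-1\pmod d$. The length identity $qd+r=c+d-1$ then shows that $\textbf{s}_{c+d}=(u_0,\dots,u_{c+d-1})$ satisfies $u_i=t_i$ for $0\le i\le c+d-2$ and $u_{c+d-1}=\overline{t}_{c+d-1}=\overline{s}_r$; in words, $\textbf{s}_{c+d}$ is the length-$(c+d-1)$ prefix of the period-$d$ sequence with a single flipped symbol appended. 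I will also use that the hypotheses $c\ge d\ge 1$ and $c+d\ge 3$ force $c\ge 2$.

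First I would establish the lower bound $nlc(\textbf{s}_{c+d})\ge c$ by exhibiting one pair of identical length-$(c-1)$ subsequences with different successors. The windows $(u_0,\dots,u_{c-2})$ and $(u_d,\dots,u_{c+d-2})$ both lie in the periodic region and are equal because $t_{d+j}=t_j$; yet their successors are $u_{c-1}=t_{c-1}=s_r$ and $u_{c+d-1}=\overline{s}_r$, which differ. By Lemma \ref{lem} this already gives $nlc(\textbf{s}_{c+d})\ge (c-1)+1=c$.

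The substance of the argument is the upper bound $nlc(\textbf{s}_{c+d})\le c$, for which I would show there is no pair of identical length-$c$ subsequences with different successors. A length-$c$ window has a successor only if it starts at a position $p\in\{0,\dots,d-1\}$, and each such window ends at position $p+c-1\le c+d-2$, hence lies entirely in the periodic region and equals $(t_p,\dots,t_{p+c-1})$. Because $c\ge d$, the first $d$ entries of this window are exactly the cyclic rotation $L^p(\textbf{s}_d)$; and since $\textbf{s}_d$ is aperiodic, its minimal circular period is $d$, so the rotations $L^0(\textbf{s}_d),\dots,L^{d-1}(\textbf{s}_d)$ are pairwise distinct. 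Consequently the windows starting at $0,\dots,d-1$ are pairwise distinct, so no two of them coincide at all, let alone with different successors. Finally, a repeated-with-different-successors pair of any length $L\ge c$ would, upon truncation of each occurrence to its last $c$ symbols, yield such a length-$c$ pair (the truncated windows stay equal and keep the same, differing, successors), which is impossible. Combining the two bounds gives $nlc(\textbf{s}_{c+d})=c$.

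I expect the main obstacle to be the upper bound, concentrated in two points: justifying that aperiodicity of $\textbf{s}_d$ yields $d$ distinct cyclic rotations (equivalently, that the minimal $p$ with $L^p(\textbf{s}_d)=\textbf{s}_d$ divides $d$ and equals $d$ precisely when $\textbf{s}_d$ is not a proper power), and the careful boundary bookkeeping ensuring that every length-$c$ window possessing a successor avoids the single flipped terminal symbol and is therefore a genuine window of the period-$d$ sequence. The lower bound and the truncation reduction are routine by comparison.
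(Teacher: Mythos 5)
Your proof is correct. Note that the paper does not actually prove this lemma --- it is imported verbatim from \cite{liang} --- so there is no in-paper argument to compare against; judged on its own, your argument is sound and complete. The two points you flag as the potential obstacles are both handled correctly: (a) every length-$c$ window possessing a successor must start at some $p\in\{0,\dots,d-1\}$ and hence ends at index $p+c-1\le c+d-2$, avoiding the flipped terminal symbol, so it is a genuine window of the period-$d$ sequence; and (b) since $c\ge d$, its first $d$ symbols form the rotation $L^p(\s_d)$, and aperiodicity of $\s_d$ (in the paper's sense of not being a power of a shorter word) does give $d$ pairwise distinct rotations, because $L^j(\s_d)=\s_d$ with $0<j<d$ would force $\s_d$ to be a repetition of its prefix of length $\gcd(j,d)<d$. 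Together with the explicit length-$(c-1)$ pair at positions $0$ and $d$ whose successors are $s_r$ and $\overline{s}_r$, and the truncation reduction for lengths exceeding $c$, this correctly pins $nlc(\s_{c+d})=c$ via Lemma \ref{lem}.
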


The binary sequences of the form in \eqref{c+d} will be heavily used in subsequent discussions. 
Note that $\s_{c+d}$ can be equivalently expressed as
$$(s_0,s_1,\dots, s_{c-2})=(s_d,s_{d+1},\dots, s_{c+d-2}) \text{ and } s_{c-1} \neq s_{c+d-1}.$$
Given a sequence $\textbf{s}_{c+d}=(s_0, s_{1}, \dots s_{c+d-1})$, its two subsequences $\s_{[0:c]}$ and $\s_{[d:c+d]}$
are overlapped at components $s_{d}, \dots, s_{c-1}$ when $c>d$ and are exactly next to each other when $c=d$.
Liang et al. \cite{liang} showed that when $c\geq d$ the sequence $\s_{c+d}$ has nonlinear complexity $c$ and spacing $d$ between its companion pair
$(\underline{\textbf{s}}_{0},\calF^{d}({\underline{\textbf{s}}}_{0}))$
if and only if it has the form as in (\ref{c+d}).
When $c<d$, the statement is neither true for sufficiency nor for necessity. 	
The result in Lemma \ref{aper-puanduan} helps us calculate the nonlinear complexity of a binary sequence $\sn$ starting with $\s_{c+d}$ in a direct way.

\begin{cor}\label{B_nlc}
Suppose a binary sequence $\sn=(s_0,\dots,s_{n-1})$ has
its subsequence $\textbf{s}_{c+d}$ of the form in \eqref{c+d} for certain positive integers $c$ and $d$ with $1\leq d\leq \min\{n-c,\n\}$.
Then we have $nlc(\sn)\geq c$, where the equality is achieved when $c\geq \n$.
\end{cor}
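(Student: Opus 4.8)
The plan is to establish $nlc(\sn)\ge c$ unconditionally and then strengthen it to an equality using the hypothesis $c\ge\n$. For the lower bound I would argue directly from Lemma~\ref{lem}. By the reformulation of \eqref{c+d} recorded just before this corollary, the prefix $\s_{c+d}$ of $\sn$ satisfies $\s_{[0:c-1]}=\s_{[d:c+d-1]}$ with distinct successors $s_{c-1}\neq s_{c+d-1}$. Since $d\ge 1$, these are two occurrences, at positions $0$ and $d$, of one and the same length-$(c-1)$ subsequence, with different successors; as this witness lies inside $\sn$, Lemma~\ref{lem} yields $nlc(\sn)\ge (c-1)+1=c$. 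Note that this step uses neither $c\ge\n$ nor $c\ge d$.

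Next I would prove equality under $c\ge\n$. Because $d\le\n\le c$ we have $c\ge d$, so Lemma~\ref{aper-puanduan} gives $nlc(\s_{c+d})=c$; moreover $c\ge d$ is equivalent to $c\ge\frac{c+d}{2}$, so the length-$(c+d)$ prefix has nonlinear complexity at least half its length. This is the base case of an extension argument: if $c+d=n$ we are already done, and otherwise I would append the remaining symbols $s_{c+d},\dots,s_{n-1}$ one at a time. At the stage where the current prefix has some length $m$ with $c+d\le m\le n-1$, its nonlinear complexity equals $c$ and satisfies $c\ge m/2$, whence Lemma~\ref{lem1}(ii) guarantees that appending $s_m$ keeps the nonlinear complexity equal to $c$. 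Iterating from $m=c+d$ up to $m=n-1$ then gives $nlc(\sn)=c$, matching the lower bound.

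The lower bound is routine, so the main obstacle is the bookkeeping in the extension step: one must check that the stability condition $c\ge m/2$ of Lemma~\ref{lem1}(ii) holds at \emph{every} intermediate length $m$, not merely at the start. The binding constraint is the final append, at $m=n-1$, which requires $n-1\le 2c$; this is exactly where the hypothesis $c\ge\n$ enters, since $c\ge\n$ is equivalent to $n\le 2c+1$ for both parities of $n$ (for even $n$, $\n=n/2$ and integrality of $c$ forces $c\ge(n-1)/2$ up to $c\ge n/2$; for odd $n$, $\n=(n-1)/2$ directly). Once this equivalence is verified, every append is licensed because $m\le n-1\le 2c$ throughout the iteration, and the argument closes.
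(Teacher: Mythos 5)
Your proof is correct and follows essentially the same route as the paper's: the paper likewise obtains the lower bound $nlc(\sn)\ge c$ from the companion-pair structure of $\s_{c+d}$ (splitting into $c\ge d$, handled via Lemma~\ref{aper-puanduan}, and $c<d$, handled via Lemma~\ref{lem}, whereas you apply Lemma~\ref{lem} uniformly to both, which is fine since overlapping occurrences count), and it upgrades to equality when $c\ge\n$ by applying Lemma~\ref{lem1}(ii) ``recursively'' to extend $\s_{c+d}$ to $\sn$. Your explicit bookkeeping of the condition $c\ge m/2$ at every intermediate length $m\le n-1$, and its equivalence with $c\ge\n$, is exactly the content the paper compresses into that single word.
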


\begin{proof}
When $c \geq d$, it follows from Lemma \ref{aper-puanduan} that $nlc(\textbf{s}_{c+d})=c$.
Then $nlc(\sn) \geq nlc(\textbf{s}_{c+d})=c$.
When $c < d$, the subsequence $\s_{c+d}$ is of the form
\[
\s_{c+d} = (s_0,\dots, s_{c-2}, s_{c-1},\dots, s_{d-1})(s_0,\dots, s_{c-2},\overline{s}_{c-1}),
\] where the subsequences $(s_0,\dots, s_{c-2})$ have different successors. By Lemma \ref{lem} we have
$nlc(\textbf{s}_{c+d})\geq c$.
So it is clear that $nlc(\sn)\geq c$.

If $c\geq \n$, then by Lemma \ref{lem1} (ii) recursively, we have $nlc(\sn) =nlc(\textbf{s}_{c+d})$.
Note that $nlc(\textbf{s}_{c+d})=c$ since $c\geq \n\geq d$.
Thus $nlc(\sn) =c$ if $c\geq \n$.
\end{proof}

Below we define a set $\mathcal{B}(n,c)$ that
helps us establish a connection between nonlinear complexities of $\sni$ and $\sn$.

\begin{defn}\label{BB}
For $1 \leq c< n$ and $1\leq d\leq \min\{n-c,\n\}$, we denote by $\mathcal{B}(n,c,d)$ the set of aperiodic binary sequences $\sn$ starting with $\s_{c+d}$ in \eqref{c+d}, namely,
\begin{equation}\label{structure of finite}
	\mathcal{B}(n,c,d)=\{\,\textbf{s}_{n}=\textbf{s}_{c+d}\,\textbf{s}_{[c+d:n]}
	=\underbrace{((s_0, \dots, s_{d-1})^{q} \, (s_{0},\dots, s_{r-1}, \overline{s}_{r})}_{\text{length}=c+d}\, \textbf{s}_{[c+d:n]})\, \}
\end{equation}
where $\s_d$ is aperiodic, $\overline{s}_{r}=s_{r}\oplus 1 $, and the subsequence $\textbf{s}_{[c+d:n]}$
is chosen from $\mathbb{Z}_2^{n-c-d}$ such that $\textbf{s}_{n}$ is aperiodic.
We call the parameter $d$ the spacing of $\sn$, denoted by $spac(\textbf{s}_{n})$, and
define
\begin{equation}\label{Eq-Bnc}
	\mathcal{B}(n,c)=\bigcup\limits^{\min\{n-c,\n\}}_{d=1}\mathcal{B}(n,c,d).
\end{equation}
\end{defn}

By Corollary \ref{B_nlc},  for the case of $c\geq \n$, any $\sn\in \mathcal{B}(n,c)$ has nonlinear complexity $c$.
When $c\geq \frac{n}{2}$, Lemma \ref{lem unique}  shows that $\s_{c+d}$ actually contains the unique companion pair in $\sn$; moreover, Lemma \ref{lem1} (ii) indicates that the subsequence $\textbf{s}_{[c+d:n]}$ in $\sn$ can be arbitrarily chosen.
It is to be noted that for each given $n,c$ and $d$, the set 
$\mathcal{B}(n, c,d)$ is non-empty.
As a matter of fact, as the subsequence $\s_{c+d}$ is aperiodic and $\s_{[c+d:n]}$ can be arbitrarily chosen. 
The existence of aperiodic sequences in $\mathcal{B}(n, c,d)$ can be easily confirmed.

The following lemma characterizes the change of nonlinear complexities of sequences in $\mathcal{B}(n,c)$ under the left circular shift operators.
\begin{lem}\label{c-c+1}
For $\sn\in\mathcal{B}(n,c, d)$ and a positive integer $t< c$, its left shift sequence
$L^t(\sn)$ belongs to $\mathcal{B}(n,c-t,d)$.
In particular, when $c-t \geq \n$, we have
$$nlc(L^t(\sn))=nlc(\sn)-t=c-t.$$
\end{lem}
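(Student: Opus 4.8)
The plan is to work with the equivalent description of membership in $\mathcal{B}(n,c,d)$ recorded right after Lemma \ref{aper-puanduan}: a sequence lies in $\mathcal{B}(n,c,d)$ precisely when its leading block satisfies $s_i=s_{i+d}$ for $0\le i\le c-2$ together with $s_{c-1}\neq s_{c+d-1}$, where in addition the leading $d$-block $\s_d$ is aperiodic and the whole sequence is aperiodic. I would show that $\mathbf{u}:=L^t(\sn)$ satisfies exactly this description with parameter $c-t$ in place of $c$ and the same spacing $d$, and then read off the complexity statement from Corollary \ref{B_nlc}.

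First I would write the leading $c-t+d$ symbols of $\mathbf{u}$ explicitly. Since $c+d\le n$ (because $d\le n-c$), the indices $t,\dots,c+d-1$ all stay below $n$, so no cyclic wraparound occurs in the leading block and $u_j=s_{t+j}$ for $0\le j\le c-t+d-1$. Under this identification the required relation $u_i=u_{i+d}$ for $0\le i\le c-t-2$ reindexes, via $i'=t+i$, to $s_{i'}=s_{i'+d}$ for $t\le i'\le c-2$, which is a subset of the relations already guaranteed by $\sn\in\mathcal{B}(n,c,d)$; and $u_{c-t-1}=s_{c-1}\neq s_{c+d-1}=u_{c-t+d-1}$ is exactly the inequality supplied by $\sn$. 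This establishes the overlap-and-flip shape for $\mathbf{u}$ with parameters $(c-t,d)$.

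Next I would handle the two aperiodicity requirements. For the leading $d$-block, within the range $0\le j\le c+d-2$ the relations $s_i=s_{i+d}$ force $s_j=s_{j\bmod d}$, so the block $(s_t,\dots,s_{t+d-1})$ (legitimate since $t\le c-1$) equals the cyclic shift $L^{t\bmod d}(\s_d)$ of the aperiodic $\s_d$ and is therefore itself aperiodic. For global aperiodicity of $\mathbf{u}$, I would note that any nontrivial period of $L^t(\sn)$ would transfer back to $\sn=L^{n-t}(\mathbf{u})$, contradicting aperiodicity of $\sn$. Finally I would verify the parameter constraints: $1\le c-t$ from $t<c$, $c-t<n$, and $1\le d\le\min\{n-(c-t),\n\}$, the latter being immediate from $d\le n-c\le n-(c-t)$ and $d\le\n$. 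Together these give $L^t(\sn)\in\mathcal{B}(n,c-t,d)$.

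For the particular claim, when $c-t\ge\n$ I would invoke Corollary \ref{B_nlc} applied to $\mathbf{u}\in\mathcal{B}(n,c-t,d)$ to get $nlc(L^t(\sn))=c-t$, and applied to $\sn\in\mathcal{B}(n,c,d)$ (noting $c\ge c-t\ge\n$) to get $nlc(\sn)=c$, whence $nlc(L^t(\sn))=c-t=nlc(\sn)-t$. The main technical care lies in the index bookkeeping: confirming that the leading block of $L^t(\sn)$ suffers no cyclic wraparound (resting on $c+d\le n$) and that the shifted $d$-block inherits aperiodicity from $\s_d$. Once these two points are pinned down, everything else is direct reindexing together with one application of Corollary \ref{B_nlc}.
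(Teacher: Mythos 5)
Your proof is correct and follows essentially the same route as the paper: verify that the defining overlap relations $s_i=s_{i+d}$ ($0\le i\le c-2$), the inequality $s_{c-1}\neq s_{c+d-1}$, and the aperiodicity of the leading $d$-block survive the shift, then invoke Corollary \ref{B_nlc}. The only difference is that you perform the $t$-fold shift in one reindexing step (and are somewhat more explicit about global aperiodicity and the parameter constraints), whereas the paper shifts once and iterates; the substance is the same.
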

\begin{proof}
According to Definition \ref{BB}, the sequence $\sn \in\mathcal{B}(n,c, d)$ has the form
\begin{equation}\label{a_n-c}	
	\textbf{s}_{n}=\s_{c+d}\s_{[c+d:n]}=\underbrace{(s_0, \dots, s_{d-1})^{q}(s_0, \dots, s_{r-1}, \overline{s}_{r})}_{\text{length}=c+d} \, \s_{[c+d:n]},
\end{equation}
which implies $(s_0,s_1,\dots, s_{c-2})=(s_d,s_{d+1},\dots, s_{c+d-2})$ and $s_{c-1} \neq s_{c+d-1}$.
Let $\textbf{a}_n=L(\s_n) $. Then
$(a_0,\dots, a_{c-3})=(s_1,\dots, s_{c-2})=(s_{d+1},\dots, s_{c+d-2})=(a_d,\dots, a_{c+d-3})$ and $a_{c-2} =s_{c-1}\neq s_{c+d-1} =a_{c+d-2}$.
Thus we can write $\textbf{a}_n$ as
\begin{equation}\label{a_n-c-1}	
	L(\s_n)=\textbf{a}_n= \textbf{a}_{c+d-1} \textbf{a}_{[c+d-1:n]}
	= \underbrace{(a_0, \dots, a_{d-1})^{q_1}(a_0,\dots, a_{r_1-1}, \overline{a}_{r_1})}_{\text{length} =c+d-1} \,\textbf{a}_{[c+d-1:n]},
\end{equation}	
where $\textbf{a}_d=(a_0,a_1,\dots, a_{d-1}) =(s_{1},s_{2},\dots,s_{d-1},s_{0})$,
$q_1=\lfloor \frac{(c+d-1)-1}{d} \rfloor=\lfloor \frac{c+d-2}{d} \rfloor$, $r_1=(c+d-2)-q_1d$ and $(a_0,\dots, a_{r_1-1}, \overline{a}_{r_1})$ reduces to $\overline{a}_{r_1}$ when $r_1=0$.
It is clear that $\textbf{a}_d = L(\s_d)$ is aperiodic.
Thus we have $L(\s_n) \in \mathcal{B}(n,c-1,d)$. Furthermore, it follows from Corollary \ref{B_nlc} that $L(\s_n) $ has nonlinear complexity $c-1$ when $c-1\geq \n$.
By repeatedly applying the above process, the desired statement follows.
\end{proof}

Lemma \ref{c-c+1} can be interpreted alternatively: given a sequence $\textbf{a}_n \in \mathcal{B}(n,c-1,d)$ in \eqref{a_n-c-1}
with $d \leq n-c$ and $c-1\geq \n$, if $a_{n-1}=a_{d-1}$,
then its right cyclic shift sequence $R(\textbf{a}_n)$ has the form in \eqref{a_n-c},
implying that $\sn = R(\textbf{a}_n)$ belongs to $\mathcal{B}(n,c,d)$ and has nonlinear complexity $nlc(\textbf{a}_n)+1$.
With this observation, we introduce the following parameter of a sequence in $\mathcal{B}(n,c, d)$ with $c\geq \lfloor\frac{n}{2}\rfloor$, which indicates
the potential increment of the nonlinear complexity of sequences under the right circular shift operators.

\begin{defn}\label{def-t}
Given a certain positive integer $t$,
if a sequence $\textbf{s}_{n}$ in $\mathcal{B}(n,c, d)$ satisfies 
\begin{equation*}
	s_{n-1-i}=s_{(d-1-i)\,\text{mod}\,\,d}\,\, \text{ for } 0\leq i< t, \text{ and } s_{n-1-t}\neq  s_{(d-1-t)\,\text{mod}\,\,d},
\end{equation*}
then we call $s_{n-t}, \dots, s_{n-1}$ the added terms of $\textbf{s}_{n}$
and denote by $add(\textbf{s}_{n})$ the number $t$ of the added terms of $\sn$.
\end{defn}

Definition \ref{def-t} indicates that
$(s_{n-t},\dots, s_{n-1})=(s_{(d-t)\,\text{mod}\,\,d},\dots, s_{(d-1)\,\text{mod}\,\,d})$.
As shown in Figure \ref{fig.2}, for the case that $t< d$, the subsequences $\s_{[d-t:d]}$ and $\s_{[n-t:n]}$ in gray are identical and  $s_{d-1-t}\neq s_{n-1-t}$.
In this case $s_{n-t}, \dots, s_{n-1}$ in gray are the added terms of $\sn$. For some sequences $\sn$, we may have $add(\sn)=t\geq d$.
In such a case, it follows that $(s_{n-t}, \dots, s_{n-d-1}, s_{n-d}, \dots, s_{n-1})=(s_{n-(t-d)},\dots, s_{n-1},s_0, \dots, s_{d-1})$.
\begin{figure}[H]
\centering
\begin{tikzpicture}	
	\node (rect) at (0,0)
	{
		$\begin{array}{rl}
			\sn &= (s_0, {\dots, s_{d-1}})^{q} (s_0, \dots, s_{r-1}, \overline{s}_{r}) \,(s_{c+d}, {\dots,s_{n-1}})
			\\
			&= (s_0, {\dots, s_{d-1-t}, s_{d-t}, \dots, s_{d-1}})(s_{d}, \dots, s_{n-t-1},{s_{n-t},\dots,s_{n-1}}).
		\end{array} $};
	\fill[gray, opacity=0.3] (-2,-0.6) rectangle (0.4,-0.15) ;
	\fill[gray, opacity=0.3] (3.3,-0.6) rectangle (5.7,-0.15) ;
	\draw[<-] (3.3,-0.7) -- (5.7,-0.7) ;
	\draw[<-] (-2,-0.7) -- (0.4,-0.7) ;
\end{tikzpicture}
\caption{Added terms in  Definition \ref{def-t} for $t< d$ }\label{fig.2}
\end{figure}

\begin{remark}\label{mark1.1}
Note that each $\textbf{s}_{n}$ in $\mathcal{B}(n,c)$ with $c\geq \lfloor\frac{n}{2}\rfloor$ has $add(\sn)\leq n-c-1$.
For $\textbf{s}_{n}$ in $\mathcal{B}(n,c,d)$ with $c\geq \lfloor\frac{n}{2}\rfloor$, assume $add(\textbf{s}_{n})=t$,
we now consider the sequence $\sn^2=\sn\sn$.
From Definition \ref{def-t}, in the subsequence
$(s_{n-t},\dots,s_{n-1},s_{0},\dots,s_{c+d-1}) $
of $\sn^2$ we have
\begin{equation*}
	(s_{n-t},\dots,s_{n-1},s_{0},\dots,s_{c-2})=(s_{n-t+d},\dots,s_{d-1},s_{d},\dots,s_{c+d-2}) \ \  \mbox{and} \  \ s_{c-1}\neq s_{c+d-1}.
\end{equation*} Let $\textbf{v}_d=(v_0, \dots, v_{d-1}) = (s_{n-t}, \dots, s_{n-t+d-1})$. Then
\begin{equation*}
	(s_{n-t},\dots,s_{n-1},s_{0},\dots,s_{c+d-1}) = \overbrace{(v_0, \dots, v_{d-1})\cdots (v_0, \dots, v_{d-1})}^{q_2 \text{ repetitions}}(v_0, \dots, v_{r_2-1}, \bar{v}_{r_2}),
\end{equation*} where $q_2 = \lfloor\frac{c+d+t-1}{d}\rfloor$ and  $r_2 = (c+d+t-1) - q_2d.$
Thus
the above subsequence satisfies
$(s_{n-t},\dots,s_{n-1},s_{0},\dots,s_{c+d-1}) \in\mathcal{B}(c+d+t,c+t,d)$ with nonlinear complexity $c+t$.
Then $c+t\leq nlc(\sn^2)\leq n-1$, implying $t\leq n-c-1$.

\end{remark}

For a binary sequence $\sn\in\mathcal{B}(n,c)$,  the following proposition shows that
$add(\sn)$ plays an important role in the varying behaviour of the nonlinear complexity of sequences under the right circular shift operators.

\begin{prop}\label{fini-shift-nlc}
For $\textbf{s}_{n}\in  \mathcal{B}(n,c,d)$ with $c \geq \n$, $d \leq \min\{n-c,\n\}$ and $add(\textbf{s}_{n})=t$,
the nonlinear complexity values of its shifted sequences have the following properties,

\noindent{\rm (i)} for any $1\leq k\leq \min\{t, n-c-d\}$,
$R^{k}(\textbf{s}_{n}) \in \mathcal{B}(n,c+k,d)$ and
$ nlc(R^{k}(\textbf{s}_{n}))=c+k;$

\noindent{\rm (ii)} for any $t< k\leq  n-c-d$,
we have $ nlc(R^{k}(\textbf{s}_{n}))=c+t.$
\end{prop}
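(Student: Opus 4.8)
The plan is to reduce both parts to how a single right circular shift acts on the structured prefix $\s_{c+d}$ and on the parameter $add(\sn)$, feeding the outcome into the companion-pair machinery of Lemma \ref{c-c+1} and Corollary \ref{B_nlc}.

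For (i), I would iterate the ``alternative interpretation'' recorded right after the proof of Lemma \ref{c-c+1}: if $\textbf{a}_n\in\mathcal{B}(n,c',d)$ with $d\le n-c'-1$, $c'\ge\n$ and $(\textbf{a}_n)_{n-1}=(\textbf{a}_n)_{d-1}$, then $R(\textbf{a}_n)\in\mathcal{B}(n,c'+1,d)$ with nonlinear complexity $c'+1$. Since $add(\sn)=t$, Definition \ref{def-t} gives $s_{n-1}=s_{(d-1)\bmod d}=s_{d-1}$ as soon as $t\ge 1$, which is exactly the hypothesis needed to perform one such shift. The bookkeeping claim I would verify first is that one right shift decrements the added-term count by one, i.e. $add(R(\sn))=t-1$: writing $\textbf{b}_n=R(\sn)=(s_{n-1},s_0,\dots,s_{n-2})$ and comparing $b_{n-1-i}=s_{n-2-i}$ against $b_{(d-1-i)\bmod d}$ index by index, the defining equalities of $add(\sn)=t$ simply slide down one index and the first failure at index $t$ becomes the failure at index $t-1$ for $\textbf{b}_n$. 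Iterating is then legitimate precisely when $k\le t$ (the matching condition holds at each step) and $k\le n-c-d$ (the prefix $\s_{(c+k)+d}$ still fits in length $n$), yielding $R^{k}(\sn)\in\mathcal{B}(n,c+k,d)$; Corollary \ref{B_nlc} then gives $nlc(R^{k}(\sn))=c+k$ because $c+k\ge\n$.

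For (ii), set $\textbf{w}_n=R^{t}(\sn)$. Applying (i) with $k=t$ gives $\textbf{w}_n\in\mathcal{B}(n,c+t,d)$ with $nlc(\textbf{w}_n)=c+t=:c''$ and $add(\textbf{w}_n)=0$, the last condition meaning $w_{n-1}\ne w_{d-1}$. Writing $R^{k}(\sn)=R^{m}(\textbf{w}_n)$ with $m=k-t\ge 1$ and $m\le n-c''-d$, it suffices to show $nlc(R^{m}(\textbf{w}_n))=c''$. The lower bound is immediate: since $m\le n-c''-d$, the companion pair of $\textbf{w}_n$ (the identical length-$(c''-1)$ subsequences at positions $0$ and $d$ with different successors) is shifted rigidly to positions $m$ and $m+d$ without its successors crossing the right end, so it survives inside $R^{m}(\textbf{w}_n)$ and Lemma \ref{lem} gives $nlc(R^{m}(\textbf{w}_n))\ge c''$.

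The substance is the upper bound $nlc(R^{m}(\textbf{w}_n))\le c''$, and I expect this to be the main obstacle. I would argue by contradiction: a companion pair of length $L\ge c''$ in $R^{m}(\textbf{w}_n)$ corresponds, under the index map $u_j=w_{(j-m)\bmod n}$, to a length-$L$ cyclic repeat in $\textbf{w}_n^{\infty}$, and I would split on whether its two occurrences cross the single seam of the window. If neither crosses, the whole pair lies inside the prefix $(w_0,\dots,w_{n-m-1})$ of $\textbf{w}_n$, whose complexity is at most $nlc(\textbf{w}_n)=c''$ (deleting a final symbol cannot raise complexity), contradicting $L\ge c''$. If an occurrence wraps the seam, aligning the two occurrences symbol by symbol produces a genuine finite companion pair of $\textbf{w}_n$ based at position $0$ with some spacing $\delta$, and evaluating that equality exactly at the wrap transition forces $w_{n-1}=w_{\delta-1}$. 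Uniqueness of the length-$(c''-1)$ companion pair (Lemma \ref{lem unique}) then pins $\delta=d$, whereupon $w_{n-1}=w_{d-1}$ contradicts $add(\textbf{w}_n)=0$, while the possibility that the induced pair is strictly longer than $c''-1$ contradicts $nlc(\textbf{w}_n)=c''$. The delicate points I anticipate are exactly in this wrap-around analysis: excluding a ``foreign'' spacing $\delta\neq d$ (which I would rule out from the aperiodicity of the period-$d$ block of $\textbf{w}_n$ via a Fine--Wilf type argument on the overlapping period-$d$ and period-$\delta$ structures), and handling the extreme values of $m$ where a successor lands on the last coordinate, for which Lemma \ref{lem1}(ii) supplies the missing successor without changing the complexity.
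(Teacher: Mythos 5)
Your part (i) is essentially the paper's argument: one right shift of a sequence in $\mathcal{B}(n,c,d)$ with $s_{n-1}=s_{d-1}$ lands in $\mathcal{B}(n,c+1,d)$, the added-term count drops by one, and iteration plus Corollary \ref{B_nlc} finishes; that is exactly what the paper does, and your bookkeeping $add(R(\sn))=t-1$ is correct.

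Part (ii) is where you diverge, and the divergence contains a genuine gap. You try to bound $nlc(R^{m}(\textbf{w}_n))$ for all $m$ at once by classifying where a long repeat can sit relative to the seam, but the crucial step --- ``aligning the two occurrences symbol by symbol produces a genuine finite companion pair of $\textbf{w}_n$ based at position $0$ with some spacing $\delta$'' --- is asserted, not proved. A wrapping repeat lives in the circular word at some base point $i_0$ that you do not control; nothing forces it to position $0$, and Lemma \ref{lem unique} cannot ``pin $\delta=d$'' for you because that lemma speaks only about pairs contained in a single finite window, which a seam-straddling pair by construction is not. The paper obtains exactly this localization by a different mechanism: it inducts one shift at a time, compares $R^{k}(\sn)$ with $R^{k-1}(\sn)$, and applies Lemma \ref{lem unique} to $R^{k}(\sn)$ itself (whose hypothetical complexity $c'>c+t\geq\n$ does satisfy $c'\geq n/2$) to force the unique longest pair to begin at the newly prepended symbol; only then does it conclude $R^{k}(\sn)\in\mathcal{B}(n,c',d')$, strip the first symbol with Lemma \ref{c-c+1}, and land in the double-membership $\textbf{a}_n\in\mathcal{B}(n,c+t,d)\cap\mathcal{B}(n,c+t,d')$ that it must then refute. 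Your sketch never reaches that configuration, and in the one boundary case that is genuinely hard --- $n$ odd and $c+t=\n<n/2$ --- both of your fallback lemmas (Lemma \ref{lem unique} for uniqueness and Lemma \ref{lem1}(ii) for the missing successor at the last coordinate) are inapplicable, which is precisely why the paper needs its separate subcase analysis there.

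On the positive side, your Fine--Wilf idea for excluding a foreign spacing is sound \emph{once} the two companion pairs are anchored at a common base point: the prefix of $\textbf{a}_n$ then carries periods $d$ and $d'$ on a segment of length at least $d+d'-\gcd(d,d')$, forcing $\textbf{a}_d$ to have period $\gcd(d,d')<\max\{d,d'\}$ and contradicting aperiodicity. This would replace the paper's Hamming-weight computation in the subcase $d=\n$ rather elegantly. But it is a patch for the final contradiction, not for the missing localization step, so as written the proof of (ii) is incomplete.
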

\begin{proof}
(i) For $\textbf{s}_{n}\in  \mathcal{B}(n,c,d)$ with $add(\textbf{s}_{n})=t$,
we have
$(s_{n-t},\dots, s_{n-1})=(s_{(d-t)\,\text{mod}\,\,d},\dots,$\\
$ s_{(d-1)\,\text{mod}\,\,d})$, thus
$$(s_{n-t},\dots, s_{n-1})(s_0,\dots, s_{c-2})=(s_{(d-t)\,\text{mod}\,\,d},\dots, s_{(d-1)\,\text{mod}\,\,d})(s_d,s_{d+1},\dots, s_{c+d-2}),$$
and $s_{c-1} \neq s_{c+d-1}.$
When $ n-c-d \geq 1$,
let $\textbf{a}_n=R(\s_n) $,
thus
$$\begin{array}{rl}
	(a_{0},a_1,\dots, a_{c-1})&= (s_{n-1},s_0,\dots, s_{c-2})
	=(s_{d-1},s_d,\dots, s_{c+d-2})
	=	(a_{d},a_{d+1},\dots, a_{c+d-1})
\end{array}$$
and $a_{c} \neq a_{c+d}.$
That is to say, we can write $\textbf{a}_n$ as
\begin{equation*}
	R(\s_n)=\textbf{a}_n=  \textbf{a}_{c+d+1} \textbf{a}_{[c+d+1:n]}
	= \underbrace{(a_0, \dots, a_{d-1})^{q_2}(a_0,\dots, a_{r_2-1}, \overline{a}_{r_2})}_{\text{length}=c+d+1} \,\textbf{a}_{[c+d+1:n]},
\end{equation*}	
where
$q_2=\lfloor \frac{(c+d+1)-1}{d} \rfloor = \lfloor \frac{c+d}{d} \rfloor $ and $r_2=(c+d+1)-1-q_2d$. So we see that
the sequence $\textbf{a}_n=R(\s_n)$ belongs to $\mathcal{B}(n,c+1,d)$ and has nonlinear complexity $c+1$.

When $t \leq n-c-d$,  by induction on $t$, we see that
the sequence $\textbf{a}_n=R^k(\s_n)$ with $k \leq t$ belongs to $\mathcal{B}(n,c+k,d)$ and has nonlinear complexity $c+k$.	
When $t > n-c-d$ and $k \leq n-c-d$, the statement holds similarly.
However, when $t > n-c-d$ and $n-c-d < k \leq t$, the sequence $\textbf{a}_n=R^k(\s_n)$ satisfies $a_{i}=a_{i+d}$ with $0\leq i< n-d$ and $d\leq  \min\{n-c,\n \} \leq\lfloor\frac{n}{2}\rfloor$, implying that $\textbf{a}_n$ is contained in the periodic sequence $\textbf{a}_d^{\infty}$ and then $nlc(\textbf{a}_n)\leq nlc(\textbf{a}_d^{\infty})\leq d-1\leq\lfloor\frac{n}{2}\rfloor-1$, that is to say,
$\textbf{a}_n=R^k(\s_n)$ no longer belongs to $\mathcal{B}(n,c+k,d)$.
The difference between cases of $t \leq n-c-d$ and $t > n-c-d$ are visualized in Figure \ref{fig 4}, where
a copy of $\sn$ is added on its left side and the gray area covers the sequence $\s_{c+d}$ as in \eqref{c+d}.
Therefore,  for any $1\leq k\leq \min\{t, n-c-d\}$,
$ nlc(R^{k}(\textbf{s}_{n}))=nlc(\sn)+k=c+k.$

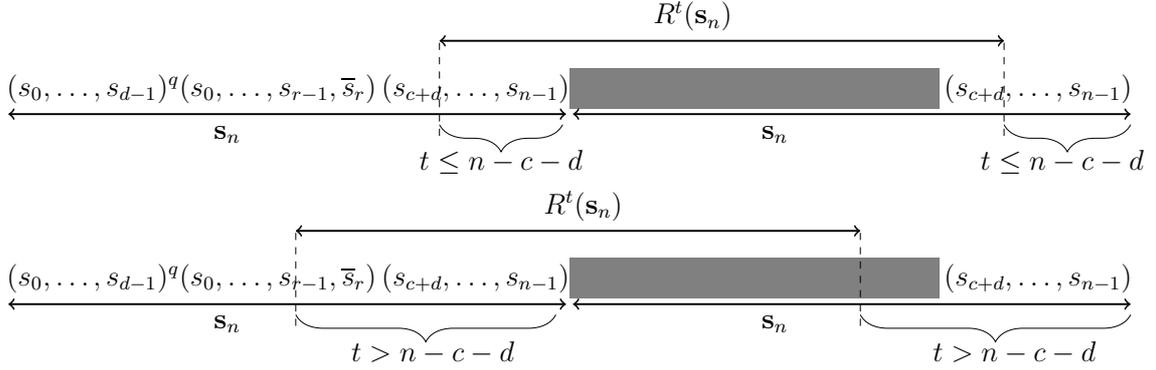
\begin{figure}[H]
	\begin{center}
		
		\scalebox{0.9}{
			\begin{tikzpicture}
				\node (rect) at (0,0)
				{$ (s_0, {\dots, s_{d-1}})^{q} (s_0, \dots, s_{r-1}, \overline{s}_{r}) \,(s_{c+d}, {\dots,s_{n-1}})    (s_0, {\dots, s_{d-1}})^{q} (s_0, \dots, s_{r-1}, \overline{s}_{r}) \,(s_{c+d}, {\dots,s_{n-1}})  $};
				\fill[gray, opacity=0.3] (0,-0.3) rectangle (5.4,0.3) ;
				\draw[thick, <->](0.05,-0.38)--(8.2,-0.38) node[below] at (3,-0.4) {$\sn$};
				\draw[<->,thick](-8.2,-0.38)--(-0.05,-0.38) node[below] at (-5,-0.4) {$\sn$};
				\draw[decorate,decoration={brace,amplitude=4mm,mirror,raise=1pt}] (-1.9,-0.5) -- (-0.1,-0.5) node[above] at (-1,-1.4) {$t\leq n-c-d$};
				\draw[<->,thick](-1.9,0.7)--(6.35,0.7) node[above] at (1.8,0.7) {$R^t(\sn)$};
				\draw[dashed] (-1.9,-0.7)--(-1.9,0.7);
				\draw[dashed] (6.35,-0.7)--(6.35,0.7);
				\draw[decorate,decoration={brace,amplitude=4mm,mirror,raise=1pt}] (6.35,-0.5) -- (8.2,-0.5) node[above] at (7.2,-1.4) {$t\leq n-c-d$};
				
				\begin{scope}[shift={(0,2.2)}]									
					\node (rect) at (0,-5)
					{$ (s_0, {\dots, s_{d-1}})^{q} (s_0, \dots, s_{r-1}, \overline{s}_{r}) \,(s_{c+d}, {\dots,s_{n-1}} )   (s_0, {\dots, s_{d-1}})^{q} (s_0, \dots, s_{r-1}, \overline{s}_{r}) \,(s_{c+d}, {\dots,s_{n-1}})  $};
					\fill[gray, opacity=0.3] (0,-5.3) rectangle (5.4,-4.7) ;
					\draw[thick, <->](0.05,-5.38)--(8.2,-5.38) node[below] at (3,-5.4) {$\sn$};
					\draw[<->,thick](-8.2,-5.38)--(-0.05,-5.38) node[below] at (-5,-5.4) {$\sn$};
					\draw[decorate,decoration={brace,amplitude=4mm,mirror,raise=1pt}]	 (-4,-5.5) -- (-0.2,-5.5) node[above] at (-2,-6.4) {$t>n-c-d$};
					\draw[<->,thick](-4,-4.3)--(4.25,-4.3) node[above] at (0.2,-4.3) {$R^t(\sn)$};
					\draw[dashed] (-4,-5.7)--(-4,-4.3);
					\draw[dashed] (4.25,-5.7)--(4.25,-4.3);
					\draw[decorate,decoration={brace,amplitude=4mm,mirror,raise=1pt}] (4.25,-5.5) -- (8.2,-5.5) node[above] at (6.5,-6.4) {$t>n-c-d$};	
				\end{scope}
			\end{tikzpicture}
		}
		
		\caption{The visualized description of Proposition \ref{fini-shift-nlc} (i)}\label{fig 4}
	\end{center}
\end{figure}

(ii)
Let $R^{t}(\textbf{s}_{n})=\textbf{a}_n$.
It follows from Proposition \ref{fini-shift-nlc} (i) and $t<n-c-d$ that
\begin{equation}\label{d-d'}
	R^{t}(\textbf{s}_{n})=\textbf{a}_n \in \mathcal{B}(n,c+t,d) \,\,\, \text{with} \,\,\, a_{n-1} \neq a_{d-1}	
\end{equation}
and
$ nlc(\textbf{a}_{n})=c+t.$
Then  $\textbf{a}_{c+t+d}$
is the subsequence formed exactly by the companion pair
$(\underline{\textbf{a}}_{0},\mathcal{A}^{d}({\underline{\textbf{a}}}_{0}))$  of $\textbf{a}_n$.
For any $t< k\leq  n-c-d$,
we have
$R^{k}(\textbf{s}_{n})=R^{k-t}(\textbf{a}_{n})$ contains $\textbf{a}_{c+t+d}$, which implies that $ nlc(R^{k}(\textbf{s}_{n})) \geq c+t.$
We shall prove the statement by induction on $k$.
Suppose that the nonlinear complexity of $R^{t+1}(\textbf{s}_{n})=R(\textbf{a}_{n})$ is larger than $c+t$, that is to say, $nlc(R(\textbf{a}_{n}))=c'>c+t$.
Note that it follows from Lemma \ref{lem unique} and $c' >\n $ that $R^{k}(\textbf{s}_{n})$ only has a unique companion pair.
If the unique companion pair of $R(\textbf{a}_{n})=(a_{n-1},a_0,\dots,a_{n-2})$ does not begin with $a_{n-1}$, then
$L(R(\textbf{a}_{n}))=\textbf{a}_n=(a_0,\dots,a_{n-2},a_{n-1})$ contains the unique companion pair of $R(\textbf{a}_{n})$, which implies that
$nlc(\textbf{a}_{n}) \geq nlc(R(\textbf{a}_{n}))=c'$.
It contradicts that  $nlc(\textbf{a}_{n})=c+t<c'=nlc(R(\textbf{a}_{n}))$.
Thus the unique companion pair of $R(\textbf{a}_{n})=(a_{n-1},a_0,\dots,a_{n-2})$ begins with $a_{n-1}$,
which yields that $R(\textbf{a}_{n})\in \mathcal{B}(n,c',d')$.
Hence $a_{n-1}=a_{d'-1}$.
Together with \eqref{d-d'}, we have $a_{d'-1}=a_{n-1}\neq a_{d-1}  $, implying that
$d' \neq d$.
According to Lemma \ref{c-c+1}, $L(R(\textbf{a}_{n}))=\textbf{a}_n\in \mathcal{B}(n,c'-1,d')$ with $nlc(\textbf{a}_n)=c'-1$ and $d' \neq d$.
Moreover $\textbf{a}_n\in \mathcal{B}(n,c+t,d)$ with $nlc(\textbf{a}_n)=c+t$,
thus $c'=c+t+1$. That is to say, $\textbf{a}_n$ satisfies $\textbf{a}_n\in \mathcal{B}(n,c+t,d)$ and $\textbf{a}_n\in \mathcal{B}(n,c+t,d')$ with $d' \neq d$.
From the structure of $\textbf{a}_n$ in \eqref{c+d}, it is clear that
\begin{equation}\label{prop_form}
	a_{i}=a_{i+d}, \, a_{i}=a_{i+d'}, \, 0 \leq i \leq c-2 \quad \text{and} \quad a_{c-1} \neq a_{c+d-1},\, a_{c-1} \neq a_{c+d'-1}.
\end{equation}

In the case of $c+t \geq n/2$, it follows from Lemma \ref{lem unique} that $\textbf{a}_n$ only has a unique companion pair, which
contradicts that $d' \neq d$.
In the case of $c=\n$ with odd $n$ and $t=0$, i.e. $c+t=\n$,
without loss of generality, suppose $d'<d$ then $1 \leq d'<d\leq \n$.
If $d < \n$,
then according to \eqref{prop_form}, one has
$a_{c-1-d}=a_{c-1}$ and $a_{c-1-d}=a_{c+d'-1-d}$ implying that $a_{c-1}=a_{c+d'-1-d}$,
while $a_{c+d'-1-d}=a_{c+d'-1}$ and $a_{c-1} \neq a_{c+d'-1}$ follows
$a_{c-1} \neq a_{c+d'-1-d}$. It is a contradiction.
If $d = \n$, then from \eqref{prop_form} we can see that
$$(a_c,a_{c+1},\dots,a_{c+d'-1})=(a_{c-d},a_{c+1-d},\dots,a_{c+d'-1-d})=(a_0,a_{1},\dots,a_{d'-1})=\textbf{a}_{d'},$$
thus its Hamming weight $\text{wt}(a_c,a_{c+1},\dots,a_{c+d'-1})=\text{wt}(\textbf{a}_{d'})$.
While again by \eqref{prop_form}, one has
$$(a_c,a_{c+1},\dots,a_{c+d'-2},a_{c+d'-1})=(a_{c\,\text{mod} \,\,d'},\dots,a_{(c+d'-2)\,\text{mod} \,\,d'},a_{(c+d'-1)\,\text{mod} \,\,d'} \oplus 1),
$$ where $(a_{c\,\text{mod} \,\,d'},\dots,a_{(c+d'-2)\,\text{mod} \,\,d'},a_{(c+d'-1)\,\text{mod} \,\,d'})$ is a shifted version of $\textbf{a}_{d'}$.
Hence
$\text{wt}(\textbf{a}_{d'})$
$ \neq \text{wt}(a_c,a_{c+1},\dots,a_{c+d'-1})$,
which is a contradiction.
Thus $$nlc(R(\textbf{a}_{n}))=nlc(R^{t+1}(\textbf{s}_{n}))=c'=c+t.$$
By induction on $k$ ranging from $t+1$ to $n-c-d$, the proof follows.
\end{proof}

Given a sequence $\sn\in \mathcal{B}(n,c,d)$,
Proposition \ref{fini-shift-nlc} reveals the changing pattern of $nlc(R^k(\sn))$ for $k\leq n-c-d$. However, when $k> n-c-d$,
the change of $ nlc(R^{k}(\sn))$ does not indicate a strong pattern.
Consider a sequence $\sn \in  \mathcal{B}(n,c)$ with $c\geq \n$. The varying behaviour of $nlc(R^k(\sn))$ for $0\leq k\leq n-1$ is discussed in Remark \ref{mark2}.

\begin{remark}\label{mark2} Without loss of generality, we assume $c=\n$. For $\sn \in  \mathcal{B}(n,\n)$,
among the right shift sequences $R^k(\sn)\,:\, k= 0, 1,\dots, n-1$, there may exist positive integers $0<  j_1 <\dots < j_r \leq n-1$ such that
$R^{j_1}(\sn), \dots, R^{j_r}(\sn)$ also belong to $\mathcal{B}(n,\n)$.

For $0\leq k < j_1$, assume $\sn \in \mathcal{B}(n,\n,d)$ for certain $d\leq \n$ has $add(\sn)=t$ and define an integer-valued
sequence $\textbf{c}_0=\left(nlc(\sn), \dots, nlc(R^{j_1-1}(\sn))\right)$.
Then the values in the sequence $\mathbf{c}_0$ vary in this way: $\textbf{c}_0$ starts with $\n$, when $k\leq \min\{t, n-c-d\}$, the values in $\textbf{c}_0$ increase one by one as in Proposition \ref{fini-shift-nlc} (i);
when $t< k\leq n-c-d$, the values in $\textbf{c}_0$ remain unchanged as in Proposition \ref{fini-shift-nlc} (ii); and when $n-c-d<k<j_1$,
the change of values in $\textbf{c}_0$ does not exhibit a strong pattern.

For $k=j_1$, assume $\textbf{a}_n=R^{j_1}(\sn)\in \mathcal{B}(n,c,d_1)$ has $add(\textbf{a}_n)=t_1$ and take
$$\textbf{c}_1=\left(nlc(R^{j_1}(\sn)), \dots, nlc(R^{j_2-1}(\sn))\right)=\left(nlc(\textbf{a}_n), \dots, nlc(R^{j_1'-1}(\textbf{a}_n))\right), $$  where $j_1'=j_2-j_1$. As $k$ increases from $j_1$ to $j_2-1$, the  values in the sequence $\textbf{c}_1$ vary in a similar manner as the above analysis for $\textbf{c}_0$.
As $k$ ranges from $j_2$ to $n-1$, we can similarly define the sequences $\textbf{c}_i$ for $i=2, \dots, r$ as above. The values in these sequences vary similarly to that of $\textbf{c}_0$.

To sum up, the above analysis shows that
\begin{equation*}
	\left(nlc(\sn), nlc(R(\sn), \dots, nlc(R^{n-1}(\sn))\right) =\textbf{c}_0 \cup \textbf{c}_1 \cup \cdots \cup \textbf{c}_r,
\end{equation*} where the values in each $\textbf{c}_i$, $i=0,1, \dots, r$, vary according to Proposition \ref{fini-shift-nlc} first and then change in unclear patterns.
\end{remark}

For sequences $\sn\in \mathcal{B}(n,c)$,  below we consider a subset $E(\sn)$ of its shift equivalence class and the representative of $E(\sn)$, which will be used in our subsequent discussions.

\begin{defn}\label{ES_Def}
Let $\sn$ be a sequence in $\mathcal{B}(n,c)$ and $E(\sn)=\{R^k(\sn): 0\leq k<n\} \cap \mathcal{B}(n,c)$.
A sequence $\widetilde{\s}_n \in E(\sn)$ satisfying
\begin{equation*}
	add(\widetilde{\s}_n) \geq add(\textbf{a}_n),\, \forall\, \textbf{a}_n\in E(\sn)
\end{equation*} is said to be a representative sequence of $\sn$. Furthermore, we define $\mathcal{R}(n,c)$ as  the set of all sequence representatives in $\mathcal{B}(n,c)$,
i.e.,
\begin{equation}\label{Eq-Rnc}
	\mathcal{R}(n,c)= \bigcup_{\sn\in \mathcal{B}(n,c)} \big\{\widetilde{\s}_n\in E(\sn):add(\widetilde{\s}_n) \geq add(\textbf{a}_n),   \forall\, \textbf{a}_n \in E(\sn) \big\}.
\end{equation}

\end{defn}

\begin{figure}[t!]
\centering	
\begin{tikzpicture}
	\begin{scope}[shift={(0,7)}]  		
		\datavisualization [school book axes,
		visualize as scatter,
		x axis={unit vector=(0:0.67pt), label={$k$}}, 
		y axis={length=4.04cm,label={$nlc(R^k(\sn))$}},
		y axis={grid={  at={4} , minor={  style=densely dashed,at={1, 2, 3,5,6} }}},
		]	
		data {
			x,y
			0,4
			1,5
			2,5
			3,5
			4,5
			5,4
			6,5
			7,6
			8,3	
		};
			\fill [black] (canvas cs:x=0cm,y=2.7cm) circle (2.5pt);
			\fill [black] (canvas cs:x=3.35cm,y=2.7cm) circle (2.5pt);
			\draw(3.2,2.55) rectangle (3.5,2.85) ;
			\draw[solid, thick, opacity=10](0,2.7) -- (0.67,3.37)--(1.34,3.37)--(2.01,3.37)--(2.68,3.37);		
			\draw[ thick,dashed, opacity=10](2.68,3.37)--(3.35,2.7);	
			\draw[solid, thick, opacity=10] (3.35,2.7)--(4.02,3.37)--(4.69,4.04);
			\draw[ thick,dashed, opacity=10](4.69,4.04)--(5.36,2.03);			
	\end{scope}	
	\node (rect) at (3,6){(i) }; \node at (3, 11.75) {$\sn=(\underline{\mathbf{0}0001}0010)$};

	\begin{scope}[shift={(9,7.05)}]
		\datavisualization [school book axes,
		visualize as scatter,
		x axis={unit vector=(0:0.67pt), label={$k$}},
		y axis={length=4.04cm,label={$nlc(R^k(\sn))$}},
		y axis={grid={  at={4} , minor={  style=densely dashed,at={1, 2, 3,5,6} }}},
		]
		data {
			x,y
			0,4
			0,6
			1,4
			2,4
			3,4
			4,5
			5,5
			6,4
			7,5
			8,3	
		};
		\fill [black] (canvas cs:x=0cm,y=2.7cm) circle (2.5pt);
		\fill [black] (canvas cs:x=2.01cm,y=2.7cm) circle (2.5pt);
		\fill [black] (canvas cs:x=4.02cm,y=2.7cm) circle (2.5pt);
		\draw(1.86,2.55) rectangle (2.16,2.85) ;
		\draw(3.87,2.55) rectangle (4.17,2.85) ;
		(0,4) -- (1,4)--(2,4)--(3,4)--(4,5)--(5,5)--(6,4)--(7,5)--(8,3)--(9,4);	
		\draw[solid,thick, opacity=10](0,2.7) -- (0.67,2.7)--(1.34,2.7)--(2.01,2.7);		
		\draw[thick,dashed, opacity=10](2.01,2.7)--(2.68 ,3.37)--(3.35,3.37)--(4.02,2.7);	
		\draw[solid,thick, opacity=10] (4.02,2.7)--( 4.69,3.37);
		\draw[thick,dashed, opacity=10]( 4.69,3.37)--(5.36,2.03);
	\end{scope}	
	\node (rect) at (12,6){(ii) };	\node at (12, 11.75) {$\sn=(\underline{\mathbf{10}1011}101)$};

	\begin{scope}[shift={(0,0)}]
		\datavisualization [school book axes,
		visualize as scatter,
		x axis={unit vector=(0:0.4pt), label={$k$}},
		y axis={length=4cm,label={$nlc(R^k(\sn))$}},
		y axis={grid={  at={7} , minor={  style=densely dashed,at={1, 2, 3,4,5,6,8,9,10} }}},
		]
		data {
			x,y
			0,7
			1,8
			2,9
			3,10
			4,10
			5,10
			6,10
			7,4
			8,4
			9,4
			10,4
			11,5
			12,5
			13,6
		};
		\scalebox{0.4}{
			\fill [black] (canvas cs:x=0cm,y=7cm) circle (5pt);	
			\draw(-0.3,6.7) rectangle (0.3,7.3) ;
			\draw[solid,very thick, opacity=10](0,7) -- (1,8)--(2,9)--(3,10)--(4,10)--(5,10)--(6,10);		
			\draw[very thick,dashed, opacity=10](6,10)--(7,4)--(8,4)--(9,4)--(10,4)--(11,5)--(12,5)--(13,6);
	}\end{scope}
	\node (rect) at (3,-1){(iii) };		\node at (3.5, 4.65) {$\sn=(\underline{\mathbf{0}0000001}111000)$};

	\begin{scope}[shift={(9,0)}]		
		\datavisualization [school book axes,
		visualize as scatter,
		x axis={unit vector=(0:0.4pt), label={$k$}},
		y axis={length=4cm,label={$nlc(R^k(\sn))$}},
		y axis={grid={  at={7} , minor={  style=densely dashed,at={1, 2, 3,4,5,6,8,9,10} }}},
		]
		data {
			x,y
			0,7
			1,8
			2,9
			3,10
			4,3
			5,3
			6,4
			7,5
			8,5
			9,6
			10,7
			11,8
			12,5
			13,6
		};
		\scalebox{0.4}{
			\fill [black] (canvas cs:x=13.5cm,y=7cm) circle (4pt);
			\fill [black] (canvas cs:x=23.5cm,y=7cm) circle (4pt);
			\draw(13.2,6.7) rectangle (13.8,7.3) ;
			\draw[solid,very thick, opacity=10](13.5,7) -- (14.5,8)--(15.5,9)--(16.5,10);		
			\draw[very thick,dashed, opacity=10](16.5,10)--(17.5,3)--(18.5,3)--(19.5,4)--(20.5,5)--(21.5,5)--(22.5,6)--(23.5,7);
			\draw[solid,very thick, opacity=10] (23.5,7)--(24.5,8);
			\draw[very thick,dashed, opacity=10](24.5,8)--(25.5,5)--(26.5,6);
		}	
	\end{scope}	
	\node (rect) at (12,-1){(iv) };	 \node at (12.5, 4.65) {$\sn=(\underline{\mathbf{0010}0010000}010)$};
\end{tikzpicture}
\caption{The visualized description of Example \ref{ex1}
}\label{fig 5}
\end{figure}

The following example illustrates some definitions and results in this section.

\begin{example}\label{ex1}
Consider $n=9$ and a binary finite-length sequence $\textbf{s}_{9}=(000010010)$.
It is clear that  $\textbf{s}_{9}=\s_5s_{[5:9]}=(\underline{\mathbf{0}0001}0010)\in \mathcal{B}(9,4,1)$ as in Definition \ref{BB},
where $\s_{c+d}$ is underlined and $\s_d$ is in bold for $d=1$ and $c=4$.  By Definition \ref{def-t} we have $add(\textbf{s}_{9})=1$ since $s_{8}=0=s_0$ (which is displayed in bold) and $s_7\neq s_0$.	
Since $add(\s_9)=t=1<n-c-d=4$, from Proposition \ref{fini-shift-nlc} we have $ nlc(R(\sn))=c+1=5$ and $ nlc(R^k(\s_9))=c+t=5$ for $k=2,3, 4$.

Furthermore, among the shift sequences $R^k(\s_9)$ for $k = 0, 1, \dots, 8$,
only the sequences $\s_9$ and $R^5(\s_9)$ belong to $\mathcal{B}(9,4)$, indicating that $E(\s_9) = \{\s_9,R^5(\s_9)\}$ from Definition \ref{ES_Def}.
Observe that
$$\left(nlc(R^k(\s_9)): k = 0, 1, \dots, 8\,\right)=(4,5,5,5,5)\cup (4,5,6,3\,) = \mathbf{c}_0 \cup \mathbf{c}_1$$
where the values in $\textbf{c}_0 =\left(nlc(R^k(\s_9)): k= 0, 1, 2,3, 4\,\right) $ and  $\mathbf{c}_1
=(nlc(R^k(\s_9)): k = 5, 6,7, 8)$ vary as in Proposition \ref{fini-shift-nlc} and the analysis is as 
in Remark \ref{mark2}.
In addition,   since $add(R^5(\s_9)) = 2>add(\textbf{s}_9)=1$, the representative of $E(\s_9) $ is $R^5(\s_9)$, which belongs to $\mathcal{R}(9,4)$ as in Definition \ref{ES_Def}.

The above discussion for the sequence $\s_9$ is visualized in Figure \ref{fig 5} (i),
where the axises represent the values $k$ and the corresponding $nlc(R^k(\s_9))$, respectively.
In Figure \ref{fig 5} (i), the sequences in $E(\s_9) $ are displayed as black solid dots,
and the representative sequence of $E(\s_9) $ is marked with a rectangle. In addition,
the solid line shows the varying behavior of $nlc(R^k(\s_9))$ that can be explained by Proposition \ref{fini-shift-nlc} and
the dotted line shows the varying behavior of $nlc(R^k(\s_9))$ that does not exhibit a strong pattern.

Sequences $\sn\in \mathcal{B}(n, \n)$ may behave quite differently. To illustrate this, we further consider  three
sequences
$
\s_9=(\underline{\mathbf{10}1011}101)\in \mathcal{B}(9,4,2),  \s_{14}=(\underline{\mathbf{0}0000001}111000)\in \mathcal{B}(14,7,1)$ and $\s_{14}=(\underline{\mathbf{0010}0010000}010)\in \mathcal{B}(14,7,4)
$
where $\s_{c+d}$ is underlined and $\s_d$ is in bold. For these sequences, Figure \ref{fig 5} (ii), (iii), (iv) displays the sequences in $E(\sn)$, the representative sequences in $E(\sn)$,
and the varying behavior of $nlc(R^k(\sn))$ for $k=0,1,\dots, n-1$, respectively.
\end{example}

At the end of this section,
we summarize important notations in Table \ref{tab_1} for readers' convenience.

\begin{table}[!h]
\caption{Some necessary notations of this paper.}
\label{tab_1}
\label{distab}
\renewcommand{\arraystretch}{1.2}
\begin{center}	
	\begin{tabular}{|c|c| p{1cm}|}
		\hline
		notation & description  \\
		\hline
		$\textbf{s}_{n}$  & an aperiodic sequence $(s_0, s_1, \dots, s_{n-1})$  \\
		
		\hline
		$\textbf{s}_{[i:i+k]}$  & the subsequence $(s_i,\dots,s_{i+k-1})$ \\
		\hline
		$\s_{c+d}$ & $\s_{d}^{q} \, (s_{0},\dots, s_{r-1}, \overline{s}_{r})$ as in \eqref{c+d} \\
		\hline
		$\mathcal{B}(n,c,d)$  & the set $\{\,\textbf{s}_{c+d}\,\textbf{s}_{[c+d:n]}
		\,\}
		$ with
		$1 \leq c<n$, $d\leq \min\{n-c,\n\}$ \\
		\hline
		$\mathcal{B}(n,c)$  & $\bigcup^{\min\{n-c,\n\}}_{d=1}\mathcal{B}(n,c,d)$   \\
		\hline
		$add(\textbf{s}_{n})$  &  for $\textbf{s}_{n}\in \mathcal{B}(n,c,d)$, the integer $t$ such that \\ &
		$s_{n-1-i}=s_{(d-1-i)\,\text{mod}\,\,d}, \forall \, 0\leq i< t, $ and  $s_{n-1-t} \neq  s_{(d-1-t)\,\text{mod}\,\,d}$ \\
		\hline
		$\mathcal{B}_0(n,c)$	&  $\mathcal{B}_0(n,c)=\{\sn \in \mathcal{B}(n,c):  add(\sn)=0\}$ \\
		\hline
		$E(\sn)$ & $\{R^k(\sn):\, 0\leq k<n \} \cap \mathcal{B}(n,c)$ for $\sn\in \mathcal{B}(n,c)$
		\\
		\hline
		$\mathcal{R}(n,c)$& $\bigcup_{\sn\in \mathcal{B}(n,c)} \left\{\widetilde{\s}_n\in E(\sn) :\,  add(\widetilde{\s}_n) \geq add(\textbf{a}_n),   \forall\, \textbf{a}_n \in E(\sn) \right\}$
		\\
		\hline \hline
		$\textbf{s}_n^{\infty}$  & the periodic sequence $(\textbf{s}_{n})^{\infty}$ from $\textbf{s}_{n}$  \\
		\hline
		$\mathcal{P}(n,\omega)$  & the set of sequences with period $n$ and nonlinear complexity $\omega$    \\ \hline
		$\overline{S}
		$
		&   $\overline{S} = \{L^k(\sni)\,:\, \sni\in S, 0\leq k<n\} $			\\
		\hline
		$S_1 \cong S_2$	& $\overline{S}_1=\overline{S}_2$ for two sets $S_1$ and $S_2$ of periodic sequences
		\\
		\hline			
	\end{tabular}
\end{center}
\end{table}

\section{The nonlinear complexity of $\sn$ and $\sni$}\label{Sec3}
This section investigates the relation between the nonlinear complexity of periodic sequences and that of finite-length sequences in $\mathcal{B}(n,c)$.
Let $\mathcal{P}(n,\omega)$ denote the set of $n$-periodic binary sequences with nonlinear complexity $\omega$, where  $\lceil\log_{2}(n)\rceil \leq \omega\leq n-1$ as indicated by Lemma \ref{per-nlc} (i).
Below we first discuss some properties of subsequences of a periodic binary sequence $\textbf{s}_n^{\infty}$.

\begin{lem}\label{lem_s_i^i+k+w}
Given a sequence $\textbf{a}_n^{\infty}$  in $\mathcal{P}(n,\omega)$, suppose its left shift sequence $\textbf{s}_n^{\infty}=L^i(\textbf{a}_n^{\infty})$ has a companion pair $(\underline{\textbf{s}}_{0}, \calF^{d}({\underline{\textbf{s}}}_{0}))$ with $d\leq \n$, where $\underline{\textbf{s}}_{0}=(s_0,\dots,s_{\omega-1})$. Then,

\noindent{\rm (i)} $\textbf{s}_{\omega+d}=(s_0,\dots,s_{\omega+d-1})\in \mathcal{B}(\omega+d,\omega,d)$ and $nlc(\textbf{s}_{\omega+d})= \omega$ if $\omega \geq d$;

\noindent{\rm (ii)} when $\omega+d\leq n$, one has
$\textbf{s}_n=(s_0,\dots,s_{n-1}) \in \mathcal{B}(n,\omega,d)$ and  $nlc(\textbf{s}_{n})= \omega$ if $\omega \geq d$;

\noindent{\rm (iii)} when $\omega+d> n$, for any $0\leq j< \omega-1$ and $n_1 =(\omega+d) - j$, the subsequence
$\s_{[j:\omega+d]}=(s_j,\dots,s_{\omega+d-1}) \in \mathcal{B}(n_1,n_1-d,d)$.
\end{lem}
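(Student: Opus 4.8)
The plan is to read off, from the single hypothesis that $\sni$ carries the companion pair $(\underline{\textbf{s}}_0,\calF^d(\underline{\textbf{s}}_0))$, the index relations
$$ s_i=s_{i+d}\ (0\leq i\leq \omega-2),\qquad s_{\omega-1}\neq s_{\omega+d-1}, $$
read cyclically in $\sni$. Since $\calF^d(\underline{\textbf{s}}_0)=\widehat{\underline{\textbf{s}}_0}=(s_0,\dots,s_{\omega-2},\overline{s}_{\omega-1})$, these are exactly the conditions $(s_0,\dots,s_{c-2})=(s_d,\dots,s_{c+d-2})$ and $s_{c-1}\neq s_{c+d-1}$ that characterize the form \eqref{c+d} with $c=\omega$. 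Iterating the first relation gives $s_i=s_{i\bmod d}$ for all $0\leq i\leq \omega+d-2$, so the first $\omega+d-1$ symbols are $\s_d$ repeated and only the terminal symbol is flipped; this is the structural heart of all three parts. I will also use repeatedly that, because $nlc(\sni)=\omega$ forces an $\omega$-stage FSR generating a single cycle of period exactly $n$, the $n$ length-$\omega$ windows $\underline{\textbf{s}}_0,\dots,\underline{\textbf{s}}_{n-1}$ are pairwise distinct states.

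For (i), the relations above already place $\textbf{s}_{\omega+d}$ in the form \eqref{c+d} with $c=\omega$; it remains to verify that its length-$d$ prefix $\s_d$ is aperiodic, and this is the one genuinely delicate point. Suppose instead $\s_d$ had a proper period $d'\mid d$ with $d'<d$. Combining $s_i=s_{i\bmod d'}$ on the first $d$ symbols with $s_i=s_{i-d}$ (valid for $d\leq i\leq \omega-1$ by the companion relation, since $\omega\geq d$), and using $d'\mid d$, a short induction yields $s_i=s_{i\bmod d'}$ for all $0\leq i\leq \omega-1$, i.e.\ $\underline{\textbf{s}}_0$ is $d'$-periodic. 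A second application of the companion relation upgrades this to $s_{d'+k}=s_k$ for every $0\leq k\leq \omega-1$, that is $\underline{\textbf{s}}_{d'}=\underline{\textbf{s}}_0$ with $0<d'<n$, contradicting the distinctness of states. Hence $\s_d$ is aperiodic, so $\textbf{s}_{\omega+d}\in\mathcal{B}(\omega+d,\omega,d)$ (the bound $d\leq\lfloor(\omega+d)/2\rfloor$ holding because $\omega\geq d$), and Lemma \ref{aper-puanduan}, equivalently Corollary \ref{B_nlc}, gives $nlc(\textbf{s}_{\omega+d})=\omega$.

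For (ii) I extend the window to a full period. The prefix $\textbf{s}_{\omega+d}$ has already been shown to have the form \eqref{c+d}, and the tail $\s_{[\omega+d:n]}$ is unconstrained, so $\textbf{s}_n\in\mathcal{B}(n,\omega,d)$ once I note that $\textbf{s}_n$, being one fundamental period of $\sni$, is aperiodic, and that $d\leq\min\{n-\omega,\n\}$ since $\omega+d\leq n$ and $d\leq\n$. For the value, Corollary \ref{B_nlc} gives $nlc(\textbf{s}_n)\geq\omega$, while the general inequality $nlc(\sni)\geq nlc(\textbf{s}_n)$ recalled in the introduction gives $nlc(\textbf{s}_n)\leq\omega$, whence equality. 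For (iii), when $\omega+d>n$ the length-$(\omega+d)$ window overflows the period, so I instead start at position $j$ and work with $\s_{[j:\omega+d]}$ of length $n_1=\omega+d-j$, which satisfies $c+d=n_1$ for $c=\omega-j=n_1-d$. Restricting the index relations to indices $\geq j$ reproduces the form \eqref{c+d} for the shifted block, whose length-$d$ prefix is a cyclic rotation of $\s_d$ and hence aperiodic by (i); aperiodicity of the whole block then follows from Lemma \ref{lem1}(iii), since $c=\omega-j\geq d$ makes $nlc=c\geq\lfloor n_1/2\rfloor$. Thus $\s_{[j:\omega+d]}\in\mathcal{B}(n_1,n_1-d,d)$.

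The main obstacle throughout is exactly the aperiodicity of the length-$d$ prefix in (i): once membership in $\mathcal{B}$ is secured, the structural form \eqref{c+d} and the complexity values are routine bookkeeping via Lemma \ref{aper-puanduan}, Corollary \ref{B_nlc}, and the inequality $nlc(\sni)\geq nlc(\textbf{s}_n)$, whereas ruling out a shorter period of $\s_d$ is precisely where the distinctness of the $n$ states in the period-$n$ cycle is indispensable.
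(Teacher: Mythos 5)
Your proof is correct and follows essentially the same route as the paper's: read off the relations $s_i=s_{i+d}$ for $0\le i\le\omega-2$ and $s_{\omega-1}\ne s_{\omega+d-1}$ from the companion pair, recognize the form \eqref{c+d}, and invoke Lemma \ref{aper-puanduan}/Corollary \ref{B_nlc} together with $nlc(\sn)\le nlc(\sni)$. The only substantive difference is that you explicitly verify the aperiodicity of $\s_d$ (via the pairwise distinctness of the $n$ states of the period-$n$ cycle), a point the paper's proof passes over in silence when it asserts membership in $\mathcal{B}(\omega+d,\omega,d)$ ``directly'' from Definition \ref{BB}; this is a welcome tightening rather than a divergence.
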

\begin{proof}
{\rm (i)}
For the subsequence $\s_{\omega+d}$ that starts from $\underline{\textbf{s}}_0$ and ends with $\underline{\textbf{s}}_{d}=\calF^d(\underline{\textbf{s}}_0)=\widehat{\underline{\s}}_{0}$,
we have
$
(s_0, \dots, s_{\omega-2}) = (s_{d}, \dots, s_{\omega-2 + d }) $ and  $ s_{\omega-1}=s_{\omega-1+d}\oplus1.
$
Thus we can write $\s_{\omega+d}$ as
$$
\s_{\omega+d} =(s_0, \dots, s_{d-1}, s_{d}, \dots, s_{\omega+d-2}, \overline{s}_{\omega-1})= (\s_{d}^{q}\,(s_0, \dots, s_{r-1}, \overline{s}_{r})).
$$
The statement directly follows from
Definition \ref{BB}  and Lemma \ref{aper-puanduan}.

{\rm (ii)} If $\omega+d\leq n$, then $\textbf{s}_{\omega+d}$ is contained in $\textbf{s}_n$, namely, $$\textbf{s}_n=(s_0, \dots,s_{\omega+d-1},s_{\omega+d},\dots,s_{n-1})=\textbf{s}_{\omega+d}\,  \textbf{s}_{[\omega+d:n]}.$$
If $\omega \geq d$, then it implies $ nlc(\textbf{s}_n) \geq nlc(\textbf{s}_{\omega+d})= \omega$. Since $nlc(\textbf{s}_n)\leq nlc(\textbf{s}_n^{\infty})=\omega$, we have $nlc(\textbf{s}_n)=\omega$ and $\textbf{s}_n \in \mathcal{B}(n,\omega,d)$.

{\rm (iii)} For the case of $\omega+d> n$, since the subsequence $\textbf{s}_{\omega+d}$ belongs to $\mathcal{B}(\omega+d, \omega, d)$, it follows that
$s_{i_1} = s_{i_1+d} \text{ for } 0\leq i_1< \omega - 1 \text{ and } s_{\omega-1} = \overline{s}_{\omega-1 + d }.$
Hence for any $j \geq 0$, the subsequence $\s_{[j:\omega+d]}$ also satisfies the relation
\begin{equation}\label{Eq-Relation}
	s_{i_1} = s_{i_1+d} \ \text{ for }\  j\leq i_1 < \omega - 1 \ \text{ and } \ s_{\omega-1} = \overline{s}_{\omega-1 + d }.
\end{equation}
This implies that for any $j$ with $0\leq j< \omega-1$, the subsequence $\s_{[j:\omega+d]}$
has the form $$((s_j, \dots, s_{j+d-1})^{q}(s_j, \dots, s_{j+r-1}, \overline{s}_{j+r} ))$$ where $q = \lfloor \frac{\omega+d-j-1}{d}\rfloor$,
$r=(\omega+d-j-1)-qd$.
According to Definition \ref{BB},
the subsequence $\s_{[j:\omega+d]}$ belongs to $\mathcal{B}(n_1, n_1-d, d)$ with $n_1 = \omega+d - j$.
\end{proof}

With the introduced definitions, we present the main theorems of this paper below.
\begin{thm}\label{relation}
For any $\sni$ in $\mathcal{P}(n,\omega)$ and any integer $c$ with $\nsmall \leq c \leq \min\{\omega, \n+1\}$, there exists an integer $k$ such that $L^{k}(\textbf{s}_n) \in \mathcal{B}(n,c)$.
Furthermore,
$$\bigcup\limits_{\omega=c}^{n-1}\mathcal{P}(n,\omega)= \left\{(L^{k}(\textbf{s}_n))^{\infty} \,: \,\textbf{s}_{n}\in \mathcal{B}(n,c), \,  0\leq k<n \right\}.$$
\end{thm}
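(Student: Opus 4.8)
The plan is to deduce the displayed set equality from the first assertion (existence of a shift into $\mathcal{B}(n,c)$) together with two elementary observations, and then to concentrate the real work on that first assertion. For the inclusion ``$\supseteq$'', I would take any $\sn\in\mathcal{B}(n,c)$ and any $k$; since $\sn\in\mathcal{B}(n,c,d)$ for some $d\leq\min\{n-c,\n\}$, its prefix $\s_{c+d}$ exhibits a length-$(c-1)$ repeat with different successors inside $\sn$, so Corollary \ref{B_nlc} gives $nlc(\sn)\geq c$ and hence $nlc(\sni)\geq nlc(\sn)\geq c$; combined with $nlc(\sni)\leq n-1$ (Lemma \ref{per-nlc}(i)) and the shift-invariance of periodic nonlinear complexity (Lemma \ref{per-nlc}(ii)), the sequence $(L^k(\sn))^\infty$ lies in $\bigcup_{\omega=c}^{n-1}\mathcal{P}(n,\omega)$. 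For ``$\subseteq$'', given $\sni\in\mathcal{P}(n,\omega)$ with $\omega\geq c$, the first assertion yields $\an:=L^k(\sn)\in\mathcal{B}(n,c)$; rewriting $\sn=R^k(\an)=L^{n-k}(\an)$ then exhibits $\sni$ as $(L^{n-k}(\an))^\infty$ with $\an\in\mathcal{B}(n,c)$ and $0\leq n-k<n$, which is exactly the required form.

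The core is therefore the first assertion. First I would pass to a shift $\sni=L^i(\textbf{a}_n^{\infty})$ carrying a companion pair $(\underline{\s}_0,\calF^d(\underline{\s}_0))$ with $d\leq\n$, as guaranteed at the end of Section \ref{Sec2}. Lemma \ref{lem_s_i^i+k+w}(i) then places $\s_{\omega+d}\in\mathcal{B}(\omega+d,\omega,d)$, so the aperiodicity of the base block $\s_d$ is already built in and need not be argued separately. If $\omega+d\leq n$, Lemma \ref{lem_s_i^i+k+w}(ii) gives $\sn\in\mathcal{B}(n,\omega,d)$, and since here $d\leq n-\omega\leq n-c$, a single application of Lemma \ref{c-c+1} with the left shift by $\omega-c$ lands $L^{\omega-c}(\sn)\in\mathcal{B}(n,c,d)$; the desired $k$ is then obtained by composing this shift with $L^i$. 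If instead $\omega+d>n$, I would invoke Lemma \ref{lem_s_i^i+k+w}(iii) with $j=\omega-c$, which puts the length-$(c+d)$ window $\s_{[\omega-c:\omega+d]}$ into $\mathcal{B}(c+d,c,d)$; once $c+d\leq n$ is known, the first $c+d$ symbols of $L^{\omega-c}(\sn)$ have the required form and $d\leq\min\{n-c,\n\}$, so $L^{\omega-c}(\sn)\in\mathcal{B}(n,c,d)$, as wanted.

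The hard part will be verifying the spacing constraint $c+d\leq n$ (equivalently $d\leq n-c$) in this second case, where $c$ may be as large as $\n+1$ and $d$ as large as $\n$. For odd $n$ one has $c+d\leq(\n+1)+\n=n$ automatically, so no obstruction arises. For even $n$ the only dangerous configuration is $c=\n+1$ together with $d=\n=\tfrac{n}{2}$, and I would rule it out by a minimal-period argument: in the case $\omega+d>n$ the companion-pair repeat $(s_0,\dots,s_{\omega-2})=(s_{n/2},\dots,s_{\omega-2+n/2})$ has length $\omega-1\geq\tfrac{n}{2}$, so $s_{i}=s_{i+n/2}$ holds for all $0\leq i<\tfrac{n}{2}$, forcing $\sni$ to have period $\tfrac{n}{2}$; this contradicts $s_{\omega-1}\neq s_{\omega-1+n/2}$, since period $\tfrac{n}{2}$ would equate these two successors. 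Hence $d=\tfrac{n}{2}$ cannot occur when $\omega+d>n$, and $c+d\leq n$ in every case. A secondary point to check carefully is the normalization $d\leq\n$ (using the companion symmetry $(\underline{\textbf{a}}_{i+d},\calF^{n-d}(\underline{\textbf{a}}_{i+d}))$ when the raw spacing exceeds $\n$) and the fact that $\sni$ has exact period $n$, which is precisely what guarantees aperiodicity of the length-$n$ windows and validates the minimal-period argument above.
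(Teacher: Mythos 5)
Your proposal is correct and follows essentially the same route as the paper: normalize to a companion pair with spacing $d\leq\n$, split on $\omega+d\leq n$ versus $\omega+d>n$, use Lemma \ref{lem_s_i^i+k+w} and Lemma \ref{c-c+1} to place $L^{\omega-c}(\sn)$ in $\mathcal{B}(n,c,d)$, and rule out the only obstruction to $c+d\leq n$ (even $n$ with $c=\frac{n}{2}+1$, $d=\frac{n}{2}$) by a contradiction. The only cosmetic differences are that you shift directly by $\omega-c$ via Lemma \ref{lem_s_i^i+k+w}(iii) where the paper first passes through $\mathcal{B}(n,n-d,d)$, and your exceptional-case contradiction uses a minimal-period argument where the paper inspects one extra symbol to the left of the window; both are equivalent in substance.
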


On the other hand, given a sequence $\textbf{s}_{n}$ in $\mathcal{B}(n,c,d)$ with $c \geq \lfloor\frac{n}{2}\rfloor$ and $add(\sn)=t$,
its companion pairs are of the form $\s_{c+d}$ as in \eqref{c+d}.
This property leads to the statement in Proposition \ref{fini-shift-nlc} (i), which implies
\begin{equation*}
add(R^{k}(\textbf{s}_{n}))=add(\textbf{s}_{n})-k \
\text{ and }\   nlc(R^{k}(\textbf{s}_{n}))=nlc(\sn)+k
\end{equation*}  for any $1\leq k\leq \min\{t, n-c-d\}$.
This observation
results in a more explicit expression of the nonlinear complexity of $\sni$ from  its $n$-length subsequences in $\mathcal{R}(n,c)$.

\begin{thm}\label{c+t}
For $\textbf{s}_{n}$ in $\mathcal{R}(n,c)$ with $c\geq \n$,  one has $nlc(\textbf{s}_n^{\infty})=nlc(\sn)+add(\sn)$.
\end{thm}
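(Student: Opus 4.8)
The plan is to prove the two inequalities $nlc(\sni)\ge c+add(\sn)$ and $nlc(\sni)\le c+add(\sn)$ separately, writing $t=add(\sn)$ throughout and using that $\sn\in\mathcal{R}(n,c)\subseteq\mathcal{B}(n,c,d)$ for some spacing $d\le\min\{n-c,\n\}$ with $c\ge\n$. For the lower bound I would invoke Remark \ref{mark1.1}: it exhibits the length-$(c+d+t)$ subsequence $(s_{n-t},\dots,s_{n-1},s_0,\dots,s_{c+d-1})$ of $\sn^2$ as an element of $\mathcal{B}(c+d+t,c+t,d)$ of nonlinear complexity $c+t$. Hence $nlc(\sn^2)\ge c+t$, and since $nlc(\sni)=nlc(\sn^2)$ (the periodic complexity is read off two periods), this immediately gives $nlc(\sni)\ge c+t$.

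For the upper bound, set $\omega=nlc(\sni)$ and realise its longest companion pair explicitly. By Lemma \ref{lem} combined with the shift-invariance in Lemma \ref{per-nlc}, after a suitable left shift $\textbf{b}_n=L^i(\sn)$ the sequence $\textbf{b}_n^\infty=\sni$ carries a companion pair $(\underline{\textbf{b}}_0,\calF^{d'}(\underline{\textbf{b}}_0))$ of length $\omega$ with spacing $d'\le\n$; equivalently $b_{i_1}=b_{i_1+d'}$ for $0\le i_1\le\omega-2$ and $b_{\omega-1}\ne b_{\omega-1+d'}$, all indices modulo $n$. Because $c\ge\n$ and $d'\le\n$ we have $c+d'\le 2\n\le n$, so the length-$n$ window $\textbf{a}_n=\textbf{b}_{[\omega-c\,:\,\omega-c+n]}$ is a genuine cyclic shift of $\sn$ (note $\omega\ge c$ from the lower bound, so $\omega-c\ge0$), and its length-$(c+d')$ prefix $\textbf{b}_{[\omega-c\,:\,\omega+d']}$ belongs to $\mathcal{B}(c+d',c,d')$ by Lemma \ref{lem_s_i^i+k+w}. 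Consequently $\textbf{a}_n\in\mathcal{B}(n,c,d')\subseteq\mathcal{B}(n,c)$, i.e. $\textbf{a}_n\in E(\sn)$. A pleasant feature is that this window construction works uniformly, with no need to split according to whether $\omega+d'\le n$ or $\omega+d'>n$.

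The decisive step is then to bound $add(\textbf{a}_n)$ from below. Position $\omega-c$ of $\textbf{b}_n$ lies inside the good run $\{0,1,\dots,\omega-2\}$ of spacing $d'$, and the $\omega-c$ positions $0,1,\dots,\omega-c-1$ preceding it are all good; transporting these back into the tail of $\textbf{a}_n$ shows $a_{n-1-i}=a_{(d'-1-i)\bmod d'}$ for $0\le i<\omega-c$, which is precisely the defining condition in Definition \ref{def-t}, so $add(\textbf{a}_n)\ge\omega-c$. Since $\sn$ is a representative of its class, $t=add(\sn)\ge add(\textbf{a}_n)\ge\omega-c$, whence $\omega\le c+t$. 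Together with the lower bound this yields $nlc(\sni)=c+t=nlc(\sn)+add(\sn)$.

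I expect the main obstacle to be the index bookkeeping in the upper bound, on two fronts. First, one must justify that the window $\textbf{a}_n$ really begins its companion-pair block at offset $\omega-c$, which is exactly where the inequality $c+d'\le n$ (guaranteed by $c,d'\le\n$) is used; without it the prefix $\textbf{b}_{[\omega-c:\omega+d']}$ would overrun the window. Second, one must convert the $\omega-c$ good positions of $\textbf{b}_n$ into the precise congruences $a_{n-1-i}=a_{(d'-1-i)\bmod d'}$ so that they genuinely contribute to $add(\textbf{a}_n)$, carefully reducing indices both modulo $n$ and modulo $d'$. A minor preliminary point is that the companion pair attaining $\omega$ may always be taken with spacing $d'\le\n$, which holds because a spacing-$\delta$ good run and a spacing-$(n-\delta)$ good run in a periodic sequence have equal maximal lengths.
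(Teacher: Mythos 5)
Your overall strategy (two inequalities, lower bound via Remark \ref{mark1.1} and upper bound via exhibiting a window of $\sni$ lying in $E(\sn)$ with $add\geq \omega-c$) is sound, and the lower bound coincides with the paper's argument. The upper bound, however, has a genuine gap at exactly the step you flag as decisive: you assert $c+d'\leq n$ ``because $c\geq\n$ and $d'\leq\n$'', but $c\geq\n$ is a \emph{lower} bound on $c$, so the inference $c+d'\leq 2\n\leq n$ is a non sequitur (it would require $c\leq\n$). This inequality is not a formality. Definition \ref{BB} only defines $\mathcal{B}(n,c,d')$ for $d'\leq n-c$, so if the longest companion pair of $\sni$ has spacing $d'>n-c$, your companion-pair block $\textbf{b}_{[\omega-c:\omega+d']}$ of length $c+d'>n$ overruns the window, $\textbf{a}_n\notin\mathcal{B}(n,c,d')$, hence $\textbf{a}_n\notin E(\sn)$, and the crucial inequality $add(\sn)\geq add(\textbf{a}_n)$ --- the only place where representativity is used --- has no basis. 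Nothing in the hypotheses forces $d'\leq n-c$: here $d'$ is the spacing of the maximal companion pair of the \emph{periodic} sequence and need not equal the spacing $d$ of $\sn$, and the paper's own proof explicitly allows the natural window of that pair to sit at a complexity level $n-d'$ strictly below $c$ (this is precisely what Subcase (2.3) handles).

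Your construction does go through verbatim when $c=\n$ or $c=\lceil\frac{n}{2}\rceil$, since then $c+d'\leq \lceil\frac{n}{2}\rceil+\lfloor\frac{n}{2}\rfloor= n$, and these are the only values of $c$ used downstream in Theorem \ref{thm_core} (ii); in that restricted range your single-window argument is genuinely more direct than the paper's, which first places the window at level $n-d'$ (Case (2), via Lemma \ref{lem_s_i^i+k+w} (iii)) and then transports the quantity $nlc+add$ between levels through the three subcases $S_0$, $S_1$, $S_2$ using Lemma \ref{c-c+1}. To prove the theorem as stated, for all $c\geq\n$, you must either establish $d'\leq n-c$ under these hypotheses or add the paper's level-transport step: show $\omega=\max_{\textbf{u}_n}\left(nlc(\textbf{u}_n)+add(\textbf{u}_n)\right)$ over all shifts $\textbf{u}_n$ of $\sn$ lying in some $\mathcal{B}(n,c')$ with $c'\geq \frac{n}{2}$, and then that this maximum equals $c+add(\sn)$ by relating each level $c'$ back to $E(\sn)$.
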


\subsection{Proof of Theorem \ref{relation} with $ c\leq \n+1$}

\noindent\textbf{Proof of Theorem \ref{relation}.}
For the periodic sequence $\textbf{a}_n^{\infty}$ in $\mathcal{P}(n, \omega)$,
suppose its left shift sequence $\textbf{s}_n^{\infty}=L^i(\textbf{a}_n^{\infty})$ has a companion pair $(\underline{\textbf{s}}_{0}, \calF^{d}({\underline{\textbf{s}}}_{0}))$ with $d\leq \n$,
we shall show that $L^{\omega-c}(\sn)$ belongs to  $\mathcal{B}(n,c)$.
According to Lemma \ref{lem_s_i^i+k+w} (ii),
if $\omega+d\leq n$ then the subsequence $\sn$ belongs to $\mathcal{B}(n,\omega,d)$.
This together with Lemma \ref{c-c+1} implies that $L^{\omega-c}(\textbf{s}_n)\in \mathcal{B}(n,c,d)$;
if $\omega+d> n$, then letting $j=\omega+d-n $ we have $L^{j}(\textbf{s}_n)\in \mathcal{B}(n,n-d,d)$, where $n-d\geq \left\lceil\frac{n}{2}\right\rceil$.
We need to consider two cases: $n-d\geq c$ and $n-d< c$.
For the case that $n-d\geq c$,
by applying Lemma \ref{c-c+1}, one has $L^{j+(n-d-c)}(\s_n)=L^{\omega-c}(\textbf{s}_n)\in \mathcal{B}(n,c,d)$;
for the case that $n-d< c$, since $n-d\geq \lceil\frac{n}{2}\rceil$, we have $c=\frac{n}{2}+1$ and $d=\frac{n}{2}$,
indicating $L^{j}(\textbf{s}_n)\in \mathcal{B}(n,\frac{n}{2},\frac{n}{2})$.
That is to say,
$$L^{j}(\textbf{s}_n)=(s_{j},\dots,s_{j+n-1}) =\left((s_{j},\dots,s_{j+\frac{n}{2}-2},s_{j+\frac{n}{2}-1})(s_{j},\dots,s_{j+\frac{n}{2}-2},\overline{s}_{j+\frac{n}{2}-1})\right).$$
Due to $nlc(\textbf{s}_n^{\infty})= \omega\geq c= \frac{n}{2}+1$, the subsequence $\s_{[j-1:\omega+d]}$ belongs to $\mathcal{B}(n+1, n+1-d, d)$ with $d=\frac{n}{2}$.
Thus it follows from \eqref{Eq-Relation} that $s_{j-1} = s_{j-1+d}$ and $s_{j-1+(n-d)} \neq s_{j-1+(n-d)+d} = s_{j-1 + n} = s_{j-1}$.
This is a contradiction since $s_{j-1+d} = s_{j-1 + (n-d)}$ for $d=\frac{n}{2}$.
Hence one always has $L^{\omega-c}(\textbf{s}_n)\in \mathcal{B}(n,c)$, the first statement follows.

Therefore,
for any sequence $\textbf{a}_n^{\infty}$ in $\mathcal{P}(n,\omega)$ and any integer $c$ with $\nsmall \leq c\leq \min\{\omega,\n+1\}$, we have
$$\textbf{a}_n^{\infty}\in \{(L^k(\textbf{s}_n))^{\infty}: \textbf{s}_n\in \mathcal{B}(n,c), 0\leq k<n \}.$$
In addition, it is clear that
$\left\{(L^{k}(\textbf{s}_n))^{\infty} \,: \,\textbf{s}_{n}\in \mathcal{B}(n,c), \,  0\leq k<n \right\} \subset$
$\bigcup\limits_{\omega=c}^{n-1}\mathcal{P}(n, \omega)$.
Thus
for $\nsmall \leq c \leq \min\{\omega, \n+1\}$, one has
$$\quad\quad\quad\quad\quad\quad\quad\quad
\bigcup\limits_{\omega=c}^{n-1}\mathcal{P}(n,\omega)= \left\{(L^{k}(\textbf{s}_n))^{\infty} \,: \,\textbf{s}_{n}\in \mathcal{B}(n,c), \,  0\leq k<n \right\}. \quad  \quad\quad\quad\quad \quad \text{\hfill} \square$$

Theorem \ref{relation} presents a one-to-one correspondence between the finite-length sequences set and
a set of all periodic sequences with nonlinear complexity not less than $c$.
For each periodic sequence $\sni$ with nonlinear complexity $\omega \geq c$,
Theorem \ref{relation} indicates the structure of an its $n$-length subsequence $\textbf{a}_n$, which is in the form of \eqref{structure of finite}.
According  to the structure of finite-length sequences $\textbf{a}_n$, we can obtain all periodic sequences with nonlinear complexities not more than $\frac{n}{2}$ in Subsection \ref{Sec4.1}.

When $c=\n$ or $\n+1$, from Corollary \ref{B_nlc} the nonlinear complexity of the $n$-length subsequence $\textbf{a}_n$ can be determined, i.e. $nlc(\textbf{a}_n)=c$.
In what follows,
we shall determine the exact value of nonlinear complexity $\omega$ of periodic sequences in $\{(L^{k}(\textbf{s}_n))^{\infty} \,: \,\textbf{s}_{n}\in \mathcal{B}(n,c),
\,  0\leq k<n \}$, where $\omega$ belongs to the set $[c,n-1]$ and $c\geq \n$.
With the further analysis, we can obtain all periodic sequences with nonlinear complexities not less than $\frac{n}{2}$ in Subsection \ref{Sec4.2}.

\subsection{Proof of Theorem \ref{c+t} with $ c\geq \n$}
Recall from \eqref{Eq-Rnc} that
$$\mathcal{R}(n,c) = \bigcup_{\sn\in \mathcal{B}(n,c)} \left\{\widetilde{\s}_n\in E(\sn) :\,  add(\widetilde{\s}_n) \geq add(\textbf{a}_n),   \forall\, \textbf{a}_n \in E(\sn) \right\},$$
where $E(\sn) = \{L^k(\sn):\, 0\leq k <n \} \cap \mathcal{B}(n,c)$.
Note that $\mathcal{R}(n,c)$ contains all cyclic shift inequivalent sequences in $\mathcal{B}(n,c)$.
From Lemma \ref{per-nlc} (ii), it suffices to investigate the nonlinear complexity of the periodic sequence $\sni$ derived from $\sn$ in $\mathcal{R}(n,c)$.

We are now ready to prove Theorem \ref{c+t}, namely,
for $\textbf{s}_{n}$ in $\mathcal{R}(n,c)$ with $c\geq \n$,  one has $nlc(\textbf{s}_n^{\infty})=nlc(\sn)+add(\sn)$.

\noindent\textbf{Proof of Theorem \ref{c+t}.}
Suppose $nlc(\textbf{s}_n^{\infty})=\omega$ and $(\underline{\textbf{s}}_{i}, \,\calF^{d}({\underline{\textbf{s}}}_{i}))$ is a companion pair of $\textbf{s}_n^{\infty}$
with $1\leq d\leq \lfloor\frac{n}{2}\rfloor$.
For the convenience of readers,
without loss of generality we will simplify $i$ into $0$ in the following proof.
It allows us to consider the companion pair $(\underline{\textbf{s}}_{0}, \,\calF^{d}({\underline{\textbf{s}}}_{0}))$ of $\textbf{s}_n^{\infty}$.
Below we will show that $\omega$ can be represented as $\omega = nlc(\textbf{a}_n) + add(\textbf{a}_n)$ for a concrete $n$-length subsequence $\textbf{a}_n$ of $\sni$ derived from $(\underline{\textbf{s}}_{0}, \,\calF^{d}({\underline{\textbf{s}}}_{0}))$, and then prove that  $\omega = nlc(\textbf{s}_n) + add(\textbf{s}_n)$ for $\sn \in \mathcal{R}(n, c)$. We divide the discussion according to the value of $\omega+d$.

{\bf Case (1)}:
$\omega+d< n$. In this case, Lemma \ref{lem_s_i^i+k+w} (ii) implies $\sn \in \mathcal{B}(n, \omega, d)$.
Note that $add(\sn)=0$, otherwise it follows from Proposition \ref{fini-shift-nlc} (i) and $n-\omega-d \geq 1$ that the nonlinear complexity of $R(\sn)$ is $\omega+1$, which is larger than  $nlc(\textbf{s}_n^{\infty})=\omega$, a contradiction.
By Definition \ref{def-t}
and Lemma \ref{c-c+1},
we see that $\textbf{a}_n=L^{\omega-c}(\textbf{s}_n)\in \mathcal{B}(n,c,d)$ and $add(\textbf{a}_n)=\omega-c$, which implies
$$
\omega=c+add(\textbf{a}_n)=nlc(\textbf{a}_n)+add(\textbf{a}_n).
$$
From the fact that $\sn\in \mathcal{R}(n,c)$, we have $add(\sn)\geq add(\textbf{a}_n)=\omega-c$. On the other hand, taking $add(\sn)=t$, from Remark \ref{mark1.1}, there is a pair of identical $(c+t-1)$-tuple subsequences with different successors in $\sni$. This implies
$c+t\leq \omega$.
Thus in this case we have $add(\sn)=add(\textbf{a}_n)=\omega-c$.

\begin{figure}[H]
\centering
\centering
\scalebox{0.9}{		
	\begin{tikzpicture}
		\node  at (-0.9,0)  {  $ \s_{\omega+d} = (s_0,s_{1}, \dots,s_{j-1})  (s_{j},s_{j+1},\dots,s_{n},\dots,
			s_{\omega+d-1})  $ };
		\fill[gray, opacity=0.5] (-4.15,-0.3) rectangle (-1.3,0.3) ;
		\fill[gray, opacity=0.5] (1,-0.3) rectangle (3.75,0.3) ;
		\draw[<->,thick](-1.2,-0.38)--(3.75,-0.38) node[below] at (1.4,-0.4) {length=$n$};	
	\end{tikzpicture}
}
\caption{ The description of Case (2) in Theorem \ref{c+t}}\label{fig 6}
\end{figure}

{\bf Case (2)}: $\omega+d\geq n$. In this case, we know the subsequence $\s_{\omega+d} $ satisfies
$$
s_{i} = s_{i+d} \, \, \text{ for } \, \, 0\leq i \leq \omega-2  \, \, \text{ and }\, \, s_{i} = s_{i+n}\, \, \text{ for }\, \, i\geq 0.
$$
Take $j=\omega+d-n$ 
and 
let $\textbf{a}_n = L^j(\sn)$. 
Then $\textbf{a}_n=(a_0,a_1,\dots, a_{n-1})=(s_j, s_{j+1}, \dots, s_{j+n-1})$
$=(s_j, s_{j+1}, \dots, s_{\omega+d -1})$ in Figure \ref{fig 6}. It follows from Lemma \ref{lem_s_i^i+k+w} (iii) that $\textbf{a}_n \in \mathcal{B}(n,n-d,d)$.
Since $n-d \geq \lceil\frac{n}{2}\rceil$, $\textbf{a}_n$ has nonlinear complexity $n-d$ by Corollary \ref{B_nlc}. From the above equalities,
the sequence $\textbf{a}_n$ satisfies that for $0\leq j_1< j$,
$$
a_{(n-1)-j_1} = s_{(j+n-1)-j_1}= s_{(j-1)-j_1} = s_{(j-1)-j_1+d}=a_{(d-1)-j_1},
$$
where $j=\omega+d-n<d$ and $j-j_1\leq d-1\leq \omega-1$.
Since $s_{n-1}=s_{-1} \neq s_{d-1}$,
according to Definition \ref{def-t}, we have
$$add(L^j(\sn)) = add(\textbf{a}_n) = j=\omega+d-n = \omega - (n-d).$$
This implies $\omega=nlc(\textbf{a}_n)+add(\textbf{a}_n)$.

Consider the following set
$$S=\left\{L^{k}(\textbf{s}_{n})\, : \, L^{k}(\textbf{s}_{n})\in \mathcal{B}(n,c'), \ c'\geq  \frac{n}{2} \ \mbox{and} \ 0\leq k < n\right\}.$$
We see that $\omega$ can be represented as the form $nlc(\textbf{a}_n)+add(\textbf{a}_n)$ for $\textbf{a}_n=L^j(\sn)$, $j=\omega+d-n$, in $S$.
By Remark \ref{mark1.1} we know that for any $\textbf{u}_n \in S$, $\textbf{u}_n^{\infty}$ contains a pair of identical subsequences with length $nlc(\textbf{u}_n)+add(\textbf{u}_n)-1$ with different successors.
Thus we have $\omega=\max\limits_{\textbf{u}_n\in S} (nlc(\textbf{u}_n)+add(\textbf{u}_n))$.
In the following we shall show $\omega=c+t$ for three subcases.

{\bf Subcase (2.1)}:
Consider the set
$$S_{0}=\{L^{k}(\textbf{s}_{n})\, : \,  \,  0\leq k < n\} \cap \mathcal{B}(n,c)=E(\textbf{s}_{n}).$$
It follows that $\max\limits_{\textbf{u}_n\in S_{0}} (nlc(\textbf{u}_n)+add(\textbf{u}_n))=c+\max\limits_{\textbf{u}_n\in S_{0}}(add(\textbf{u}_n))=c+add(\textbf{s}_{n})= c+t$
since $\textbf{s}_{n}$ is a sequence representative of $E(\textbf{s}_{n})$ with maximal $add(\textbf{s}_{n})=t$ in the set $S_{0}$.

{\bf Subcase (2.2)}: Consider $S_{1}=\{L^{k}(\textbf{s}_{n})\, : \, L^{k}(\textbf{s}_{n})\in \mathcal{B}(n,c_{1}), \,  0\leq k < n\}$ with $c_1=c+t_1$ and $t_1 \geq 1$.
It follows that $c_{1} +\max\limits_{\textbf{u}_n\in S_{1}}(add(\textbf{u}_{n}))=c+t_{1} +\max\limits_{\textbf{u}_{n}\in S_{1}}(add(\textbf{u}_{n})) $.
As shown in Lemma \ref{c-c+1}, for each $\textbf{u}_{n} \in S_{1}$, one can get $L^{t_{1}}(\textbf{u}_{n})\in S_{0}$. In addition, since
\begin{align*}
S_{1}
&=\{L^{k}(\textbf{s}_{n})\, : \, L^{k}(\textbf{s}_{n})\in\mathcal{B}(n,c+t_{1}), \,  0\leq k < n\} \\
&=\{L^{k}(\textbf{s}_{n})\, : \, L^{k+t_{1}}(\textbf{s}_{n})\in\mathcal{B}(n,c), \, add(L^{k+t_{1}}(\textbf{s}_{n}))\geq t_{1}, \, 0\leq k < n\} \\
&=\{R^{t_{1}}(L^{k}(\textbf{s}_{n}))\, : \, L^{k}(\textbf{s}_{n})\in\mathcal{B}(n,c), \, add(L^{k}(\textbf{s}_{n}))\geq t_{1}, \, 0\leq k < n\},
\end{align*}
it follows that $\max\limits_{\textbf{u}_{n}\in S_{1}}(add(\textbf{u}_{n}))=\max\limits_{\textbf{u}_{n}\in S_{0}}(add(\textbf{u}_{n}))-t_{1}=t-t_{1}$.
This yields $\max\limits_{\textbf{u}_{n}\in S_{1}} (nlc(\textbf{u}_{n})+add(\textbf{u}_{n}))=c+t_{1}+(t-t_{1})=c+t$.

{\bf Subcase (2.3)}: Consider $S_{2}=\{L^{k}(\textbf{s}_{n})\, : \, L^{k}(\textbf{s}_{n})\in\mathcal{B}(n,c_{2}), \,  0\leq k < n\}$ with $\frac{n}{2} \leq c_2 < c$.
Take $c_{2}=c-t_{2}$ with $t_2 \geq 1$ . We shall investigate
$c-t_{2}+\max\limits_{\textbf{u}_{n}\in S_{2}}(add(\textbf{u}_{n}))$ for $S_2$.
For each $ \textbf{u}_{n}\in S_{0}$, it follows from Lemma \ref{c-c+1} that $L^{t_{2}}(\textbf{u}_{n})\in S_{2}$. In a similar manner, one has
\begin{align*}
S_{0}
&=\{L^{k}(\textbf{s}_{n})\, : \,  L^{k}(\textbf{s}_{n})\in\mathcal{B}(n,c), \,  0\leq k < n\} \\
&=\{L^{k}(\textbf{s}_{n})\, : \,  L^{k+t_{2}}(\textbf{s}_{n})\in\mathcal{B}(n,c_{2}),
\,  add(L^{k+t_{2}}(\textbf{s}_{n}))\geq t_{2}, \, 0\leq k < n\} \\
&=\{R^{t_{2}}(L^{k}(\textbf{s}_{n}))\, : \,L^{k}(\textbf{s}_{n})\in\mathcal{B}(n,c_{2}), \,  add(L^{k}(\textbf{s}_{n}))\geq t_{2}, \, 0\leq k< n\}.
\end{align*}
Then $\max\limits_{\textbf{u}_{n}\in S_{2}}(add(\textbf{u}_{n}))-t_{2}=\max\limits_{\textbf{u}_{n}\in S_{0}}(add(\textbf{u}_{n}))$.
This yields $ \max\limits_{\textbf{u}_{n}\in S_{2}} (nlc(\textbf{u}_{n})+add(\textbf{u}_{n}))=c-t_{2}+\max\limits_{\textbf{u}_{n}\in S_{2}}(add(\textbf{u}_{n}))=c-t_{2}+(t+t_{2})=c+t$.

Combining the above three subcases, we have $\omega=\max\limits_{\textbf{u}_{n}\in S} (nlc(\textbf{u}_{n})+add(\textbf{u}_{n}))=c+t$. The desired conclusion thus follows.
\hfill $\square$

\begin{remark}\label{Repredsentation}
With the condition in Theorem \ref{c+t}, it follows that each sequence $\textbf{a}_{n}$ in the set $E(\textbf{s}_{n})$ have $nlc(\textbf{a}_{n}^{\infty})=nlc(\sni)=nlc(\sn)+add(\sn)$ by Lemma \ref{per-nlc} (ii).
Equivalently, 		
for any sequence $\text{a}_n$ with nonlinear complexity  $c\geq \frac{n}{2}$, 
by Lemma \ref{lem unique}
it contains a unique companion pair $(\textbf{\underline{a}}_i, \widehat{{\underline{\textbf{a}}}_{i}})$.
This implies $L^{i}(\textbf{a}_n)\in\mathcal{B}(n,c)$. Hence we have $nlc(\textbf{a}_{n}^{\infty})=nlc(\sni)=nlc(\sn)+add(\sn)$, where $\sn$ is a representative of $E(\textbf{a}_n)$.
\end{remark}

Theorem \ref{c+t} gives a method to determine the value of the nonlinear complexity of periodic sequences from that of finite-length sequences. Based on Theorem \ref{c+t}, below we give a corollary of the value of $
\max\limits_{0\leq i <n} nlc(R^i(\sn))$.

\begin{cor}\label{cor condition}
For $\textbf{s}_{n}$ in $\mathcal{R}(n,c,d)$ with $c\geq \n$ and $add(\sn)=t$,  one has
$$
\max\limits_{0\leq i <n} nlc(R^i(\sn)) \begin{cases}
	=c+t, & \text{ if } t\leq n-c-d,\\
	\geq n-d, & \text{ if } t> n-c-d.\\
\end{cases}
$$
\end{cor}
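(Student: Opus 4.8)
The plan is to read this off directly from the two results already in hand: Theorem \ref{c+t}, which evaluates $nlc(\sni)$ exactly, and Proposition \ref{fini-shift-nlc} (i), which describes how $nlc$ climbs under successive right shifts. The only preliminary point I would record is that the families $\{R^i(\sn):0\le i<n\}$ and $\{L^i(\sn):0\le i<n\}$ are the same set, so $\max_{0\le i<n} nlc(R^i(\sn))=\max_{0\le i<n} nlc(L^i(\sn))$; this lets me apply the elementary bound $nlc(\sni)\ge \max_{0\le i<n} nlc(L^i(\sn))$ recalled in Section \ref{Sec2} to the right-shift maximum. Since $\sn\in\mathcal{R}(n,c,d)$ is both a representative (so Theorem \ref{c+t} applies) and a member of $\mathcal{B}(n,c,d)$ with $c\ge\n$ (so Proposition \ref{fini-shift-nlc} applies), both inputs are available.

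For the branch $t\le n-c-d$ I would prove the two inequalities of the claimed equality separately. The upper bound is immediate: the elementary bound gives $\max_{0\le i<n} nlc(R^i(\sn))\le nlc(\sni)$, and Theorem \ref{c+t} evaluates $nlc(\sni)=nlc(\sn)+add(\sn)=c+t$. For the reverse inequality I note that $\min\{t,n-c-d\}=t$ in this branch, so Proposition \ref{fini-shift-nlc} (i) with $k=t$ produces $R^t(\sn)\in\mathcal{B}(n,c+t,d)$ with $nlc(R^t(\sn))=c+t$, forcing the maximum to be at least $c+t$. The two bounds match, giving $\max_{0\le i<n} nlc(R^i(\sn))=c+t$; the degenerate value $t=0$ needs no extra work, since then $R^0(\sn)=\sn$ already realises $nlc(\sn)=c$.

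For the branch $t> n-c-d$ only the lower bound $\ge n-d$ is asserted, and I would obtain it from a single shift. Here $\min\{t,n-c-d\}=n-c-d$, so taking $k=n-c-d$ in Proposition \ref{fini-shift-nlc} (i) yields $R^{\,n-c-d}(\sn)\in\mathcal{B}(n,\,c+(n-c-d),\,d)=\mathcal{B}(n,n-d,d)$ with nonlinear complexity $c+(n-c-d)=n-d$. Hence $\max_{0\le i<n} nlc(R^i(\sn))\ge n-d$, and the edge case $n-c-d=0$ is covered by $R^0(\sn)=\sn$, whose complexity $c$ equals $n-d$.

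I do not expect a real obstacle, since the corollary only re-assembles Proposition \ref{fini-shift-nlc} and Theorem \ref{c+t}; the care needed is purely bookkeeping on the index ranges and the two degenerate values $t=0$ and $n-c-d=0$. The one point I would make explicit is why equality cannot be claimed in the second branch: by Remark \ref{mark1.1} the block forcing complexity $c+t$ is the extended companion sequence in $\mathcal{B}(c+d+t,c+t,d)$, of length $c+d+t$, which exceeds $n$ exactly when $t>n-c-d$; thus no single length-$n$ window of $\sni$ can attain the full value $nlc(\sni)=c+t$, and $R^{\,n-c-d}(\sn)$ is the largest such window, which is precisely what produces the weaker bound $n-d$.
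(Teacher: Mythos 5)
Your proposal is correct and follows essentially the same route as the paper's proof: the upper bound $\max_{0\leq i<n} nlc(R^i(\sn))\leq nlc(\sni)=c+t$ comes from Theorem \ref{c+t}, and the lower bounds come from applying Proposition \ref{fini-shift-nlc} (i) with $k=t$ in the first branch and $k=n-c-d$ in the second. Your extra bookkeeping on the degenerate cases $t=0$ and $n-c-d=0$ and the closing explanation of why equality fails in the second branch are harmless additions that the paper leaves implicit.
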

\begin{proof}
It follows from Theorem \ref{c+t} that for $\textbf{s}_{n}\in \mathcal{R}(n,c,d)$ with $add(\sn)=t$, we have  $nlc(\sni)=c+t$.
Hence $
\max\limits_{0\leq i <n} nlc(R^i(\sn)) \leq nlc(\sni)=c+t$.
According to Proposition \ref{fini-shift-nlc} (i),
if $ t\leq n-c-d$, i.e. $c+d+t\leq n$,
then $R^{t}(\sn) \in \mathcal{B}(n,c+t,d)$ with nonlinear complexity $c+t$, implying $\max\limits_{0\leq i <n} nlc(R^i(\sn)) =c+t=nlc(\sni)$;
if $ t> n-c-d$, i.e. $c+d+t> n$, then one has $R^{n-c-d}(\sn) \in \mathcal{B}(n,n-d,d)$ with nonlinear complexity $n-d$, thus $
\max\limits_{0\leq i <n} nlc(R^i(\sn)) 	\geq n-d$.
\end{proof}

Based on Corollary \ref{cor condition}, we can see that
for $\textbf{s}_{n}$ in $\mathcal{R}(n,c,d)$ with $c\geq \n$ and $add(\sn)=t$,
if $ t\leq n-c-d$, then $\max\limits_{0\leq i <n} nlc(R^i(\sn))=nlc(\sni)$.
That is to say, Corollary \ref{cor condition} gives a sufficient condition of
$\max\limits_{0\leq i <n} nlc(R^i(\sn))=nlc(\sni)$.
When $t> n-c-d$, numerical results for $n\leq 30$ indicate that $\max\limits_{0\leq i <n} nlc(R^i(\sn))=n-d$. However, the previous technique is not sufficient to
prove or disprove this equality. 
If the conjecture is true, then we can obtain a sufficient and necessary condition of
$\max\limits_{0\leq i <n} nlc(R^i(\sn))=nlc(\sni)$.
Here we propose an open problem below on this observation and cordially invite interested readers to attack this problem.

\begin{Open problem}
For $\textbf{s}_{n}$ in $\mathcal{R}(n,c,d)$ with $c\geq \n$ and $add(\sn)=t$,
is it true that
$	\max\limits_{0\leq i <n} nlc(R^i(\sn))=n-d$ when $t> n-c-d$?
\end{Open problem}

\section{Periodic sequences with prescribed nonlinear complexity}\label{Sec4}
Based on new theoretical results in Theorem \ref{relation} and Theorem \ref{c+t},
we shall give two algorithms to generate all periodic binary sequences in $\mathcal{P}(n,\omega)$ with any prescribed nonlinear complexity.
For every shift equivalence class $\{\textbf{s}_n^{\infty}, (L(\textbf{s}_n))^{\infty}, \dots, (L^{n-1}(\textbf{s}_n))^{\infty}\}$,
since each sequence in the class can generate the whole class,  it suffices to consider one sequence with certain property when generating the shift equivalence class.
Given a set $S$ of sequences with period $n$, we denote by $\overline{S}$ the union of all cyclic shift equivalence classes of sequences in $S$.
Then two sets $S_1$ and $S_2$ are deemed to be cyclic shift equivalent, denoted as $S_1 \cong S_2$,
if $\overline{S_1}=\overline{S_2}$.

\begin{thm}\label{thm_core}
Let $n$ and $\omega$ be two positive integers with $\lceil \log_{2}(n)\rceil \leq \omega\leq n-1$.
Let
$\mathcal{B}(n,c)$ be given in \eqref{Eq-Bnc}, $\mathcal{B}_0(n,c)=\{\sn \in \mathcal{B}(n,c):  add(\sn)=0\}$
and $\mathcal{R}(n,c)$ be defined by \eqref{Eq-Rnc}, respectively.
Then we have
\begin{equation*}
	\mathcal{P}(n,\omega)
	\cong \{\sni: \sn\in S(n,\omega)\}
\end{equation*}
where $S(n,\omega)$ is given as follows,

\noindent{\rm (i)} if $\omega\leq \frac{n}{2}$, then
$S(n,\omega)= \{\sn \in\mathcal{B}_0(n,\omega):    L^{k}(\sn) \notin \mathcal{B}(n,\omega+1),  0< k<n   \};$

\noindent{\rm (ii)}
if $\omega\geq \frac{n}{2}$, then $S(n,\omega)= \{  \sn \in \mathcal{R}(n,c) : \, add(\sn)=\omega-c  \}$, where $c=\lceil\frac{n}{2}\rceil$.
\end{thm}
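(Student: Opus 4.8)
The plan is to obtain both cases by feeding the already-proved Theorem~\ref{relation} and Theorem~\ref{c+t} into the shift-closure bookkeeping encoded by $\cong$. Since $\mathcal{P}(n,\omega)$ is itself closed under $L$ by Lemma~\ref{per-nlc}(ii), it suffices to prove the two inclusions $\overline{\{\sni:\sn\in S(n,\omega)\}}\subseteq\mathcal{P}(n,\omega)$ and $\mathcal{P}(n,\omega)\subseteq\overline{\{\sni:\sn\in S(n,\omega)\}}$, and I will treat the regimes $\omega\ge\n$ and $\omega\le\n$ separately.

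For part (ii) I take $c=\nup\ge\n$ and argue directly with Theorem~\ref{c+t}. If $\sn\in\mathcal{R}(n,c)$ with $add(\sn)=\omega-c$, then $nlc(\sn)=c$ by Corollary~\ref{B_nlc}, so $nlc(\sni)=nlc(\sn)+add(\sn)=\omega$, giving the forward inclusion. Conversely, for $\sni\in\mathcal{P}(n,\omega)$ I apply the first assertion of Theorem~\ref{relation} with this $c$ (its hypotheses $\nsmall\le c\le\min\{\omega,\n+1\}$ hold because $\omega\ge\n$) to find a shift $\an=L^{k}(\sn)\in\mathcal{B}(n,c)$, pass to a representative $\widetilde{\s}_n$ of $E(\an)$, and use Theorem~\ref{c+t} once more to read off $add(\widetilde{\s}_n)=\omega-c$, i.e. $\widetilde{\s}_n\in S(n,\omega)$ with $\widetilde{\s}_n^{\infty}$ shift-equivalent to $\sni$. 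This closes part (ii).

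For part (i), with $\nsmall\le\omega\le\n$, the key move is to write $\mathcal{P}(n,\omega)$ as a set difference and compare the theorem's $S(n,\omega)$ with the ``clean'' candidate $S'(n,\omega)=\{\sn\in\mathcal{B}(n,\omega): L^{k}(\sn)\notin\mathcal{B}(n,\omega+1)\text{ for all }0\le k<n\}$. Applying the second assertion of Theorem~\ref{relation} at levels $c=\omega$ and $c=\omega+1$ (both admissible, as $\omega+1\le\n+1$) identifies $\bigcup_{\omega'\ge\omega}\mathcal{P}(n,\omega')$ and $\bigcup_{\omega'\ge\omega+1}\mathcal{P}(n,\omega')$ with the shift-closures of the periodic sequences built from $\mathcal{B}(n,\omega)$ and $\mathcal{B}(n,\omega+1)$; subtracting gives $\overline{\{\sni:\sn\in S'(n,\omega)\}}=\mathcal{P}(n,\omega)$ at once. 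It then remains to prove $S(n,\omega)=S'(n,\omega)$. The containment $S'(n,\omega)\subseteq S(n,\omega)$ is clean: if $\sn\in S'(n,\omega)\cap\mathcal{B}(n,\omega,\widetilde d)$ had $add(\sn)\ge1$, then the right-shift interpretation of Lemma~\ref{c-c+1} would put $R(\sn)=L^{n-1}(\sn)\in\mathcal{B}(n,\omega+1,\widetilde d)$ (the relation $s_{n-1}=s_{\widetilde d-1}$ supplies the extra matching coordinate while the period block only rotates cyclically and stays aperiodic; the borderline $\omega+\widetilde d=n$ forces $add(\sn)=0$ anyway), contradicting $\sn\in S'(n,\omega)$. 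Hence $add(\sn)=0$, so $\sn\in\mathcal{B}_0(n,\omega)$ and $\sn\in S(n,\omega)$. Combined with $\mathcal{P}(n,\omega)=\overline{\{\sni:\sn\in S'(n,\omega)\}}$, this already yields the reverse inclusion $\mathcal{P}(n,\omega)\subseteq\overline{\{\sni:\sn\in S(n,\omega)\}}$.

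The main obstacle is the forward inclusion, which amounts to the remaining containment $S(n,\omega)\subseteq S'(n,\omega)$: the condition defining $S(n,\omega)$ only forbids $L^{k}(\sn)\in\mathcal{B}(n,\omega+1)$ for $0<k<n$, so I must rule out the borderline overlap $\sn\in\mathcal{B}_0(n,\omega,\widetilde d)\cap\mathcal{B}(n,\omega+1,d')$ in which \emph{no} nonzero shift lies in $\mathcal{B}(n,\omega+1)$. Equivalently, I must show that a branching length-$\omega$ window of $\sni$ witnessing $nlc(\sni)\ge\omega+1$ cannot occur only at position $0$, since a second branching occurrence, or a second branching window, would anchor a nonzero shift in $\mathcal{B}(n,\omega+1)$ and contradict the defining condition. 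I expect to settle this by combining the aperiodicity of the blocks $\s_{\widetilde d}$ and $\s_{d'}$ (a Fine--Wilf type argument forces a nontrivial common period, hence periodicity of an aperiodic block, once $\omega$ is not too small relative to $\widetilde d$ and $d'$) with the standing hypothesis $\omega\ge\nsmall$, which guarantees $\mathcal{P}(n,\omega)\neq\emptyset$ and is where I anticipate the genuinely delicate single-branch case analysis to concentrate. Once $S(n,\omega)=S'(n,\omega)$ is established, part (i) follows.
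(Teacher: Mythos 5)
Your part (ii) is complete and is essentially the paper's own argument: Theorem \ref{relation} with $c=\nup$ identifies $\bigcup_{\omega'\geq c}\mathcal{P}(n,\omega')$ with the shift-closure of $\{\sni:\sn\in\mathcal{R}(n,c)\}$, and Theorem \ref{c+t} then sorts representatives by $add(\sn)=\omega-c$. Part (i) also follows the paper's route: the set difference of the two applications of Theorem \ref{relation} at levels $\omega$ and $\omega+1$, followed by the observation that $add(\sn)\geq 1$ would push $R(\sn)=L^{n-1}(\sn)$ into $\mathcal{B}(n,\omega+1,d)$ (with the borderline $\omega+d=n$ correctly dismissed because it forces $add(\sn)=0$).

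However, part (i) of your proposal is not finished. The containment $S(n,\omega)\subseteq S'(n,\omega)$ --- equivalently, that no $\sn\in\mathcal{B}_0(n,\omega)$ all of whose \emph{nonzero} shifts avoid $\mathcal{B}(n,\omega+1)$ can itself lie in $\mathcal{B}(n,\omega+1)$ --- is precisely what legitimizes the condition ``$0<k<n$'' in the statement, and you only announce a plan for it (``I expect to settle this by combining\ldots''). No proof is given, and the tool you propose is not obviously adequate: the two periodicity constraints coming from $\mathcal{B}(n,\omega,d)$ and $\mathcal{B}(n,\omega+1,d')$ only overlap on a prefix of length $\omega+\min\{d-1,d'\}$, which in the regime $\omega\leq \frac{n}{2}$ is generally below the Fine--Wilf threshold $d+d'-\gcd(d,d')$. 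Indeed the double membership you want to exclude does occur: $(00100011)$ lies in both $\mathcal{B}(8,3,3)$ and $\mathcal{B}(8,4,4)$; it is rescued only because its $d=3$ reading has $add=1$, so that $L^{7}$ lands in $\mathcal{B}(8,4,3)$ and the ``$0<k<n$'' test already excludes it. Any complete argument would have to show that such a rescue always happens, which is exactly the ``delicate single-branch case analysis'' you defer. To be fair, the paper's own proof is silent on this point as well --- it passes from the set difference directly to the ``$0<k<n$'' description without comment --- so you have correctly isolated the one step in part (i) that genuinely needs an argument; but isolating it is not the same as closing it, and as written your proof of part (i) has a gap at exactly that step.
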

\begin{proof}
(i) According to Theorem \ref{relation}, for any integer $c$ with $\nsmall \leq c \leq \min\{\omega, \n+1\}$, we have
$$\bigcup\limits_{\omega=c}^{n-1}\mathcal{P}(n,\omega)= \left\{(L^{k}(\textbf{s}_n))^{\infty} \,: \,\textbf{s}_{n}\in \mathcal{B}(n,c), \,  0\leq k<n \right\}.$$
Then for $\lceil \log_{2}n \rceil \leq \omega\leq \n$, we can express
\begin{align*}
	&\mathcal{P}(n,\omega)\\
	=	&\bigcup\limits_{\omega_{1}=\omega}^{n-1}\mathcal{P}(n,\omega_{1}) \Big\backslash \bigcup\limits_{\omega_{2}=\omega+1}^{n-1}\mathcal{P}(n,\omega_{2})\\
	=&\left\{(L^{k}(\textbf{s}_n))^{\infty} \,:\,\textbf{s}_{n}\in \mathcal{B}(n,\omega), \,  0\leq k<n \right\} \Big\backslash\left\{(L^{k}(\textbf{s}_n))^{\infty} \,: \,\textbf{s}_{n}\in \mathcal{B}(n,\omega+1), \,  0\leq k<n \right\}\\
	\cong&\left\{\textbf{s}_n^{\infty} \,: \sn \in\mathcal{B}(n,\omega), L^{k}(\sn) \notin \mathcal{B}(n,\omega+1),  0< k<n   \right\}.
\end{align*}
For $\sn$ in the set $\left\{ \sn \in\mathcal{B}(n,\omega) : L^{k}(\sn) \notin \mathcal{B}(n,\omega+1),  0< k<n   \right\}$, we have $s_{n-1} \neq s_{d-1} $, that is to say $add(\sn)=0$, which implies  $\sn \in\mathcal{B}_0(n,\omega) $.
Thus for $\lceil \log_{2}(n) \rceil\leq \omega\leq \lfloor\frac{n}{2}\rfloor$,
$$
\mathcal{P}(n, \omega)\cong \left\{\textbf{s}_n^{\infty} \,: \textbf{s}_{n}\in \mathcal{S}(n, \omega) \right\},
$$ where
$\mathcal{S}(n,\omega) = \{ \sn \in\mathcal{B}_0(n,\omega) : L^{k}(\sn) \notin \mathcal{B}(n,\omega+1),  0< k<n   \}$.
As a result, the desired statement of Theorem \ref{thm_core} (i) follows.

(ii) Again by Theorem \ref{relation}, let $c= \nup$ and $\omega\geq c$, we have
$$\bigcup\limits_{\omega=c}^{n-1}\mathcal{P}(n,\omega)= \left\{(L^{k}(\textbf{s}_n))^{\infty} \,: \,\textbf{s}_{n}\in \mathcal{B}(n,c), \,  0\leq k<n \right\}\cong \{ \sni: \sn \in \mathcal{R}(n,c)\}$$
since $\mathcal{R}(n,c)$ contains all cyclic shift inequivalent sequences in $\mathcal{B}(n,c)$.
Then by Theorem \ref{c+t} we have
$\mathcal{P}(n,\omega)\cong \{ \sni: \sn \in \mathcal{R}(n,c) : \, add(\sn)=\omega-c\}
= \{\textbf{s}_n^{\infty} \ : \  \textbf{s}_{n}\in S(n,\omega)\}.$
\end{proof}

Based on Theorem \ref{thm_core} (i) and (ii),  the following two subsections are dedicated to the generation of periodic binary sequences with any given nonlinear complexity.

\subsection{Periodic sequences with nonlinear complexity $\leq \frac{n}{2}$}\label{Sec4.1}

\begin{algorithm}
\small
\caption{Generation of all periodic binary sequences in $ \mathcal{P}(n,\omega)$ with $\omega\leq \frac{n}{2}$ }\label{alg:period-small}
\begin{algorithmic}[1]	
	\State    INPUT: Given two global variables $n$ and $\omega$ with $ \lceil \log_{2}(n)\rceil  \leq \omega\leq \n$.
	\State    OUTPUT: The set $\mathcal{P}(n,\omega)$.
	
	\State\textbf{Main Algorithm}
	\State $\mathcal{B}_0(n,\omega)\gets$  \texttt{genB}($n,\omega$)
	\State $\mathcal{B}(n,\omega+1)\gets$ \texttt{genB}($n,\omega+1$)
	\State $\mathcal{S}(n,\omega)= \{ \sn \in\mathcal{B}_0(n,\omega) : L^{k}(\sn) \notin \mathcal{B}(n,\omega+1),  0< k<n   \}$  
	\State  $\mathcal{P}(n, \omega)=\left\{(L^k(\textbf{s}_n))^{\infty} \,: \textbf{s}_{n}\in \mathcal{S}(n, \omega), 0\leq k<n \right\}$	
	\Function {$\mathtt{genB}$}{$n,c$} \hfill // Generate $\mathcal{B}_0(n,\omega)$ and $\mathcal{B}(n,\omega+1)$  
	\For {$d=1$ to $\n$}
	\State $\mathcal{B}(n,c,d)\gets\emptyset$   
	\While {$(a_{0},a_1,\dots,a_{d-1})\in\mathbb{Z}_2^{d}$ is aperiodic}
	\For {$i=0$ to $c+d-2$}
	\State $s_{i}\gets a_{i\, \text{mod}\,\,d}$
	\EndFor
	\State $s_{c+d-1}\gets a_{(c+d-1)\,\text{mod}\,\,d\,}\oplus1$
	\For{$(s_{c+d},\dots, s_{n-2})\in\mathbb{Z}_2^{n-c-d-1}$}
	\State $\textbf{s}_{n-1}\gets (s_0, \dots, s_{c+d-1}, s_{c+d}, \dots, s_{n-2})$
	\If{$c=\omega$}
	\State $\textbf{s}_{n}\gets (\s_{n-1}, \overline{s}_{d-1})$
	\Else
	\State $\textbf{s}_{n}\gets (\s_{n-1}, s_{n-1})$ with $s_{n-1} \in \mathbb{Z}_2$
	\EndIf
	\If{$c\geq \lfloor\frac{n}{2}\rfloor$ or $n$ is prime}  \hfill// Conditions that $\sn$ is aperiodic
	\State Add the sequence $\textbf{s}_{n}$ to $\mathcal{B}(n,c,d)$
	\ElsIf{$\textbf{s}_{n}$ is aperiodic}
	\State Add the sequence $\textbf{s}_{n}$ to $\mathcal{B}(n,c,d)$
	\EndIf
	\EndFor
	\EndWhile
	\EndFor	
	\State \Return $\bigcup_{1 \leq d \leq \n} \mathcal{B}(n,c,d)$ 
	\EndFunction
	
\end{algorithmic}
\end{algorithm}
In this subsection, all periodic sequences in $\mathcal{P}(n,\omega)$ with $\lceil \log_{2}(n) \rceil \leq \omega\leq \lfloor\frac{n}{2}\rfloor$ can be generated.
By summarizing up the preceding analysis, we propose an algorithm to generate all periodic binary sequences in $\mathcal{P}(n,\omega)$ for any $\lceil \log_{2}(n)\rceil\leq \omega \leq \n$.

In Algorithm \ref{alg:period-small}, we provide detailed steps to generate the set $\mathcal{B}(n, c)$ for any $1\leq c<n$.
The function $\mathtt{genB}(n,\omega+1)$ in Algorithm \ref{alg:period-small} for $\omega+1 \geq \frac{n}{2}$ is similar to the algorithm in \cite{liang}.
For $\lceil \log_{2}(n)\rceil \leq\omega \leq \n$, Algorithm \ref{alg:period-small} generates all periodic sequences in $\mathcal{P}(n,\omega)$ based on Theorem \ref{thm_core} (i).

We now consider the complexity of Algorithm \ref{alg:period-small}, where Steps 4, 5, 6 dominate the complexity.
In generating $\mathcal{B}_0(n, \omega)$, for each $1\leq d\leq \lfloor\frac{n}{2}\rfloor$ the loops on $(a_0, \dots, a_{d-1})$ and  $(s_{\omega+d}, \dots, s_{n-2})$  contribute the time and memory complexity $O(2^{n-\omega-1})$,
implying Step 4 has complexity $O(n2^{n-\omega-2})$ as $d$ ranges from $1$ to $\lfloor\frac{n}{2}\rfloor$.  Similarly Step 5 has complexity $O(n2^{n-\omega-2})$.
Step 6 generates the set
\begin{align*}
\mathcal{S}(n,\omega)=& \{ \sn \in\mathcal{B}_0(n,\omega) : L^{k}(\sn) \notin \mathcal{B}(n,\omega+1),  0< k<n   \}\\
=& \left\{ L^{k}(\textbf{s}_n)\,:\,\textbf{s}_{n}\in \mathcal{B}_0(n,\omega), \,  0\leq k<n \right\} \Big\backslash\mathcal{B}(n,\omega+1).
\end{align*}
By using the data structure of hash table,  this step will have time and memory complexity $O(\min\{ n|\mathcal{B}_0(n,\omega)|, |\mathcal{B}(n,\omega+1)| \})$, where
$|S|$ is denoted as the size of a set $S$.  Since  $|\mathcal{B}_0(n,\omega)| < 2^{n-\omega-1}$, the complexity of Algorithm \ref{alg:period-small} can be obtained as
$O(n2^{n-\omega-2}) +O(n2^{n-\omega-2}) + O( n|\mathcal{B}_0(n,\omega)|  ) \approx
O(n2^{n-\omega-1})$.

On the other hand, when one uses a brute-force approach to generating sequences of given nonlinear complexities,
for each $n$-periodic binary sequence, it is required to calculate its nonlinear complexity with algorithms like the ones proposed in \cite{JansenPhD, LKK2}, which have complexity $O(n^{2} \log_2(n))$.
On average, we may consider the complexity of exhaustive search for all periodic sequences in $\mathcal{P}(n,\omega)$ is $\frac{2^{n}O(n^{2} \log_2(n))}{n} \approx O(n\log_2(n)2^n)$.
This indicates that Algorithm \ref{alg:period-small} has an advantage of factor $2^{\omega+1}\log_2(n)$ over an exhaustive search for periodic binary sequences in $\mathcal{P}(n,\omega)$.
It is apparent that such an advantage becomes significant as $\omega$ increases.

Below we provide an example to illustrate the process and result of Algorithm \ref{alg:period-small}.

\begin{example}
Take $n=7$ and $\omega = 3$. By the function $\mathtt{genB}(n,c)$ in Algorithm \ref{alg:period-small}, we generate the set  $\mathcal{B}_0(n, \omega)$ consisting of $18$ binary sequences of length $7$.
Among these sequences, these twelve sequences
$( 0001101)$, $( 0110100)$,$(1010001)$,
$(0001011)$,$(0101100)$, $( 1011000 )$,
$(1110100)$,  $(1010011)$,  $(0100111)$,
$(1110010)$,$(1001011)$,$(0101110)$,
satisfy the property that all their cyclic shift equivalent sequences (except for themselves) are not contained in $\mathcal{B}(n, \omega+1)$. That
is to say, the above twelve sequences form the set $\mathcal{S}(n, \omega)$. On the other hand, by exhaustive search we obtain the following sequences in $\mathcal{P}(n,\omega)$:
$$
\begin{array}{c}
	( 0001101)
	(0011010 )
	( 0110100 )
	(110100 0 )
	( 10100 0 1 )
	( 0100 0 1 1 )
	( 100 0 1 1 0  )\\
	
	( 0001011)
	(0010110 )
	( 010110 0 )
	(10110 0  0 )
	( 0110 0  01 )
	( 110 0  010 )
	( 10 0  0101  )\\	
	
	( 1110100)
	(1101001 )
	( 101001 1 )
	(01001 11 )
	( 1001 110 )
	( 001 110 1 )
	( 01 110 1 0  )\\		
	
	( 1110010 )
	( 110010  1  )
	( 10010  11  )
	(0010  11 1)
	(010  11 10)
	(10  11 100)
	(0  11 1001)\\
	
\end{array}
$$
It is easily seen that $\mathcal{P}(n,\omega)$ can be obtained by applying circular shift operations on sequences in $\mathcal{S}(n, \omega)$, i.e., $\mathcal{P}(n, \omega) = \{(L^k(\sn))^\infty: \, \sn \in \mathcal{S}(n, \omega), \, 0\leq k<n\}$.	
\end{example}

\subsubsection{Generation of binary de Bruijn sequences}
Here we consider a particular case of $n=2^{m}$ and $\omega=m$,
which corresponds to the generation of binary de Bruijn sequences of order $m$.
In the literature, graphical, algorithmic and algebraic approaches have been proposed to generate binary de Bruijn sequences \cite{Fredricksen_1982_, Helleseth-Li}.
The graphical approach, known as the BEST theorem for de Bruijn, Ehrenfest, Smith and
Tutte, showed that there are in total $2^{2^{m-1}-m}$ binary de Bruijn sequences of order $m$ and they can be derived in an inductive manner \cite{Fredricksen_1982_}.
Algorithmic and algebraic approaches can be used to generate some de Bruijn sequences. However, to the best of our knowledge,
only the graphical approach can constructively generate all de Bruijn sequences of order $m$.

For the case of $n=2^m$ and $\omega = m$, we can adjust some steps in Algorithm \ref{alg:period-small} for better performance.
For a binary periodic sequence,
a run of $0$'s of length $k$ is a string of consecutive $k$ $0$'s flanked by 1 and a run of 1's of length $k$ is a string of consecutive $k$ $1$'s flanked by 0.
It is well-known that any binary de Bruijn sequence of order $m$ satisfies run properties as below \cite{Golomb_1980_}:
\begin{equation}\label{run-prop}
\begin{cases}
	\text{(i) } 2^{m-2-i}  \text{ runs of zeros, }  2^{m-2-i}  \text{ runs of ones of length }  i, \text{ for } 1 \leq i < m -1;\\
	\text{(ii)}   \text{ no run of zeros nor ones of length } m-1;\\
	\text{(iii)} \text{ a single run of }   m \text{ zeros and a single run of }  m  \text{ ones}. 	
\end{cases}
\end{equation}
From the above three properties, it is easily seen that
any binary de Bruijn sequence of order $m$ has Hamming weight $2^{m-1}$ and contains the subsequence $(1,\overbrace{0,\dots, 0}^{m},1)$, which has the companion pair $({0}^m, {0}^{m-1}1)$
with spacing $d=1$.
With this observation, we can fix the sequence $\s_{c+d}$ in Algorithm \ref{alg:period-small}  as $\s_{c+d}=\s_{m+1}=(\overbrace{0,\dots 0,}^{m}1)$ and $s_{n-1}=1$, and focus on only the binary sequences $(s_{m+1},\dots, s_{n-2})$ of Hamming weight $2^{m-1}-2$. Thus we can reduce $\mathcal{B}_0(n,\omega)$ to a smaller set
\begin{equation}\label{Eq_Bnm}
	 \widetilde{\mathcal{B}}_0(n,m) = \{\sn=(\overbrace{0,\dots, 0}^{m},1, s_{m+1},\dots, s_{n-2}, 1)\,:\, \sn \text{ satisfies \eqref{run-prop}}\},
\end{equation}
which is generated in Algorithm \ref{alg:period-m}.

\begin{algorithm}
	\small
	\caption{Generation of binary sequences in $ \widetilde{\mathcal{B}}_0  (2^m,m)$ }\label{alg:period-m}
	\begin{algorithmic}[1]
		\Function {$\mathtt{gen \widetilde{{B}}_0 }$}{$n,m$} \hfill // Generate $ \widetilde{\mathcal{B}}_0  (2^m,m)$
		\State $\textbf{s}_{m+1}\gets(0,0,\dots,0,1)$ and $ \widetilde{\mathcal{B}}_0(n,m) \gets\emptyset$
		\For{$(s_{m+1},\dots, s_{n-2})\in\mathbb{Z}_2^{n-m-2}$ with Hamming weight $2^{m-1}-2$}
		\If{$\textbf{s}_{n}=({0}^{m}1, s_{m+1},\dots, s_{n-2}, 1)$ satisfies the run properties \eqref{run-prop}}
		\State Add the sequence $\textbf{s}_{n}$ to $ \widetilde{\mathcal{B}}_0(n,m) $
		\EndIf
		\EndFor	
		\EndFunction
	\end{algorithmic}
\end{algorithm}

\begin{cor}\label{cor3}
Let  $n=2^{m}$ with a positive integer $m$ and let $ \widetilde{\mathcal{B}}_0(n,m)  $ be as in \eqref{Eq_Bnm}. Then,
we obtain de Bruijn sequences of order $m$ as follows,
$$\mathcal{P}(n, m)=\left\{(L^k(\textbf{s}_n))^{\infty}: \textbf{s}_{n}\in \mathcal{S}(n, m), 0\leq k<n \right\}$$
where
$\mathcal{S}(n,m)= \{ \sn \in \widetilde{\mathcal{B}}_0(n,m)  : L^{k}(\sn) \notin \mathcal{B}(n,m+1),  0< k<n   \}$
and $| \widetilde{\mathcal{B}}_0(n,m)  |=\frac{1}{2^{2m-2}} ( \frac{2^{m-2}}{ 2^{m-3}, \, 2^{m-4}, \dots,   2^{1}, \,2^{0} })^2$ with $(\frac{ M}{ m_1,\, m_2, \dots, m_k})
=\frac{ M!}{ m_1!\, m_2! \, \dots m_k !}$.
\end{cor}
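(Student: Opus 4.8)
The plan is to read off both claims from Theorem \ref{thm_core}(i) specialized to $n=2^m$, $\omega=m$, once $\mathcal{P}(2^m,m)$ has been identified with the order-$m$ de Bruijn sequences. First I would note that a period-$2^m$ binary sequence has nonlinear complexity $m$ precisely when it is produced by an $m$-stage FSR whose state cycle attains the maximal length $2^m$; such a cycle runs through every binary $m$-tuple exactly once, so these are exactly the de Bruijn sequences of order $m$, and Lemma \ref{per-nlc}(i) guarantees no shorter register works since $nlc(\sni)\ge\lceil\log_2 n\rceil=m$. Hence $\mathcal{P}(2^m,m)$ is the set of de Bruijn sequences, and Theorem \ref{thm_core}(i) already gives $\mathcal{P}(n,m)\cong\{\sni:\sn\in S(n,m)\}$ with $S(n,m)=\{\sn\in\mathcal{B}_0(n,m):L^k(\sn)\notin\mathcal{B}(n,m+1),\,0<k<n\}$.

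Next I would show that $\mathcal{B}_0(n,m)$ may be shrunk to $\widetilde{\mathcal{B}}_0(n,m)$ without changing the generated shift classes. Since $\widetilde{\mathcal{B}}_0(n,m)\subseteq\mathcal{B}_0(n,m)$, the corollary's $\mathcal{S}(n,m)$ is contained in $S(n,m)$, giving $\{\sni:\sn\in\mathcal{S}(n,m)\}\subseteq\mathcal{P}(n,m)$. For the reverse I would invoke the run properties \eqref{run-prop}: a de Bruijn sequence has a single run of $m$ zeros, so among its shifts there is a unique $\sn$ of the form $(0^m,1,s_{m+1},\dots,s_{n-2},1)$; this $\sn$ inherits \eqref{run-prop}, hence lies in $\widetilde{\mathcal{B}}_0(n,m)$, and its companion pair $(0^m,0^{m-1}1)$ has spacing $d=1$ with $s_{n-1}=1\neq s_0$, so $\sn\in\mathcal{B}(n,m,1)$ and $add(\sn)=0$. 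Because $nlc(\sni)=m$, Theorem \ref{relation} rules out $L^k(\sn)\in\mathcal{B}(n,m+1)$ for every $k$ (such membership would force $nlc(\sni)\ge m+1$); thus $\sn\in\mathcal{S}(n,m)$, and its left shifts recover the full class. Combining the inclusions yields the first displayed equality.

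For the cardinality I would build a bijection between $\widetilde{\mathcal{B}}_0(n,m)$ and ordered pairs of run-arrangements. Every member is an alternating string of maximal runs $Z_1O_1\cdots Z_RO_R$ with $R=2^{m-2}$ runs of each parity; the fixed prefix pins $Z_1=0^m$ as the unique length-$m$ zero-run and forces the terminal run $O_R$ to be a one-run, while \eqref{run-prop} prescribes the multiplicities $2^{m-2-i}$ of the length-$i$ runs for $1\le i\le m-2$, together with a single length-$m$ one-run. Any ordering of the $2^{m-2}-1$ free zero-runs paired with any ordering of the $2^{m-2}$ one-runs determines exactly one sequence of $\widetilde{\mathcal{B}}_0(n,m)$, and conversely; counting these orderings gives
\begin{equation*}
|\widetilde{\mathcal{B}}_0(n,m)|=\frac{(2^{m-2}-1)!}{\prod_{i=1}^{m-2}(2^{m-2-i})!}\cdot\frac{(2^{m-2})!}{\prod_{i=1}^{m-2}(2^{m-2-i})!},
\end{equation*}
which collapses into a normalizing power of two times $\left(\frac{2^{m-2}}{2^{m-3},\dots,2^0}\right)^2$, the factor appearing because the two orderings run over $2^{m-2}-1$ and $2^{m-2}$ objects and $(2^{m-2})!/(2^{m-2}-1)!=2^{m-2}$.

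The main obstacle I anticipate is precisely this normalization in the last step: the length-$m$ zero-run is pinned while the length-$m$ one-run is free, so the two multinomials carry different top arguments and must be reconciled into a single squared coefficient, which is what fixes the exact power of two. A subsidiary point is to confirm that every sequence counted is genuinely aperiodic — this is automatic, since a single run of $m$ zeros cannot be compatible with any proper period dividing $2^m$ — so that $\widetilde{\mathcal{B}}_0(n,m)$ introduces no spurious periodic sequences into the enumeration.
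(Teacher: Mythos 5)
Your treatment of the set equality follows essentially the same route as the paper: identify $\mathcal{P}(2^m,m)$ with the order-$m$ de Bruijn sequences, use the run properties \eqref{run-prop} to pin the unique shift of the form $(0^m,1,\dots,1)$ and thereby shrink $\mathcal{B}_0(n,m)$ to $\widetilde{\mathcal{B}}_0(n,m)$, then invoke Theorem \ref{thm_core}~(i). That part is fine. For the cardinality you genuinely diverge from the paper: the paper deletes one zero from the length-$m$ zero run to set up a bijection with run sequences of period $2^m-1$ and then simply cites Kim--Song for their count, whereas you count directly by arranging the maximal runs $Z_1O_1\cdots Z_RO_R$ with $Z_1=0^m$ anchored. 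Your bijection is sound (the run multiplicities sum to $R=2^{m-2}$ runs of each parity and the total length works out to $2^m$), and it has the advantage of being self-contained.

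The one genuine gap is the final step you wave at as ``collapses into a normalizing power of two.'' Carry it out: since $(2^{m-2}-1)!=(2^{m-2})!/2^{m-2}$, your product equals
\begin{equation*}
\frac{1}{2^{m-2}}\left(\frac{(2^{m-2})!}{\prod_{i=1}^{m-2}(2^{m-2-i})!}\right)^{2},
\end{equation*}
with normalizing factor $2^{m-2}$, not the $2^{2m-2}$ printed in the corollary. These are not equal for any $m\geq 1$, so as a proof of the statement as written your argument does not close. You should have noticed this: for $m=4$ your count gives $\frac{1}{4}\cdot 12^2=36$ and for $m=5$ it gives $\frac{1}{8}\cdot 840^2=88200$, which are exactly the sizes the paper itself reports for $\widetilde{\mathcal{B}}_0(16,4)$ and $\widetilde{\mathcal{B}}_0(32,5)$, whereas the printed formula evaluates to $144/64=9/4$ for $m=4$. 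So your combinatorics is correct and the discrepancy is a typo in the exponent of the stated formula, but a complete proof must either verify the printed normalization or flag the mismatch explicitly rather than leaving ``the exact power of two'' unresolved.
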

\begin{proof}
According to the above analysis, we can
reduce $\mathcal{B}_0(n,\omega)$ to $ \widetilde{\mathcal{B}}_0(n,m)  $ in \eqref{Eq_Bnm}.
Thus
$\mathcal{P}(n, m)$ can be obtained immediately from Theorem \ref{thm_core} (i).
In the following we determine the
size of $ \widetilde{\mathcal{B}}_0(n,m) $.
A binary sequence of period $2^m-1$ having the same run distribution as $m$-sequences of order $m$ is called as a run sequence.
Note that sequences in $ \widetilde{\mathcal{B}}_0(n,m)  $ are in one-to-one correspondence with all binary run sequences of period $2^m-1$.
For each sequence in $ \widetilde{\mathcal{B}}_0(n,m)  $,
by deleting one $0$ in the longest $m$ run of $0$'s, we obtain a run sequence of period $2^m-1$.
Hence, the number of $ \widetilde{\mathcal{B}}_0(n,m)  $ is the same as the number of shift inequivalent run sequences, from   \cite{Kim_2023_} which is equal to $ \frac{1}{2^{2m-2}} ( \frac{2^{m-2}}{ 2^{m-3}, \, 2^{m-4}, \dots,   2^{1}, \,2^{0} })^2$.
\end{proof}

\begin{example}
When $m=4$,
Algorithm \ref{alg:period-m}
generates $ \widetilde{\mathcal{B}}_0  (16, 4)$ with size $36$, then we filter these $36$ shift inequivalent sequences by $\mathcal{B}(16, 5)$, thus obtaining
all $16$ shift inequivalent sequences in
$\mathcal{P}(16, 4)$.
And when $m=5$,
Algorithm \ref{alg:period-m}
generates $ \widetilde{\mathcal{B}}_0  (32, 5)$ with size $88200$, then we filter these shift inequivalent sequences by $\mathcal{B}(32, 6)$, which yields
all $2048$ shift inequivalent sequences in
$\mathcal{P}(32, 5)$.
\end{example}

\subsection{Periodic sequences with  nonlinear complexity $\geq \frac{n}{2}$}\label{Sec4.2}
Recall from Theorem \ref{thm_core} (ii) that,  for $\omega\geq \frac{n}{2}$,
$$\mathcal{P}(n, \omega)\cong \left\{\textbf{s}_n^{\infty} \,: \textbf{s}_{n}\in \mathcal{S}(n, \omega) \right\},$$ where
$\mathcal{S}(n,\omega)$ for $c= \nup$ is given by $S(n,\omega)= \{  \sn \in \mathcal{R}(n,c) : \, add(\sn)=\omega-c  \}.$
This result also holds for $c=\n$ and can be proved similarly. Moreover,  the characterizations on sequences $\mathcal{R}(n, \lceil \frac{n}{2}\rceil)$ in Subsection 5.2.1 can
be similarly made for $c=\n$ but with several tedious cases to be discussed. We therefore only present relevant results for the case of $c=\lceil \frac{n}{2}\rceil$.

To illustrate the result of Theorem \ref{thm_core} (ii), we first present a toy example below.
\begin{example}
Take an example for the case $n=8$ and $c=4$.
All finite-length sequences in $\mathcal{B}(8,4)$ can be obtained from the function $\mathtt{genB}(n,c)$ in Algorithm \ref{alg:period-small},
or from the algorithm in \cite{liang} by letting $k_1=0$.  We first group $\mathcal{B}(8,4)$ into shift equivalence classes. For each shift equivalence class in $\mathcal{B}(8,4)$, we determine the sequence representatives according to Definition \ref{ES_Def}, thereby obtain the set $\mathcal{R}(8,4)$. In this way we get $\bigcup\limits_{\omega=4}^{7}\mathcal{P}(8,\omega)\cong \{ \sni: \sn \in \mathcal{R}(8,4)\}$.
By computing the number of added terms of each sequence in $\mathcal{R}(8,4)$, we can determine the nonlinear complexity of the corresponding periodic sequences from Theorem \ref{c+t}.
Table 2 lists all binary sequences with period $8$ and nonlinear complexity $\omega$, $\omega\geq 4$, up to shift equivalence obtained in this way.
The result is consistent with the exhaustive search presented in \cite[Table~3.2]{JansenPhD}.

We carry out experiments for $n$ up to 40, and they all confirm that  Theorem \ref{thm_core} (ii) is consistent with exhaustive search. 	
\begin{table}[!ht]\label{tabbb2}
	\scriptsize
	\caption{Binary sequences of period $8$ with nonlinear complexity $\omega$.}
	\begin{center}
		\begin{tabular}{|c|c|c|c|c| p{1cm}|}
			\hline
			$\mathcal{B}(8,4)$ & $\mathcal{R}(8,4)$ & $add(\textbf{s}_{n})$ & $\cong\mathcal{P}(8,4+add(\textbf{s}_{n}))$ \\ 
			\hline
			\hline
			(00100011), (10010001), (00110010)  & (00100011), (10010001), (00110010) & \multirow{10}{*}{0} &$(00100011)^{\infty}$  \\
			\cline{1-2}
			(10011000), (00100110), (10001001) &(10011000), (00100110), (10001001) &   & $(10011000)^{\infty}$ \\
			\cline{1-2}
			(01100111), (11011001), (01110110) &(01100111), (11011001), (01110110) &   & $(01100111)^{\infty}$ \\
			\cline{1-2}
			(11011100), (01101110), (11001101) &(11011100), (11011001), (01110110) &   & $(11011100)^{\infty}$ \\
			\cline{1-2}
			(11110000), (00001111)&(11110000), (00001111)&   & $(11110000)^{\infty}$ \\
			\cline{1-2}
			(10110100), (01001011)&(10110100), (01001011)&   & $(10110100)^{\infty}$ \\
			\cline{1-2}
			(00001101)&(00001101)&   & $(00001101)^{\infty}$ \\
			\cline{1-2}
			(00001011)&(00001011)&   & $(00001011)^{\infty}$\\
			\cline{1-2}
			(11110100)&(11110100)&   & $(11110100)^{\infty}$\\
			\cline{1-2}
			(11110100)&(11110100)&   & $(11110100)^{\infty}$\\
			\hline
			(01010000), (00001010)  & (00001010) & \multirow{8}{*}{1} &$(00001010)^{\infty}$\\
			\cline{1-2}
			(10101111), (11110101) &(11110101)&   & $(11110101)^{\infty}$\\
			\cline{1-2}
			(00001110)&(00001110)&   & $(00001110)^{\infty}$\\
			\cline{1-2}
			(11110001)&(11110001)&   & $(11110001)^{\infty}$\\
			\cline{1-2}
			(10101100)&(10101100)&   & $(10101100)^{\infty}$\\
			\cline{1-2}
			(01010011)&(01010011)&   & $(01010011)^{\infty}$\\
			\cline{1-2}
			(11011000)&(11011000)&   & $(11011000)^{\infty}$\\
			\cline{1-2}
			(00100111)&(00100111)&   & $(00100111)^{\infty}$\\
			\hline
			(00001001), (10010000)  & (10010000) & \multirow{6}{*}{2} &$(10010000)^{\infty}$\\
			\cline{1-2}
			(01000101), (01010001) &(01010001)&   & $(01010001)^{\infty}$\\
			\cline{1-2}
			(10111010), (10101110) &(10101110)&   & $(10101110)^{\infty}$\\
			\cline{1-2}
			(11110110), (01101111) &(01101111)&   & $(01101111)^{\infty}$\\
			\cline{1-2}
			(00001100)&(00001100)&   & $(00001100)^{\infty}$\\
			\cline{1-2}
			(11110011)&(11110011)&   & $(11110011)^{\infty}$\\
			\hline
			(00010000), (00001000)  & (00001000) & \multirow{4}{*}{3} &$(00001000)^{\infty}$\\
			\cline{1-2}
			(01010010), (01001010) &(01001010)&   & $(01001010)^{\infty}$\\
			\cline{1-2}
			(10101101), (10110101) &(10110101)&   & $(10110101)^{\infty}$\\
			\cline{1-2}
			(11101111), (11110111) &(11110111)&   & $(11110111)^{\infty}$\\
			\hline
		\end{tabular}
	\end{center}
\end{table}
\end{example}

The previous section discussed the generation of binary de Bruijn sequences, which is a particular case for $\omega \leq \frac{n}{2}$.
Below we discuss another particular case where the prescribed nonlinear complexity $\omega$ achieves the maximum value $n-1$.

\begin{remark}
It was shown in \cite{SZLH} that sequences in $\mathcal{P}(n,n-1)$ can be
generated from a recursive approach by applying the Euclidean algorithm on $n$ and certain positive integers.
Similarly, the set $\mathcal{P}(n, n-2)$ was later completely characterized in \cite{Xiao2018}.
In this paper, Theorem \ref{thm_core} (ii) can produce all sequences in $\mathcal{P}(n, \omega)$ with $\frac{n}{2} \leq \omega \leq n-1$.
Below we discuss the connection between Theorem \ref{thm_core} (ii)  and the work in \cite{SZLH}. Its connection with the work in \cite{Xiao2018} is similar and thus not included here.

For the case $\omega = n-1$ and a sequence $\sn \in \mathcal{R}(n, \nup)$ with $spac(\sn)=d$, namely,
$$
\textbf{s}_{n}
=(\obr{\textbf{s}_d}{q}\, (s_{0}, \dots, s_{r-1}, \overline{s}_{r})\, \s_{[c+d:n]})
=(\textbf{s}_d^{q}\, (s_{0}, \dots, s_{r-1}, \overline{s}_{r})\, \s_{[c+d:n]}),
$$
suppose $add(\sn)=t=\omega-c= \n- 1$, then
$
s_{n-1-i}=s_{(d-1-i)\,\text{mod}\,\,d}\,\, \text{ for } 0\leq i< \n - 1.
$
and $
s_{n-\n}\neq s_{d-\n}.
$
When $\omega+d  \leq n$, i.e., $d=1$, the sequence $\sn = (s_0^c\, \overline{s}_0 \, s_0^t)$. Then
$$
\textbf{a}_n = R^t(\sn)=(s_0^{n-1}\,\overline{s}_0),
$$ which corresponds to the sequence in \cite[Theorem 1 (i)]{SZLH}. When $\omega+d   > n$, suppose $\s_{[i:i+\omega+d]}$ is the subsequence formed by the companion pair
$(\textbf{\underline{s}}_i, \underline{\textbf{s}}_{i+d})$.
Then $\textbf{a}'_n=L^i(\sn)$ has the companion pair
$(\textbf{\underline{s}}_0, \underline{\textbf{s}}_{d})$.			
As discussed in the proof of Theorem \ref{c+t}, we know the sequence $\textbf{a}_n=L^j(\textbf{a}'_n)$ with $j=(\omega+d)-n$
belongs to $\mathcal{B}(n, n-d, d)$ and has $add(\textbf{a}_n)=\omega-(n-d) = d-1$. This implies
that $\textbf{b}_n=R^{d-1}(\textbf{a}_n) $ satisfies $b_i=b_{i+d}$ for $i=0,1, \dots, n-d-1$ and has the form
$$\textbf{b}_n =(\obr{\textbf{b}_d}{q}\,(b_0,\dots, b_{r-1}))=( \textbf{b}_{d}^q \,(b_0,\dots, b_{r-1}))$$
where $r = n\,\text{mod}\,\,d$ and $q = \frac{n-r}{d}$. This corresponds to the sequences characterized in  \cite[Theorem 1 (ii)]{SZLH}. For instance, given a sequence $\sn=(010101001\mathbf{0101010})\in \mathcal{B}(16,8,7)$ with $add(\sn)=7$,
where the added terms are in bold,
we have $\textbf{a}_n =L^{15}(\sn)= (0010101001\mathbf{010101}) \in  \mathcal{B}(16,9,7)$ with $add(\textbf{a}_n)=6$. Then the sequence $\textbf{b}_n=R^{d-1}(\textbf{a}_n)=R^{6}(\textbf{a}_n) =(0101010010101001)$
has the form $(0101010)^2(01) = ((01)^30)^2(01)$, which is consistent with the instance given in \cite[Example 1]{SZLH}.	
\end{remark}

According to Theorem \ref{thm_core} (ii),
each periodic sequence with a prescribed nonlinear complexity $\omega \geq \frac{n}{2} $ can be
derived from its $n$-length subsequence $\sn$ in $\mathcal{R}(n, c)$ with a desired $add(\sn)=\omega-c$ with $c=\lceil \frac{n}{2}\rceil$.
It is thus of significant interest to further explore the structure of sequences in $\mathcal{R}(n, c)$.
The following subsection further discusses the sequence representatives in $\mathcal{R}(n, c)$.
After the discussion, we will propose Algorithm \ref{alg:period-large} to generate all binary sequences in $\mathcal{P}(n,\omega)$ with $\omega \geq \frac{n}{2}$.

\subsubsection{Characterizations on sequence representatives}\label{Sec4.2.1}	
For $\sn\in \mathcal{B}(n, c)$ with %
$c =\nup$
given in (\ref{structure of finite}),
this section further investigates the set $E(\sn)= \{L^{k}(\textbf{s}_{n})\,:\, 0\leq k<n \}\cap \mathcal{B}(n,c)$, which helps us generate sequence representatives in $\mathcal{B}(n,c)$ more efficiently.
In what follows, we shall first consider the necessary conditions such that both $\sn$ and $R^{h}(\sn)$ are contained in $\mathcal{B}(n, c)$.

\begin{lem}\label{chara_h}
For $\textbf{s}_{n}\in \mathcal{B}(n, c, d)$ with $c\geq \frac{n}{2}$, if $\textbf{s}_{n}$ has a shift equivalent sequence $\textbf{v}_n=R^{h}(\textbf{s}_{n})\in \mathcal{B}(n,c, d')$,
then $h$ satisfies at least one of inequalities: $n-(c+d)+d'\leq h <c+d'$ and  $n-(c+d)<h \leq c+d'-d$.
\end{lem}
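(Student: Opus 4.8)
The plan is to convert the cyclic shift into an index correspondence and then squeeze $h$ between two one-sided bounds, each obtained from the uniqueness of companion pairs in Lemma \ref{lem unique}. Writing $\textbf{v}_n = R^{h}(\textbf{s}_{n}) = L^{n-h}(\textbf{s}_{n})$, one has $v_i = s_{(i-h)\bmod n}$, equivalently $s_j = v_{(j+h)\bmod n}$. Since $\textbf{s}_{n}\in\mathcal{B}(n,c,d)$ and $\textbf{v}_n\in\mathcal{B}(n,c,d')$ with $c\geq \frac{n}{2}$, Corollary \ref{B_nlc} gives $nlc(\textbf{s}_n)=nlc(\textbf{v}_n)=c$, so Lemma \ref{lem unique} guarantees that \emph{each} of $\textbf{s}_n$ and $\textbf{v}_n$ has exactly one companion pair, i.e. one pair of identical length-$(c-1)$ blocks with different successors. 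By the form \eqref{c+d} these unique pairs sit at index $0$: in $\textbf{s}_n$ the pair occupies indices $0,\dots,c+d-1$ with spacing $d$, while in $\textbf{v}_n$ it occupies indices $0,\dots,c+d'-1$ with spacing $d'$ (equivalently, at $s$-indices $p,\dots,p+c+d'-1$ taken mod $n$, where $p=n-h$). I would reduce the entire statement to the single two-sided bound $n-(c+d)<h<c+d'$ and then deduce the two displayed inequalities from it.

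For the lower bound, suppose toward a contradiction that $1\leq h\leq n-(c+d)$. Then for every $j\in\{0,\dots,c+d-1\}$ we have $0<h\leq j+h\leq (c+d-1)+h\leq n-1$, so the companion pair of $\textbf{s}_n$ transfers verbatim, with no wraparound, to indices $h,\dots,h+c+d-1$ of $\textbf{v}_n$; that is, $\textbf{v}_n$ acquires a companion pair starting at index $h\geq 1$. Together with its own companion pair at index $0$, this produces two distinct companion pairs in $\textbf{v}_n$, contradicting the uniqueness from Lemma \ref{lem unique}. Hence $h>n-(c+d)$.

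The upper bound is symmetric. Assuming $c+d'\leq h\leq n-1$ gives $1\leq p=n-h\leq n-(c+d')$, so for every $i\in\{0,\dots,c+d'-1\}$ the index $i+p$ lies in $\{p,\dots,n-1\}$; thus the companion pair of $\textbf{v}_n$ transfers without wraparound to indices $p,\dots,p+c+d'-1$ of $\textbf{s}_n$, yielding a companion pair of $\textbf{s}_n$ at index $p\geq 1$ in addition to the one at index $0$. This again contradicts uniqueness, so $h<c+d'$.

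Combining the two bounds gives $n-(c+d)<h<c+d'$. To finish, I would check that, for integer $h$, this interval is exactly the union of the two ranges in the statement: the inequality $n-(c+d)<h\leq c+d'-d$ contributes the integers in $[\,n-(c+d)+1,\;c+d'-d\,]$, while $n-(c+d)+d'\leq h<c+d'$ contributes those in $[\,n-(c+d)+d',\;c+d'-1\,]$; the hypothesis $c\geq \frac{n}{2}$ forces $n-(c+d)+d'\leq c+d'-d$ (equivalently $n\leq 2c$), so the two ranges overlap with no gap and jointly cover every integer strictly between $n-(c+d)$ and $c+d'$. Consequently any admissible $h$ satisfies at least one of the two inequalities. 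The main obstacle I anticipate is purely bookkeeping: verifying that each ``transfer'' of a companion pair across the shift genuinely avoids wraparound in the asserted range of $h$, and confirming in both directions that the relevant finite sequence really meets $nlc=c\geq\frac{n}{2}$ so that Lemma \ref{lem unique} applies; once the indices are set up correctly the two uniqueness contradictions are short.
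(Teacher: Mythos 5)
Your proof is correct and follows essentially the same route as the paper: both establish the two-sided bound $n-(c+d)<h<c+d'$ by transferring the (wraparound-free) companion pair across the shift and invoking the uniqueness guaranteed by Lemma \ref{lem unique}, and both then use $c\geq\frac{n}{2}$ to show the two stated ranges jointly cover that interval (the paper via the count $b+b'\geq d+d'$, you via the equivalent observation that the two integer intervals overlap). The only cosmetic difference is that the paper phrases the contradiction as the unique companion pair failing to start at index $0$, whereas you phrase it as the existence of two distinct companion pairs; these are interchangeable.
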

\begin{proof}
For a sequence $\sn \in \mathcal{B}(n,c,d)$, we shall first prove that if $R^{h}(\sn)$ belongs to $\mathcal{B}(n,c, d')$, then the value $h$  satisfies $n-(c+d)< h < c+d'$.
Suppose that $h \leq n-c-d$. According to (\ref{structure of finite}),
$\sn \in \mathcal{B}(n, c, d)$ has the form
$$
\textbf{s}_{n}
=(\textbf{s}_d^{q}\,(s_{0}, \dots, s_{r-1},  \overline{s}_{r} )\, \s_{[c+d:n]}),
$$
where  $(q-1)d+r+1=c$. It is clear that the $c$-tuples
$(\textbf{s}_d^{q-1}\,(s_{0}, \dots, s_{r-1},  {s}_{r} ))$
and 
$(\textbf{s}_d^{q-1}\,(s_{0}, \dots, s_{r-1},  \overline{s}_{r} ))$
form a companion pair in $\sn$. Due to $h \leq n-(c+d)$, this companion pair is also contained in
the right cyclic shift sequence $	R^{h}(\textbf{s}_{n})$.
Since $nlc(R^{h}(\textbf{s}_{n}))=c\geq \frac{n}{2}$, 
Lemma \ref{lem unique} shows that
this companion pair is the unique one of $R^{h}(\textbf{s}_{n})$.
It contradicts the assumption that $R^{h}(\textbf{s}_{n})$ belongs to $\mathcal{B}(n,c)$. Thus we have $h> n-(c+d)$.
For the sequence $\textbf{v}_n = R^h(\sn)$ and $h'=n-h$,
similarly we have $h'=n-h > n-(c+d')$, implying $h<c+d'$. Thus the statement $n-(c+d)< h < c+d'$ holds.

Furthermore, by setting $b=h-(n-(c+d)))$ and $b'=h'-(n-(c+d'))$, we have $b+b' = (h+h') - (n-(c+d) + n - (c+d')) = (2c-n) + d +d' \geq d + d'$ since $c\geq \frac{n}{2}$.
Hence at least one of $b \geq d'$ and $b' \geq d$ holds.
If $b \geq d'$, then $n-(c+d)+d'\leq h < c+d'$;
if $b' \geq d$, then $ n-(c+d')+d \leq h'=n-h< c+d$ and thus $ n-(c+d)<h\leq (c+d') -d$.	
\end{proof}

In the proof of Lemma \ref{chara_h}, we see the symmetric relation between $\sn$ and $\textbf{v}_n=R^{h}(\textbf{s}_{n})$ in $\mathcal{B}(n, c)$.
If a pair of shift equivalent sequences $(\textbf{s}_{n}, R^{h}(\textbf{s}_{n}))$ in $\mathcal{B}(n,c)$ satisfies  $h\leq (c+d')-d$, then
$(\textbf{v}_n=R^{h}(\textbf{s}_{n}),R^{n-h}(\textbf{v}_n) )$ satisfies $h'=n-h\geq (n-(c+d'))+ d$ 
where $spac(\textbf{s}_n)=d$ and $spac(\textbf{v}_n)=d'$.
Therefore,
it suffices to consider the shift equivalent sequence $R^h(\sn) $ of each $\textbf{s}_{n}$ in $\mathcal{B}(n,c)$ with $ (n-(c+d))+ d' \leq h\leq c+d'-1$.

In the following,
for the pair of shift equivalent sequences $(\textbf{s}_{n},\textbf{v}_{n})$ in $\mathcal{B}(n, c)$,
Proposition \ref{arbitrary seq}
characterizes certain structure of the sequence $\sn$ in (i) and determines $\textbf{v}_{n}$ and $add(\textbf{v}_{n})$ in (ii),
which will help us find shift equivalent sequences and then determine the sequence representatives
by deleting sequences that are not sequence representatives in Algorithm \ref{alg:period-large}.

\begin{prop}\label{arbitrary seq}	
Suppose a sequence $\textbf{s}_{n}\in \mathcal{B}(n, c, d)$ with $c=\nup$  has a shift equivalent sequence $\textbf{v}_n=R^{h}(\textbf{s}_{n})\in \mathcal{B}(n,c, d')$, where $(n-(c+d))+ d' \leq h\leq c+d'-1$.
Then

\noindent{\rm (i)}
the subsequence $\textbf{s}_{[g:n]}$ of $\textbf{s}_{n}$ satisfies
\begin{equation}\label{s-n-sat}
	s_i = s_{i+d'} \ \text{ for } \ g\leq i < n-d',
\end{equation}
where $g=(c+d)-d'$ in the case of $d<n-c$, and $g=2d-d'-1$ in the case of $d=n-c$;				

\noindent{\rm (ii)} 	
let $r_1\leq g$ and $r_2\geq n-d'$.
Suppose
$s_i=s_{i+d'}
$ holds for any integer $r_1\leq i < r_2$ and $s_i\neq s_{i+d'}$ for $i=r_1-1, r_2$, where the subscripts are taken modulo $n$. Take $c'=r_2-r_1+1$.
If $c'\geq c$,  then $h= c + (n-r_2-1) $ and $add(\textbf{v}_n)=c'-c$.	
\end{prop}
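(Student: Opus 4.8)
The plan is to work entirely at the level of the \emph{spacing-$d'$ equalities} of $\textbf{s}_n^{\infty}$, i.e.\ the predicate ``$s_i=s_{i+d'}$'' with indices taken mod $n$, and to read off where these equalities are forced by the two companion structures available: the one of $\textbf{s}_n\in\mathcal{B}(n,c,d)$ (spacing $d$), giving $s_i=s_{i+d}$ for $0\le i\le c-2$, and the one transported from $\textbf{v}_n=R^{h}(\textbf{s}_n)\in\mathcal{B}(n,c,d')$ (spacing $d'$). First I would translate the latter: since $v_j=s_{(n-h+j)\bmod n}$, the relations $v_i=v_{i+d'}$ for $0\le i\le c-2$ become $s_{(m+i)\bmod n}=s_{(m+i+d')\bmod n}$ with $m=n-h$, and the hypothesis $(n-(c+d))+d'\le h\le c+d'-1$ pins $m$ to the interval $[\,n-c-d'+1,\;(c+d)-d'\,]$. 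Splitting these relations according to whether $m+i$ or $m+i+d'$ wraps past $n$ then isolates the block of genuine non-wrapping equalities $s_k=s_{k+d'}$.

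For part (i) in the case $d<n-c$ I expect this block alone to suffice: the non-wrapping equalities hold for $m\le k<n-d'$, and since $m\le (c+d)-d'=g$ for every admissible $h$ (the extreme $h=(n-(c+d))+d'$ giving $m=g$ and the smallest block $[g,n-d')$), the run covers $[g,n-d')$ directly, with no appeal to the spacing-$d$ structure. The case $d=n-c$ (so $c+d=n$, $d=\n$ is maximal and $d'\le d$) is the genuinely hard one: here $m$ can be as large as $n-d'$, for which the non-wrapping spacing-$d'$ block is empty, and the transported relations instead yield two separated clean blocks, $[0,c-1-h)$ and $[m,n-d')$, leaving a gap $[c-1-h,m)$ of length $d+1$. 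I would fill the relevant upper part of this gap, down to $g=2d-d'-1$, by chaining with the spacing-$d$ equalities: for $j$ in the gap one writes $s_j=s_{j-d}=s_{(j-d)+d'}=s_{j+d'}$, where the two outer equalities are instances of $s_i=s_{i+d}$ and the middle one falls inside the clean block $[0,c-1-h)$. The boundary positions (where a shift would leave the range $[0,c-2]$, together with the degenerate subcase $d=d'$) have to be closed separately using the wrap-around equalities that relate the top block $[n-d',n)$ to the bottom. This gap-filling, reconciling the two spacings under wrap-around, is the main obstacle of the whole proposition.

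For part (ii) I would first invoke part (i) to conclude that the maximal spacing-$d'$ run through $[g,n-d')$ has endpoints $r_1\le g$ and $r_2\ge n-d'$, consisting of the $r_2-r_1$ equalities $s_i=s_{i+d'}$, $r_1\le i<r_2$, bounded by the breaks at $r_1-1$ and $r_2$; this encodes a companion pair of identical length-$(r_2-r_1)$ subsequences, so $c'=r_2-r_1+1$. The key point is then that the front companion pair of $\textbf{v}_n$ must be \emph{the} window of $c-1$ consecutive equalities ending exactly at the break $r_2$: since $c'\ge c$ the run is long enough to host such a window, while a window ending at the other break $r_1-1$ would require equalities at positions below $r_1$, contradicting maximality. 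As $c\ge\n$, Lemma~\ref{lem unique} makes the companion pair of $\textbf{v}_n$ unique, so this alignment is forced; the starting index is $p=n-h=r_2+1-c$, and a direct check confirms $R^{c+(n-r_2-1)}(\textbf{s}_n)\in\mathcal{B}(n,c,d')$, whence $h=c+(n-r_2-1)$. Finally, reading $add(\textbf{v}_n)$ as the backward extension of the companion structure into the tail (Definition~\ref{def-t} and Remark~\ref{mark1.1}), the added terms are precisely the equality positions between $r_1$ and $p$, so $add(\textbf{v}_n)=p-r_1=(r_2-r_1+1)-c=c'-c$.
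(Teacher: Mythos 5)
Your part (i) for $d<n-c$ matches the paper's argument, and your part (ii) is essentially sound: you reach the same conclusions by aligning the unique companion window of $\textbf{v}_n$ (Lemma \ref{lem unique}) with the break at $r_2$, where the paper instead constructs $\textbf{u}_n=L^{r_1}(\textbf{s}_n)$ explicitly and pushes it through Lemma \ref{c-c+1} and Lemma \ref{lem_s_i^i+k+w} (iii); both routes work, though your ``contradicting maximality'' step for ruling out the break $r_1-1$ is loose (nothing a priori forbids equalities below $r_1-1$; what actually forces the alignment is that the admissible range of $h$ places the window so that it contains the position $g$ and hence lies inside $[r_1,r_2)$). The genuine gap is in part (i) for $d=n-c$. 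There the target range $[2d-d'-1,\,n-d')$ has length $c-d+1\in\{1,2\}$, and its top position $j=n-d'-1$ satisfies $(j-d)+d'=c-1$, so the final link $s_{(j-d)+d'}=s_{j+d'}$ of your chain is exactly the defining inequality $s_{c-1}\neq s_{c-1+d}$ of $\textbf{s}_n\in\mathcal{B}(n,c,d)$ and is never a valid spacing-$d$ equality; likewise the middle link requires $j-d<c-1-h$, i.e.\ $j<m-1$, which also fails at that position for every admissible $h$. So the chain misses precisely the positions the claim is about: what you defer as ``boundary positions'' is the entire content of this case. Concretely, take $n=12$, $\textbf{s}_n=(010100010101)\in\mathcal{B}(12,6,6)$, $\textbf{v}_n=R^{2}(\textbf{s}_n)=(010101000101)\in\mathcal{B}(12,6,2)$, so $d=6=n-c$, $d'=2$, $h=2$, $g=9$, and the claim is $s_9=s_{11}$ (true, both equal $1$). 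Your chain reads $s_9=s_3=s_5=s_{11}$, but $s_3=1\neq 0=s_5$ and $s_5=0\neq 1=s_{11}$: two links are false and the errors happen to cancel. The correct derivation needs a wrap-around relation ($s_{11}=s_1$ from $v_1=v_3$), which you mention as a possibility but never supply.

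For comparison, the paper does not chain at all in the case $d=n-c$: it splits on $h$ versus $(c+d')-d$. When $h>(c+d')-d$ one gets $n-h\leq 2d-d'-1$, so $\textbf{s}_{[2d-d'-1:n]}$ lies inside the first $c+d'-1$ symbols of $\textbf{v}_n$ and the equalities are read off directly, exactly as in the case $d<n-c$; when $h\leq (c+d')-d$ the paper swaps the roles of $\textbf{s}_n$ and $\textbf{v}_n$ via $h'=n-h$ and argues that the pair has already been handled from the other side (with separate degenerate subcases when $d'=n-c$). If you want a uniform direct proof you must actually carry out the wrap-around bookkeeping for the relations $s_{(m+i)\bmod n}=s_{(m+i+d')\bmod n}$; as written, your proof of (i) in the case $d=n-c$ is missing.
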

\begin{proof}	
{\rm (i)}
Since $n-(c+d)+d' \leq h < c+d'$,
$\s_{[(c+d)-d':n]} = (s_{(c+d)-d'},\dots,s_{n-1})$ is a subsequence of $\s_{[n-h:n]}$, and $\s_{[n-h:n]}$ is contained in the subsequence $\textbf{v}_{c+d'-1}$ of
$\textbf{v}_n = R^h(\sn)=\s_{[n-h:n]}\s_{n-h} $ in $\mathcal{B}(n, c, d')$. According to the definition of $\mathcal{B}(n, c, d')$ in (\ref{structure of finite}), the subsequence $\textbf{v}_{c+d'-1}$ satisfies
$v_i = v_{i+d'}$ for  $i=0, 1, \dots, c-2$. Thus it follows from $\s_{[(c+d)-d':n]}
\subseteq \s_{[n-h:n]}\subseteq \textbf{v}_{c+d'-1}   $
that
$
s_{i} = s_{i+d'}
$ for $i=(c+d)-d', \dots, n-d'-1$. Here it requires $d<n-c$ from $(c+d)-d'\leq n-d'-1$. That is, 
in the case of  $d<n-c$, the subsequence $\textbf{s}_{[(c+d)-d':n]}$ of $\textbf{s}_{n}$ satisfies
\begin{equation}\label{eq_d_small}
	s_i = s_{i+d'} \ \text{ for } \ (c+d)-d'\leq i < n-d'.
\end{equation}

In the case of $d= n-c$:
consider $\textbf{v}_n\in \mathcal{B}(n,c, d')$ and $R^{h'}(\textbf{v}_{n})=\textbf{s}_n$ with $h'=n-h$.
We divide the discussion into two subcases according to the value of $h'$.
For the subcase of $h' \geq [n-(c+d')]+d$, then $h\leq (c+d')-d$.
When $d'<n-c$,  it follows from \eqref{eq_d_small} that
$\textbf{v}_n \in \mathcal{B}(n,c, d')$ satisfies
$	v_i = v_{i+d}$ with $(c+d')-d\leq i < n-d$.
Thus the pair of shift equivalent sequences $\textbf{v}_n$ and $R^{h'}(\textbf{v}_{n})=\sn$ has been considered in \eqref{eq_d_small}.
That is, when $d'<n-c$ there is no need to consider $\sn$ and $R^{h}(\sn)=\textbf{v}_n$.
When $d'=n-c$, $(n-(c+d))+ d' \leq  h\leq (c+d')-d$ implies that $n-c \leq h \leq c$.	
If $c=\frac{n}{2}$ with even $n$, we have
$\sn=(\s_{d}\,(s_0,\dots,s_{d-2},\overline{s}_{d-1}))$  
and $h=\frac{n}{2}$.
If $(s_0,\dots,s_{d-2},\overline{s}_{d-1})$ is aperiodic, then $\sn$ and $R^{\frac{n}{2}}(\sn)$ are shift equivalent in $\mathcal{B}(n,\frac{n}{2})$ with the same number of added terms. In this time, $\sn$ can be kept as a candidate for the sequence representatives.
If $c=k+1$ with odd $n=2k+1$, we have 
$\sn=(\s_{k}^2 \,\overline{s}_{0})$
and $k \leq h\leq k+1$,
while $R^{k}(\sn)$ and $R^{k+1}(\sn)$ do not belong to $\mathcal{B}(2k+1,k)$.
That is to say, when $d= n-c$ the subcase of $h' \geq [n-(c+d')]+d$ does not need to be considered.

For the subcase of $h'< [n-(c+d')]+d$, then $h> (c+d')-d$, implying that $ (c+d')+1-d \leq h \leq c+d'-1$. 
Note that $n-((c+d')+1-d)=2d-d'-1$.
Thus $\s_{[2d-d'-1:n]} = (s_{2d-d'-1},\dots,s_{n-1})$ is a subsequence of $\s_{[n-h:n]}$, and $\s_{[n-h:n]}$ is contained in the subsequence $\textbf{v}_{c+d'-1}$ of
$\textbf{v}_n = R^h(\sn) $ in $\mathcal{B}(n, c, d')$.
According to the definition of $\mathcal{B}(n, c, d')$ in (\ref{structure of finite}), we have the subsequence $\s_{[2d-d'-1:n]}$ of $\textbf{s}_{n}$ satisfies
\begin{equation*}
	s_i = s_{i+d'} \ \text{ for } \ 2d-d'-1\leq i < n-d'.
\end{equation*}
The desired conclusion in \eqref{s-n-sat} thus follows.	

{\rm (ii)} 	
With the assumption that
$
s_i = s_{i+d'}$ holds for $r_1 \leq i <  r_2$, 
the sequence
$\textbf{u}_n^{\infty} =(u_0, u_1, \dots, u_{n-1})^{\infty}= L^{r_1}(\sni)$  satisfies
\begin{equation}\label{relast}
	u_i = u_{i+d'}\, \text{ for }\, 0\leq i < r_2-r_1 \, \text{ and } \, u_i \neq u_{i+d'} \, \text{ for } \, i= -1,  r_2-r_1.
\end{equation}
This implies that $\textbf{u}_{c'+d'}$ belongs to $\mathcal{B}(c'+d', c', d')$ by $c' = r_2-r_1+1$.
When $c'+d' \leq n$, from \eqref{relast} we have $\textbf{u}_{n} \in \mathcal{B}(n, c', d')$ with $add(\textbf{u}_n) = 0$. Thus, if $c' \geq c$, then by Definition \ref{def-t} and Lemma \ref{c-c+1}
we have the sequence $\textbf{v}_{n} = L^{c'-c}(\textbf{u}_{n})\in\mathcal{B}(n, c, d')$ with $add(\textbf{v}_{n}) = c'-c$.
When $c'+d' > n$, let $\textbf{u}'_n=\textbf{u}_{[c'+d'-n,c'+d']}=L^{c'+d'-n}(\textbf{u}_n)$. It follows from Lemma \ref{lem_s_i^i+k+w} (iii) and \eqref{relast} that $\textbf{u}'_n \in \mathcal{B}(n, n-d', d')$ with $add(\textbf{u}'_n) = c'+d'-n$.
Due to $c+d'\leq n$,
again by Definition \ref{def-t} and Lemma \ref{c-c+1} it implies that $L^{n-d'-c}(\textbf{u}'_{n})=L^{c'-c}(\textbf{u}_{n})=\textbf{v}_{n}$ belongs to $\mathcal{B}(n, c, d')$ with $add(\textbf{v}_{n})=(c'+d'-n)+(n-d'-c)=c'-c$.
Together the two cases,
we both have
$\textbf{v}_n = L^{c'-c}(\textbf{u}_n)  = L^{r_1+(c'-c)}(\textbf{s}_n)= R^{h}(\sn)$ with $h = n-(r_1+c'-c)= c + (n-r_2-1) $ and $add(\textbf{v}_{n})=c'-c$.		
\end{proof}

In order to generate the sequence representatives in the set $\mathcal{R}(n,\nup)$,
from Proposition \ref{arbitrary seq} (i), we only need to investigate sequences $\textbf{s}_{n}$ in $\mathcal{B}(n,\nup)$ satisfying (\ref{s-n-sat}) for any
$1\leq d'\leq \n$,
since only these sequences may have shift equivalent sequences in $\mathcal{B}(n,\nup)$.
Thus we can generate the set $\mathcal{R}(n,c)$ with $c=\nup$ as follows.
For each $1\leq d\leq n-c$ and each $\textbf{s}_{c+d}$,
consider $\sn$ satisfying \eqref{s-n-sat} for every aperiodic subsequence $(s_{c+d-d'},\dots,s_{c+d-1})$ with $1\leq d'\leq n-c$.
By Proposition \ref{arbitrary seq} (ii), suppose that $\sn$ satisfies
$c' \geq c$, then
we obtain directly $\textbf{v}_n=R^h(\sn) \in \mathcal{B}(n, c)$ and $add(\textbf{v}_n)= c'-c $. 
When $\sn$ and $R^{h}(\textbf{s}_{n})=\textbf{v}_n$ are considered as a pair of shift equivalent sequences in $E(\sn)$,
the one with less added terms will be deleted.
As we let all subsequences $\textbf{s}_{c+d}$ go through the above process, every pair of shift equivalent sequences
in $E(\sn)$ are considered. Thus all sequences $\textbf{a}_n$ with maximal $add(\textbf{a}_n)$ are kept.
This allows us to obtain sequence representatives more efficiently, which in turn constitute the set $\mathcal{R}(n,c)$. 
Below we give an example to illustrate the above analysis to delete sequences which are not representatives by Proposition \ref{arbitrary seq}.

\begin{algorithm}
	\footnotesize
	\caption{Generation of all periodic binary sequences in $ \mathcal{P}(n,\omega), \omega\geq  \frac{n}{2}$ }\label{alg:period-large}
	\begin{algorithmic}[1]
		\State\textbf{Main Algorithm}
		\State    INPUT: A positive integer $n$ 
		\State    OUTPUT: The set $\mathcal{P}(n,\omega)$ with $ \nup \leq \omega\leq n-1$.
		\State $\mathcal{R}(n, \nup)  \gets \mathtt{genR}(n, \nup)$
		\State  $\mathcal{P}(n, \omega)=\left\{(L^k(\textbf{s}_n))^{\infty} \,: \sn \in \mathcal{R}(n,\nup),\, add(\sn)=\omega-\nup, 0\leq k<n \right\}$
		\Function {$\mathtt{genR}$}{$n,\nup$}         \hfill// Generate $\mathcal{R}(n, \nup)$ 
		\State {$\mathcal{R}(n,c)\gets \emptyset$, $U\gets \emptyset$ and $c\gets \nup$}
		\For {$d=1$ to $n-c$}
		\While {$(v_{0},v_1,\dots,v_{d-1})\in\mathbb{Z}_2^{d}$ is aperiodic}
		\State {$s_{i}\gets v_{i\, \text{mod}\,\,d}, \  0\leq i\leq c+d-2$, $s_{c+d-1}\gets v_{(c+d-1)\,\text{mod}\,\,d\,}\oplus1$ and $V\gets \emptyset$ }
		\If {$d<n-c$}
		\For {$d'=1$ to $n-c$}
		\If {$(s_{c+d-d'},\dots,s_{c+d-1})$ is aperiodic}
		\State {$s_{i}\gets s_{i-d'}, \  c+d\leq i\leq n-1$ and $r_1\gets c+d-d'$}
		\If {$\sn= ({s}_{0},s_{1},\dots,s_{n-1}) \notin U$}
		\While{$s_{r_1-1}=s_{r_1+d'-1}$}
		\State {$r_1=r_1-1$}
		\EndWhile
		\State  Similarly obtain $r_2$ such that $s_i=s_{i+d'}$ holds for $ n-d'-1\leq i < r_2$
		\If{$\Delta=(r_2-r_1+1)-c\geq 0$}  \hfill//Exclude sequences from $\mathcal{R}(n,c)$
		\State{If $add(\sn)< \Delta $, then add $(s_{c+d},\dots,s_{n-1})$ to $V$}
		\State{Otherwise set $h=c+(n-r_2-1)$ and add $R^{h}(\textbf{s}_{n})$ to $U$}          
		\EndIf
		\EndIf
		\EndIf
		\EndFor
		\State Add $\textbf{u}_{n}\gets (\textbf{s}_{c+d},u_{c+d}, \dots, u_{n-1})$ with $(u_{c+d},\dots, u_{n-1})\in\mathbb{Z}_2^{n-c-d} \backslash V$ to $\mathcal{R}(n,c)$
		\ElsIf {$\sn=\textbf{s}_{c+d}= ({s}_{0},s_{1},\dots,s_{c+d-1}) \notin U$}
		\State {Add $\sn$ to $\mathcal{R}(n,c)$.}
		\For {$d'=1$ to $n-c$}
		\If{ $(s_{n-d'},\dots,s_{n-1})$ is aperiodic, $s_i=s_{i+d'}$ with $i \in [2d-1-d' , n-d')$}
		\State{Repeat Steps 16$-$23 while revising Step 21 as}
		\State{If $add(\sn)< \Delta$ then add $\sn$ to $U$.}
		\EndIf
		\EndFor
		\EndIf
		\EndWhile
		\EndFor
		\State \Return $\mathcal{R}(n,c)\gets \mathcal{R}(n,c) \backslash U$
		\EndFunction
	\end{algorithmic}
\end{algorithm}

\begin{table}[!ht]
\renewcommand{\arraystretch}{1.5}
\scriptsize
\caption{The shift equivalent sequences for $n=12$, $c=6$ and $d=2$ with $\textbf{s}_{d}=(01)$.}\label{tab_3}
\begin{center}
	\begin{tabular}{|c|c|c|c|c|c|c| p{1cm}|}
		\hline
		\multicolumn{8}{|c|}{$\textbf{s}_{c+d}=(01010100)$  \quad and \quad $\textbf{v}_{n}=R^{h}(\textbf{s}_{n}) \in \mathcal{B}(n,c,d')$ } \\
		\cline{1-8}
		$d'$&   $\textbf{s}_{n}$ & $r_1$ & $r_2$ & $add(\textbf{v}_{n})=c'-c$  & $h$ & $\textbf{v}_{n}$ & Delete  \\
		\hline
		\hline
		1 & $(\underline{0}10101\underline{\mathbf{0}}00000)$  & 6&	12 & 	1 & 5 &    $(000000101010)$&   $\textbf{s}_{n}$  \\
		\hline
		\multicolumn{1}{|c}{2}  & \multicolumn{7}{|c|}{--}  \\
		\hline
		3 & $(0101\underline{\mathbf{0}}10010\underline{0}1)$  & 4&	10 & 	1 & 7 &    $(100100101010 )$  &   $\textbf{v}_{n}$\\
		\hline
		4 & $(0101\underline{\mathbf{0}}100010\underline{0})$ & 4&	11 & 	2 & 6 &    $(000100010101 )$  &   $\textbf{s}_{n}$\\
		\hline
		\multirow{1}{*}{5} & \multirow{1}{*}{$(\underline{0}1\underline{\mathbf{0}}101001010)$} &	\multirow{1}{*}{2} & 	\multirow{1}{*}{12}& \multirow{1}{*}{5} & \multirow{1}{*}{5} &   \multirow{1}{*}{$(010100101010)$}  &   \multirow{1}{*}{$\textbf{s}_{n}$} \\
		\hline
		6 & $(01\underline{\mathbf{0}}1010\underline{0}0101)$ &2&	7 & 	0 & 10 &    $(010100010101 )$&   $\textbf{v}_{n}$  \\
		\hline
	\end{tabular}
\end{center}
\end{table}
\begin{example}
Take an example for $n=12$, $c=6$ and $d=2$.
Consider $\textbf{s}_{d}=\textbf{s}_{2}=(01)$
and $\textbf{s}_{c+d}=(\mathbf{01}010100)$. 
Run through $d'$ with $1\leq d'\leq 6$.
For $d'=1$, by \eqref{s-n-sat} only the subsequence $\textbf{s}_{[8:12]}=(0000)$ needs to be considered, thus $\textbf{s}_{12}=(\underline{0}10101\underline{\mathbf{0}}00000)$.
By Proposition \ref{arbitrary seq} (ii), we can see that $r_1=6$ (the corresponding terms given in bold and underlined), and
$r_2=12$ (the corresponding terms are underlined),
implying $c'=r_2-r_1+1=7$.
Since $c'>c$, we get $h=n-(r_2+1-c)=5$. So $\textbf{v}_{12}=R^{h}(\textbf{s}_{12})=R^{5}(\textbf{s}_{12})\in \mathcal{B}(12, 6,1)$ with $add(\textbf{v}_{12})=c'-c=1$.
The pair of shift equivalent sequences $\textbf{s}_{12}$ and $\textbf{v}_{12}$ is found, and $\textbf{s}_{12}$ should be deleted since $add(\textbf{s}_{12})=0<add(\textbf{v}_{12})=1$.
Since the last two terms of $\textbf{s}_{c+d}=(01010100)$ is $(00)$, it is impossible for $d'=2$ by the definition of aperiodic sequences.
The sequences which are not representatives can be deleted similarly for $d'=3,4,5,6$, which are given in Table \ref{tab_3}.
\end{example}

\begin{remark}
In Algorithm \ref{alg:period-large}, from Proposition \ref{arbitrary seq} we give the detailed steps to generate the set $\mathcal{R}(n, c)$ for any $c= \nup$.
For $\nup \leq \omega \leq n-1$, Algorithm \ref{alg:period-large} generates all periodic sequences in $\mathcal{P}(n,\omega)$ based on Theorem \ref{thm_core} (ii).
From the steps in generating $\mathcal{R}(n, c)$, we see that the loops on $(v_0, \dots, v_{d-1})$ and $(u_{c+d}, \dots, u_{n-1})$ contribute  the dominating time and memory complexity, roughly, $O(2^{n/2})$.
This indicates that  Algorithm \ref{alg:period-large} has an advantage of factor $2^{n/2} n \log_2(n)$ compared to the exhaustive search for sequences $\sni$ with nonlinear complexity $\omega$.

\end{remark}

\section{Conclusion}\label{Sec6}
Our contributions in this paper are twofold:
the first contribution is to investigate the varying behavior of the nonlinear complexity of finite-length sequences under circular shift operators, and
the second contribution is the establishment of a one-to-one correspondence between the set of periodic sequences with certain nonlinear complexities and
the set of certain finite-length sequences with a particular structure.
As an application of the correspondence, we present two efficient algorithms to generate all periodic sequences with
any prescribed nonlinear complexity.

\section*{Acknowledgments}
The authors would like to thank the editor and the
anonymous referees for their detailed reading and valuable
comments that greatly improved the presentation and quality
of the article.


\begin{thebibliography}{10}

\bibitem{Golomb2017}
S. W.~Golomb, {\em Shift Register Sequences: Secure and Limited-Access Code Generators, Efficiency Code Generators, Prescribed Property Generators, Mathematical Models}, 3rd ed. Singapore: World Scientific, 2017.


\bibitem{MartinLoef1966}
P.~Martin-L$\ddot{\mathrm{o}}$f, ``The definition of random sequences,'' {\em
	Inf. Control.}, vol.~9, no.~6, pp.~602--619, Dec. 1966.


\bibitem{Wang1988}
{M.~Wang}, ``Cryptographic aspects of sequence complexity measures,''
\newblock Ph.D. dissertation, Swiss Federal Institute of Technology, Zurich, 1988.	

\bibitem{Beth}
T.~Beth and
Z.-D.~Dai, ``On the complexity of pseudo-random sequences - or: if you can describe a sequence it can't be random,'' in
{\em Advances in Cryptology\,-\,EUROCRYPT '89} (Lecture Notes in Computer Science), vol.~434, Berlin, Germany: Springer-Verlag, 1990, pp.~533--543.


\bibitem{Niederreiter1999}
H.~Niederreiter, ``Some computable complexity measures for binary sequences,''
in {\em Sequences and their Applications '98}, London, U.K.: Springer-Verlag, 1999, pp.~67--78.


\bibitem{MasseyShirlei1996_Crypto}
J. L.~Massey and S.~Serconek, ``Linear complexity of periodic sequences: a
general theory,'' in {\em Advances in Cryptology\,-\,CRYPTO '96} (Lecture Notes in Computer Science), vol.~1109, Berlin, Germany: Springer-Verlag, 1996, pp.~358--371.



\bibitem{Niederreiter2003}
H.~Niederreiter, ``Linear complexity and related complexity measures for
sequences,'' in {\em Progress in Cryptology\,-\,INDOCRYPT 2003} (Lecture Notes
in Computer Science), vol.~2904, Berlin, Germany: Springer-Verlag, 2003, pp.~1--17.


\bibitem{Massey1969}
J. L.~Massey, ``Shift-register synthesis and {BCH} decoding,'' {\em {IEEE} Trans. Inf. Theory}, vol.~15, no.~1, pp.~122--127, Jan. 1969.


\bibitem{Rueppel1986}
R. A.~Rueppel, {\em Analysis and Design of Stream Ciphers}.
Berlin, Germany: Springer-Verlag, 1986.


\bibitem{Meidl2002}
W.~Meidl and H.~Niederreiter, ``On the expected value of the linear complexity
and the $k$-error linear complexity of periodic sequences,'' {\em {IEEE} Trans. Inf. Theory}, vol.~48, no.~11, pp.~2817--2825, Nov. 2002.

\bibitem{Helleseth}
T.~Helleseth, S.-H.~Kim, and J.-S.~No, ``Linear complexity over
$\mathbb{F}_p$ and
trace representation of {Lempel-Cohn-Eastman} sequences,'' {\em {IEEE} Trans. Inf. Theory}, vol.~49, no.~6,
pp.~1548--1552, Jun. 2003.


\bibitem{Ding1998}
C.~Ding, T.~Helleseth, and W.~Shan, ``On the linear complexity of Legendre
sequences,'' {\em {IEEE} Trans. Inf. Theory}, vol.~44, no.~3,
pp.~1276--1278, May 1998.

\bibitem{Rizo}
P.~Rizomiliotis, N.~Kolokotronis, and N.~Kalouptsidis, ``On the quadratic span
of binary sequences,'' {\em {IEEE} Trans. Inf. Theory}, vol.~51, no.~5,
pp.~1840--1848, May 2005.




\bibitem{Jansen}
C. J. A.~Jansen and D. E.~Boekee, ``The shortest feedback shift register that
can generate a given sequence,'' in {\em Advances in Cryptology\,-\,{CRYPTO}'89} (Lecture Notes
in Computer Science), vol. 435, Berlin, Germany: Springer-Verlag, 1990, pp.~90--99.


\bibitem{LWin}
L.~I{\c{s}}{\i}k and A.~Winterhof, ``Maximum-order complexity and correlation
measures,'' {\em Cryptography}, vol.~1, no.~1, pp.~1--7, May 2017.

\bibitem{Chenzhix}
Z.~Chen, A.~I. G{\'{o}}mez, D.~G{\'{o}}mez-P{\'{e}}rez, and A.~Tirkel,
``Correlation measure, linear complexity and maximum order complexity for
families of binary sequences,'' {\em Finite Fields Appl.},
vol.~78, no.~101977,
Feb. 2022.


\bibitem{JansenPhD}
C. J. A.~Jansen, ``Investigations on nonlinear streamcipher systems:
construction and evaluation methods,''
\newblock Ph.D. dissertation,  Technical
University of Delft, Delft, The Netherlands, 1989.



\bibitem{Rizomiliotis2005}
P.~Rizomiliotis and N.~Kalouptsidis, ``Results on the nonlinear span of binary
sequences,'' {\em {IEEE} Trans. Inf. Theory}, vol.~51, no.~4,
pp.~1555--1563, Apr. 2005.

\bibitem{LKK2}
K.~Limniotis, N.~Kolokotronis, and N.~Kalouptsidis, ``On the nonlinear
complexity and {Lempel-Ziv} complexity of finite length sequences,'' {\em
	{IEEE} Trans. Inf. Theory}, vol.~53, no.~11, pp.~4293--4302, Nov.
2007.


\bibitem{HX}
H.~Niederreiter and C.~Xing, ``Sequences with high nonlinear complexity,'' {\em
	{IEEE} Trans. Inf. Theory}, vol.~60, no.~10, pp.~6696--6701, Oct.
2014.

\bibitem{LXY}
Y.~Luo, C.~Xing, and L.~You, ``Construction of sequences with high nonlinear
complexity from function fields,'' {\em {IEEE} Trans. Inf. Theory}, vol.~63, no.~12, pp.~7646--7650, Dec. 2017.

\bibitem{Castellanos2022}
A.~S. Castellanos, L.~Quoos, and G.~Tizziotti, ``Construction of sequences with
high nonlinear complexity from a generalization of the hermitian function
field,'' {\em 	J. Algebra Appl.}, vol.~23, no.~2, 2450037, Feb. 2024.




\bibitem{PZS}
J.~Peng, X.~Zeng, and Z.~Sun, ``Finite length sequences with large nonlinear
complexity,'' {\em Adv. Math. Commun.}, vol.~12,
no.~1, pp.~215--230, Feb. 2018.

\bibitem{YILIN}
L.~Yi, X.~Zeng, and Z.~Sun, ``On finite length nonbinary sequences with large
nonlinear complexity over the residue ring $\mathbb{Z}_m$,'' {\em Adv. Math. Commun.}, vol.~15, no.~4, pp.~701--720, Nov. 2021.


\bibitem{liang}
S.~Liang, X.~Zeng, Z.~Xiao, and Z.~Sun, ``Binary sequences with length $n$ and
nonlinear complexity not less than $n/2$,'' {\em {IEEE} Trans. Inf. Theory},
vol.~69, no.~12, pp.~8116--8125, Dec. 2023.

\bibitem{PR}
P.~Rizomiliotis, ``Constructing periodic binary sequences with maximum
nonlinear span,'' {\em {IEEE} Trans. Inf. Theory}, vol.~52, no.~9,
pp.~4257--4261, Sep. 2006.

\bibitem{SZLH}
Z.~Sun, X.~Zeng, C.~Li, and T.~Helleseth, ``Investigations on periodic
sequences with maximum nonlinear complexity,'' {\em {IEEE} Trans. Inf. Theory}, vol.~63, no.~10, pp.~6188--6198, Oct. 2017.



\bibitem{Xiao2018}
Z.~Xiao, X.~Zeng, C.~Li, and Y.~Jiang, ``Binary sequences with period $N$ and
nonlinear complexity $N-2$,'' {\em Cryptogr. Commun.},
vol.~11, no.~4, pp.~735--757, Jul. 2019.



\bibitem{deBruijn}
N. G.~de Bruijn, ``A combinatorial problem,'' {\em Proceedings of the Section of Sciences of the Koninklijke Nederlandse Akademie van Wetenschappen te Amsterdam}, vol.~49, no.~7,
pp.~758--764, Jun. 1946.


\bibitem{Golomb_1980_}
S.~Golomb, ``On the classification of balanced binary sequences of
period $2^n-1$,''
{\em {IEEE} Trans. Inf. Theory}, vol.~26, no.~6, pp.~730--732, Nov.
1980.

\bibitem{Kim_2023_}
G.~Kim and H.-Y. Song, ``Statistical span property of binary run sequences,''
{\em {IEEE} Trans. Inf. Theory}, vol.~69, no.~4, pp.~2713--2721, Apr.
2023.


\bibitem{Helleseth-Li}
T. Helleseth, and C. Li, ``Pseudo-noise sequences,''
in Concise Encyclopedia of Coding Theory, New York: Chapman and Hall/CRC Press, 2021, vol. 25, pp.~613--644.



\bibitem{Fredricksen_1982_}
H.~Fredricksen, ``A survey of full length nonlinear shift register cycle
algorithms,'' {\em {SIAM} Review}, vol.~24, no.~2, pp.~195--221, Apr. 1982.

\end{thebibliography}
\end{document}